\documentclass{LMCS}

\def\doi{8(3:26)2012}
\lmcsheading%
{\doi}
{1--32}
{}
{}
{Nov.~15, 2011}
{Sep.~29, 2012}
{}

\usepackage{amsthm,amssymb,amsmath}
\usepackage{ifthen}
\usepackage{graphicx}
\usepackage{hyperref}
\usepackage{xspace}
\usepackage{sfmath}
\usepackage{enumerate}

\newtheorem{df}[thm]{Definition}
\newtheorem{lemma}[thm]{Lemma}
\newtheorem{claim}[thm]{Claim}


\newenvironment{example}{\noindent{ Example:}}{}

\begin{document}

\newcommand{\lexi}{<_{\mathit{lex}}}
\newcommand{\koniec}{$\Box$}
\newcommand{\set}[1]{\{#1\}}
\newcommand{\Nat}{\mathbb N}

\newcommand{\bigpiece}{\mathcal T}
\newcommand{\miff}{\mbox{iff}}
\newcommand{\pen}{n}
\newcommand{\hole}{\Box}
\newcommand{\ltrivial}{\mathcal L}
\newcommand{\rtrivial}{\mathcal R}

\newcommand{\rlogic}{\mathcal R}
\newcommand{\llogic}{\mathcal L}

\newcommand{\tle}{\mathsf E}
\newcommand{\tlu}{\mathsf U}
\newcommand{\tlex}{\mathsf{EX}}
\newcommand{\tlspider}[1]{\mathsf{E}^{#1}\mathsf{X}}
\newcommand{\tlef}{\mathsf{EF}^{\downarrow}}
\newcommand{\tlleaf}{\mathsf{L}^{\downarrow}}
\newcommand{\tlfup}{\mathsf{F}^{\uparrow}}
\newcommand{\tlfh}{\mathsf{F}^{\rightarrow}}
\newcommand{\tlfhb}{\mathsf{F}^{\leftarrow}}
\newcommand{\tlag}{\mathsf{AG}}
\newcommand{\freef}[1]{#1^\Delta}
\newcommand{\tlsib}{\mathsf S}

\newcommand{\A}{\ensuremath{\mathbb{A}}\xspace}

\newcommand{\orderh}{<_h}
\newcommand{\orderfo}{<_{\text{dfs}}}

\newcommand{\mso}{\textup{MSO}\xspace}
\newcommand{\fo}{\textup{FO}\xspace}

\newcommand{\lgreen}{\le_\ltrivial}
\newcommand{\rgreen}{\le_\rtrivial}
\newcommand{\lgreeng}{\ge_\ltrivial}
\newcommand{\rgreeng}{\ge_\ltrivial}

\newcommand{\pieces}{\mathit{pieces}}
\newcommand{\subforest}{\leq}
\newcommand{\piece}{\preceq}
\newcommand{\pieceneq}{\prec}

\def\lca{closest common ancestor\xspace}
\def\lcas{closest common ancestors\xspace}
\def\lcapiece{cca-piece\xspace}
\def\lcapieces{cca-pieces\xspace}
\def\lcapiecewise{cca-piecewise\xspace}
\def\lcatame{cca-tame\xspace}
\newcommand{\wordpiece}{\sqsubseteq}

\title{Piecewise testable tree languages}

\author[M.~Boja{\'n}czyk]{Miko{\l}aj Boja{\'n}czyk\rsuper a}
\address{{\lsuper a}Warsaw University}
\email{bojan@mimuw.edu.pl}
\thanks{{\lsuper a}First author supported by Polish
    government grant no. N206 008 32/0810. This work was
    partially funded by the AutoMathA programme of the ESF and the PHC
    programme Polonium.}

\author[L.~Segoufin]{Luc Segoufin\rsuper b}
\address{{\lsuper b}INRIA and ENS-Cachan}
\email{luc.segoufin@inria.fr}

\author[H.~Straubing]{Howard Straubing\rsuper c}
\address{{\lsuper c}Boston College}
\email{Straubing@cs.bc.edu}
\thanks{{\lsuper c}Third author supported by National Science Foundation grant  CCF-0915065}

\keywords{First-order logic on trees,algebra}
\subjclass{F.4.3,F.4.1}

\begin{abstract}
  This paper presents a decidable characterization of tree languages
  that can be defined by a boolean combination of $\Sigma_1$
  sentences. This is a tree extension of the Simon theorem, which says
  that a string language can be defined by a boolean combination of
  $\Sigma_1$ sentences if and only if its syntactic monoid is
  ${\mathcal J}$-trivial.
\end{abstract}

\maketitle

\section{Introduction}

Logics for expressing properties of labeled trees and forests figure
importantly in several different areas of Computer Science.  
This paper is about logics on finite trees. All the logics we consider are
less expressive than monadic second-order logic, and thus can
 be captured by finite automata
on finite trees. Even with these restrictions, this encompasses  a large body
of important logics, such as variants of first-order logic, temporal logics
including CTL* or CTL, as well as query languages used in XML. 

One way of trying to understand a logic is to give an effective
characterization. An effective characterization for a logic $\mathcal L$ is an
algorithm which inputs a tree automaton, and says if the language recognized by
the automaton can be defined by a sentence of the logic $\mathcal L$.  Although
giving an effective characterization may seem an artificial criterion for
understanding a logic, it has proved to work very well, as witnessed by decades
of research, especially into logics for words.  In the case of words, effective
characterizations have been studied by applying ideas from algebra: A property
of words over a finite alphabet $A$ defines a set of words, that is a language
$L\subseteq A^*.$ As long as the logic in question is no more expressive than
monadic second-order logic, $L$ is a regular language, and definability in the
logic often boils down to verifying a property of the {\it syntactic monoid} of
$L$ (the transition monoid of the minimal automaton of $L$). This approach
dates back to the work of McNaughton and Papert~\cite{mcnaughton} on
first-order logic over $<$ (where $<$ denotes the usual linear ordering of
positions within a word).  A comprehensive survey, treating many extensions and
restrictions of first-order logic, is given by
Straubing~\cite{straubing}. Th\'erien and
Wilke~\cite{wilke,therienwilkefo2,untilcompute} similarly study temporal logics
over words.

An important early discovery in this vein, due to Simon~\cite{simonpiecewise},
treats word languages definable in first-order logic over $<$ with low
quantifier complexity. Recall that a $\Sigma_1$ sentence is one that uses only
existential quantifiers in prenex normal form, e.g.~$\exists x \exists y\ x
<y$. Simon proved that a word language is definable by a boolean combination of
$\Sigma_1$ sentences over $<$ if and only its syntactic monoid $M$ is {\it
  ${\mathcal J}$-trivial}.  This means that for all $m,m'\in M,$ if $MmM=Mm'M,$ then $m=m'.$ (In
other words, distinct elements generate distinct two-sided semigroup ideals.)
Thus one can effectively decide, given an automaton for $L,$ whether $L$ is
definable by such a sentence.  (Simon did not discuss logic {\it per se}, but
phrased his argument in terms of {\it piecewise testable languages} which are exactly those definable by boolean
combinations of $\Sigma_1$ sentences.)

There has been some recent success in extending these methods to trees and
forests. (We work here with unranked trees and forests, and not binary or
ranked ones, since we believe that the definitions and proofs are cleaner in
this setting.) The algebra is more complicated, because there are two
multiplicative structures associated with trees and forests, both horizontal
and a vertical concatenation. Benedikt and Segoufin~\cite{segoufinfo} use these
ideas to effectively characterize sets of trees definable by first-order logic
with the parent-child relation. Boja\'nczyk~\cite{fo2tree} gives a decidable
characterization of properties definable in a temporal logic with unary
ancestor and descendant operators. Similarly Boja\'nczyk and
Segoufin~\cite{BoSegDelta2} and Place and Segoufin~\cite{PlaSegfo2} provided
decidable characterizations of tree languages definable in $\Delta_2(<)$ and
$FO_2(<,<_h)$ where $<$ denotes the descendant-ancestor relationship while $<_h$
denotes the sibling relationship. The general theory of the `forest algebras'
that underlie these studies is presented by Boja\'nczyk and
Walukiewicz~\cite{forestalgebra}.

In the present paper we provide a further illustration of the utility of these algebraic methods by generalizing Simon's theorem from words to trees.  In fact, we give several such generalizations, differing in the kinds of atomic formulas we allow in our $\Sigma_1$ sentences.

In Section 2 we present our basic terminology concerning trees, forests, and logic.  Initially our logic contains two orderings:  the ancestor relation between nodes in a forest, and the depth-first, left-first, total ordering of the nodes of a forest.  In Section 3 we describe the algebraic apparatus.  This is the theory of forest algebras developed in ~\cite{forestalgebra}.

In Section 4 we give our main result, an effective test of whether a given language is piecewise testable (Theorem 4.) The test consists of verifying that the syntactic forest algebra satisfies a particular identity.  While we have to some extent drawn on
Simon's original argument, the added complexity of the tree setting makes both
formulating the correct condition and generalizing the proof quite nontrivial.
We give a quite different, equivalent identity in Proposition 18, which makes clear the precise relation between piecewise testability for forest languages and ${\mathcal J}$-triviality.

In Section 5, we study in detail a variant of our logic in which the binary ancestor relation is replaced by a ternary closest common ancestor relation, and prove a version of our main theorem for this case.  Section 6 is devoted to other variants:  the far simpler case of languages defined by $\Sigma_1$ sentences (instead of boolean combinations thereof); the logics in which only the ancestor relation is present, and in which the horizontal ordering on siblings is present; and, since our algebraic formalism concerns forests rather than trees, the modifications necessary to obtain an effective characterization of the piecewise testable tree languages.  We discuss some directions for further research in the concluding Section 7.

An earlier, much abbreviated version of this paper, without complete proofs, was presented at the 2008 IEEE Symposium on Logic in Computer Science.

\section{Notation}
\label{sec:notation}

\paragraph*{Trees, forests and contexts.}
In this paper we work with finite unranked ordered trees and forests over a
finite alphabet \A. Formally, these are expressions defined inductively as
follows: for any $a\in\A$, $a$ is a tree. If $t_1,\ldots,t_n$ is a finite
sequence of trees, then $t_1+\cdots+t_n$ is a forest. If $s$ is a forest and
$a\in \A$, then $as$ is a tree. It will also be convenient to have an
\emph{empty forest}, that we will denote by 0, and this forest is such that
$a0=a$ and $0+t=t+0=t$. Forests and trees alike will be denoted by the letters
$s,t,u,\ldots$

For example, the forest that we conventionally draw as 
\medskip
\begin{center}
  \includegraphics[scale=1]{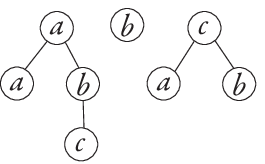}
\end{center}
corresponds to the expression
\begin{eqnarray*}
  t=a(a+bc)+b+c(a+b)\ .
\end{eqnarray*}
When there is no ambiguity we use $as$ instead of $a(s)$. In particular $bc$
stands for the tree whose root has label $b$ and has a unique child of label $c$.

The notions of node, child, parent, descendant and ancestor relations between
nodes are defined in the usual way. We write $x<y$ to say that $x$ is a strict
ancestor of $y$ or, equivalently, that $y$ is a strict descendant of $x$.  We
say that a sequence $y_1,\ldots,y_n$ of nodes forms a \emph{chain} if we have
$y_i < y_{i+1}$ for all $1\leq i<n$.  As our forests are ordered, each forest
induces a natural linear order on its set of nodes that we call the
\emph{forest-order} and denote by $\orderfo$, which corresponds to the
depth-first left-first traversal of the forest or, equivalently, to the order
provided by the expression denoting the forest seen as a word. We write
$\orderh$ for the \emph{horizontal-order}, i.e. $x \orderh y$ expresses the
fact that $x$ is a sibling of $y$ occurring strictly before $y$ in the
forest-order. Finally, the \emph{\lca} of two nodes $x,y$ is the unique node
$z$ that is a descendant of all nodes that are ancestors of both $x$ and $y$.

If we take a forest and replace one of the leaves by a special symbol $\hole$,
we obtain a \emph{context.} This special node is called the
\emph{hole} of the context. Contexts will be denoted using letters $p,q,r$. 
For example, from the forest $t$ given above, we can obtain, among others, the context
\begin{eqnarray*}
  p=a(a+bc)+b+c(\hole+b)\ .
\end{eqnarray*}

A forest $s$ can be substituted in place of the hole of a context $p$; the
resulting forest is denoted by $ps$. If we take the context $p$ above and if
$s=(b+ca),$ then  
\begin{eqnarray*}
    ps=a(a+bc)+b+c(b+ca+b)\ .
\end{eqnarray*}
This is depicted in the figure below.

\begin{center}
  \includegraphics[scale=1]{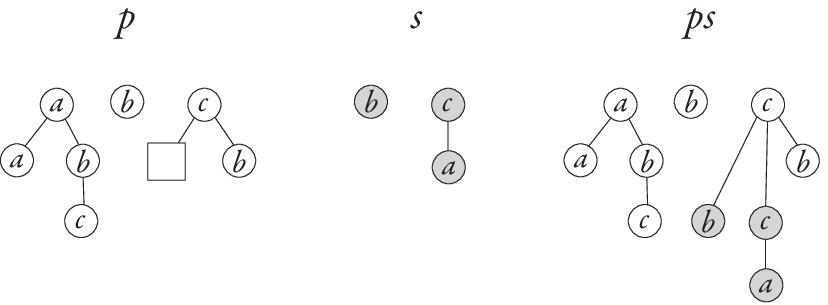}
 \end{center}

 There is a natural composition operation on contexts: the
context $qp$ is formed by replacing the hole of $q$ with $p$.
This operation is associative, and satisfies $(pq)s=p(qs)$ for all
forests $s$ and contexts $p$ and $q$.

We distinguish a special context, \emph{the empty context}, denoted $\hole$. It
satisfies $\hole s=s$ and $\hole p=p\hole=p$ for any forest $s$ and context $p$.

\paragraph*{Regular forest languages.}
A set $L$ of forests over \A is called a \emph{forest language.}  There are
several notions of automata for unranked ordered trees, see for
instance~\cite[chapter 8]{tata}. They all recognize the same class of
forest languages, called \emph{regular}, which also corresponds to definability in \mso as
defined below.

\paragraph*{Piecewise testable languages.}

We say that a forest $s$ \emph{is a piece} of a forest $t$ if there is an
injective mapping from nodes of $s$ to nodes of $t$ that preserves the
label of the node together with the forest-order and the ancestor relationship.  An equivalent definition is that
the piece relation is the reflexive transitive closure of the relation
\begin{eqnarray*}
\set{  (pt, pat) : \mbox{$p$ is a context, $a$ is a node, $t$ is a forest or empty}}
\end{eqnarray*}
In other words, a piece of $t$ is obtained by removing nodes from $t$ while
preserving the forest-order and the ancestor relationship.  We write $s \piece
t$ to say that $s$ is a piece of $t$.  In the example above, $a(a+b)+c$ is a
piece of $t$.

We extend the notion of piece to contexts. In this case, the hole must
be preserved while removing the nodes:

\begin{center}
  \includegraphics[scale=1]{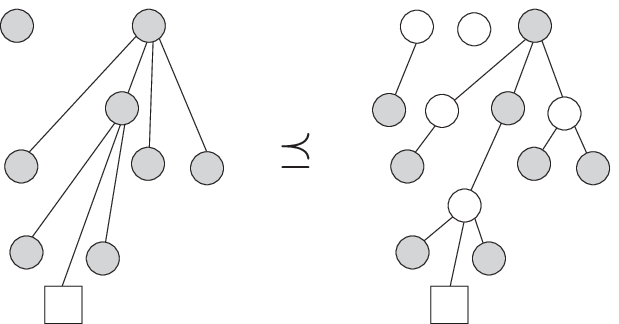}
 \end{center}
 The size of a piece is the size of the corresponding forest, i.e.~the number
 of its nodes.  The
 notions of piece for forests and contexts are related, of course. For
 instance, if $p$, $q$ are contexts with $p \piece q$, then $p0 \piece
 q0$. Also, conversely, if $s \piece t$, then there are contexts $p \piece q$
 with $s=p0$ and $t=q0$.

 A forest language $L$ over \A is called \emph{piecewise testable} if
 there exists $n\geq 0$ such that membership of $t$ in $L$ is
 determined by the set of pieces of $t$ of size $n$ or less.  Equivalently, $L$ is a finite boolean combination of languages
 $\{t:s\piece t\},$ where $s$ is a forest.  Every piecewise testable
 forest language is regular, since given $n\geq 0,$ a finite
 automaton can calculate on input $t$ the set of pieces of $t$ of
 size no more than~$n.$

\paragraph*{Logic.}
Regularity and piecewise testability correspond to definability in a logic,
which we now describe.  A forest can be seen as a logical relational structure.
The domain of the structure is the set of nodes. The signature contains a unary
predicate $P_a$ for each symbol $a$ of the label alphabet \A, plus possibly
some extra predicates on nodes, such as the descendant relationship, the
forest-order or the closest common ancestor.  Let $\Omega$ be a set of
predicates.  The predicates $\Omega$ that we use always include $(P_a)_{a \in
  \Sigma}$ and equality, hence we do not explicitly mention them in the
sequel. We use the classical syntax and semantics for first-order logic,
$\fo(\Omega)$, and monadic second order logic, $\mso(\Omega)$, building on the
predicates in $\Omega$. Given a sentence $\phi$ of any of these formalisms, the
set of forests that are a model for $\phi$ is called \emph{the language defined by
$\phi$}.
 In particular a language is definable in
$\mso(<,\orderh)$ iff it is regular~\cite[chapter 8]{tata}.

A $\Sigma_1(\Omega)$ formula is a formula $\exists x_1\cdots x_n~\gamma$, where
the formula $\gamma$ is quantifier-free and uses predicates from $\Omega$.
Initially we will consider two predicates on nodes: the ancestor order $x < y$
and the forest-order $x \orderfo y$.  Later on, we will see other combinations
of predicates, for instance when the closest common ancestor is added, and the
forest-order is removed.

It is not too hard to show that a forest language $L$ can
be defined by a $\Sigma_1(<,\orderfo)$ sentence if and only if it is closed
under adding nodes, i.e.
\begin{eqnarray*}
  pt \in L \qquad \Rightarrow \qquad pqt \in L 
\end{eqnarray*}
holds for all contexts $p$, $q$ and forests $t$. Moreover this condition can be
effectively decided given any reasonable representation of the language $L$. We
will carry out the details in Section~\ref{sec-sigma1}.

We are more interested here in the  boolean 
combinations of properties definable in $\Sigma_1(<,\orderfo)$. It is easy to
see that:

\begin{prop}\label{boolean_sigma_1}
 A forest language is piecewise testable 
  iff it is definable by a boolean combination of $\Sigma_1(<,\orderfo)$ sentences.
\end{prop}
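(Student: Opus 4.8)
The plan is to prove both implications of the equivalence in Proposition~\ref{boolean_sigma_1} directly from the definitions, using the characterisation of $\Sigma_1(<,\orderfo)$-definable languages already announced in the excerpt, namely that such languages are exactly those closed under adding nodes.

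\medskip

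First I would establish the easy direction: every piecewise testable language is a boolean combination of $\Sigma_1(<,\orderfo)$ sentences. By definition, a piecewise testable language is a finite boolean combination of languages of the form $L_s = \{t : s \piece t\}$ for a fixed forest $s$. So it suffices to show that each $L_s$ is $\Sigma_1(<,\orderfo)$-definable. But by the first (injective-mapping) definition of the piece relation, $s \piece t$ holds exactly when there is an injective, label-preserving, $\orderfo$-preserving and $<$-preserving map from the nodes of $s$ into the nodes of $t$. If $s$ has nodes $v_1 \orderfo \cdots \orderfo v_n$, this is expressed by the $\Sigma_1(<,\orderfo)$ sentence $\exists x_1 \cdots \exists x_n\ \gamma$, where $\gamma$ is the conjunction of: $P_{a_i}(x_i)$ (with $a_i$ the label of $v_i$); $x_i \orderfo x_j$ for every pair $i<j$; and $x_i < x_j$ for every pair with $v_i < v_j$ in $s$. (Note that requiring $\orderfo$ to be strict between the images already forces the $x_i$ to be pairwise distinct, so injectivity comes for free, and there is no need for explicit inequalities.) Hence $L_s$ is $\Sigma_1(<,\orderfo)$-definable, and a finite boolean combination of such languages is a boolean combination of $\Sigma_1(<,\orderfo)$ sentences.

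\medskip

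For the converse, I would argue that every boolean combination of $\Sigma_1(<,\orderfo)$ sentences is piecewise testable. Since piecewise testable languages are closed under boolean combinations (a boolean combination of boolean combinations is again one, and the ``$n$ or less'' bound can be taken as the maximum of the bounds involved), it is enough to treat a single $\Sigma_1(<,\orderfo)$ sentence $\phi = \exists x_1 \cdots \exists x_n\ \gamma$. The idea is that satisfaction of such a sentence in $t$ depends only on which small pieces embed into $t$. Concretely, put $\gamma$ into disjunctive normal form; each disjunct, after deleting inconsistent ones, describes (up to isomorphism) a labelled structure on at most $n$ elements together with an assignment of some pair $(<,\orderfo)$ relations — but since $\orderfo$ is a linear order and $<$ a forest order refining it in the appropriate way, a satisfiable consistent disjunct corresponds to a ``pattern'' which is realised in $t$ precisely when a certain forest $s_\gamma$ of size at most $n$ is a piece of $t$. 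Thus $\phi$ is equivalent to a finite disjunction of conditions ``$s \piece t$'' with $|s| \le n$, and membership in the language of $\phi$ is determined by the set of pieces of $t$ of size at most $n$. This is exactly piecewise testability.

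\medskip

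The main obstacle is the converse direction, and within it the step that turns a consistent disjunct of the DNF of $\gamma$ into a genuine forest-piece rather than an arbitrary relational pattern. One must check that any quantifier-free type over $(<,\orderfo)$ that is \emph{consistent} — i.e.\ not self-contradictory as a set of atomic and negated-atomic formulas — is actually realised by an embedding of some forest $s$: that is, that $<$ and $\orderfo$ on the variables can be completed to the ancestor and forest orders of a genuine forest on $n$ nodes. Here the interplay between the two orders matters: $\orderfo$ is total, and $<$ must be consistent with it (ancestors come first in $\orderfo$, and the descendants of a node form an $\orderfo$-interval), so one has to verify that consistency of the type already guarantees these structural constraints, or else show that the offending disjuncts are unsatisfiable over forests and may be discarded. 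Once this is handled — essentially a small finite case analysis about which patterns on $\le n$ nodes are forest-realisable — the reduction to a boolean combination of piece-conditions, and hence to piecewise testability, is routine. The closure-under-adding-nodes characterisation mentioned in the excerpt is not strictly needed for this argument, but it provides a useful sanity check: languages defined by a single $\Sigma_1$ sentence are upward-closed in the piece order, which is consistent with their being finite unions of sets $\{t : s \piece t\}$.
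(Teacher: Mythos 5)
Your proposal follows essentially the same route as the paper: the easy direction writes, for each piece $s$, the $\Sigma_1(<,\orderfo)$ sentence asserting an embedding of $s$, and the converse reduces a $\Sigma_1$ sentence to a finite union of conditions ``$s\piece t$'' by fixing the relative positions of the variables. The one slip is the claim that each consistent DNF disjunct corresponds to a single forest $s_\gamma$: a disjunct that leaves the relation between some pair of variables unspecified, or contains negated atoms, corresponds to a finite \emph{disjunction} of piece conditions, one for each forest-realisable completion of the atomic type -- which is exactly the ``disambiguating the relative positions'' step of the paper and is in effect what your final paragraph carries out, so the argument is sound.
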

One direction is immediate as for any forest $s$, the set of forests having $s$ as a
piece is easily definable in $\Sigma_1(<,\orderfo)$. For instance the sentence
\begin{eqnarray*}
  \exists x,y,z,u~~P_a(x) \land P_a(y) \land P_b(z) \land P_c(u) \land x<y
  \land x< z \land y \orderfo  z \land \lnot (x<u) \land x \orderfo u
\end{eqnarray*}
defines the language of forests having $a(a+b)+c$ as a piece.

For the other direction, notice that for any language definable in
$\Sigma_1(<,\orderfo)$, by disambiguating the relative positions between
each pair of variables, one can compute a finite set of pieces such that a forest belongs
to the language iff it has one of them as a piece.
For instance the sentence
\begin{eqnarray*}
  \exists x,y,z,u~~P_a(x) \land P_a(y) \land P_b(z) \land P_c(u) \land x<y
  \land x< z \land y \orderfo  z \land \lnot (x<u)
\end{eqnarray*}
defines the language of forests having $a(a+b)+c$, $c + a(a+b)$ or $ca(a+b)$ as a piece.

\medskip

This result does not address the question of effectively determining whether a given regular forest language admits either of these equivalent descriptions. Such an effective characterization is
the goal of this paper:

\paragraph*{The problem.}

Find an algorithm that decides whether or not a given regular forest language is
piecewise testable.

\medskip As noted in the introduction, the corresponding problem for words was
solved by Simon, who showed that a word language $L$ is piecewise
testable if and only if its syntactic monoid $M(L)$ is ${\mathcal
  J}$-trivial~\cite{simonpiecewise}; that is, if distinct elements $m,m'$ always generate distinct two-sided ideals.  Note that one can test, given the multiplication table of 
  a finite monoid $M,$ whether $M$ is $\mathcal
J$-trivial in time polynomial in $|M|$: for each $m \neq m' \in M$, one
calculates the ideals $MmM$ and $Mm'M$ and then verifies that they are
different.
Therefore, it is decidable if a given regular word language is
piecewise testable. We assume that the language $L$ is given by its
syntactic monoid and syntactic morphism, or by some other
representation, such as a finite automaton, from which these can be
effectively computed.

We will show that a similar characterization can be found for forests;
although the characterization will be more involved.  For decidability, it
is not important how the input language is represented. In this paper,
we will represent a forest language by a morphism into a finite forest algebra that
recognizes it.  Forest algebras are described in the next section.

\section{Forest algebras}\label{sec:forest-algebras}

\paragraph*{Forest algebras.} Forest algebras were introduced by Boja{\'n}czyk
and Walukiewicz as an algebraic formalism for studying regular tree
languages~\cite{forestalgebra}.  Here we give a brief summary of the definition of
these algebras and their important properties.  A forest algebra consists of a
pair $(H,V)$ of monoids, subject to some additional requirements, which we
describe below.  We write the operation in $V$ multiplicatively and the
operation in $H$ additively, although $H$ is not assumed to be commutative.  We
denote the identity of $V$ by $\hole$ and that of $H$ by 0.
 
 We require that $V$ act on the left of $H$.  That is, there is a map
 $$(h,v)\mapsto vh\in H$$
 such that 
 $$w(vh)=(wv)h$$
 for all $h\in H$ and $v,w\in V.$ We further require that this action be {\it
   monoidal,} that is,
 $$\hole\cdot h=h$$
 for all $h\in H,$ and that it be {\it faithful},
that is, if $vh=wh$ for all $h\in H,$ then $v=w.$
 
 We further require that for every $g\in H,$ $V$ contains elements $(\hole+g)$
 and $(g+\hole)$ such that
 $$(\hole+g)h=h+g, (g+\hole)h=g+h$$
 for all $h\in H.$  Observe, in particular, that for all $g,h\in H,$
 $$(g+\hole)(h+\hole)= (g+h)+\hole,$$
 so that the map $h\mapsto h+\hole$ is a morphism embedding $H$ as a submonoid of $V.$
 
 A morphism $\alpha:(H_1,V_1)\to (H_2,V_2)$ of forest algebras is actually a
 pair $(\gamma,\delta)$ of monoid morphisms $\gamma: H_1 \to H_2$, $\delta: V_1
 \to V_2$ such that $\gamma(vh)=\delta(v)\gamma(h)$ for all $h\in H,$ $v\in V.$
 However, we will abuse notation slightly and denote both component maps by
 $\alpha.$
 
 Let \A be a finite alphabet, and let us denote by $H_{\A}$ the set of forests
 over \A, and by $V_{\A}$ the set of contexts over \A. Clearly $H_{\A}$ forms a
 monoid under $+,$ $V_{\A}$ forms a monoid under composition of contexts (the
 identity element is the empty context $\hole$), and substitution of a forest
 into a context defines a left action of $V_{\A}$ on $H_{\A}.$ It is straightforward
 to verify that this action makes $(H_{\A},V_{\A})$ into a forest algebra, which we
 denote $\A^{\Delta}.$ If $(H,V)$ is a forest algebra, then every map $f$ from
 \A to $V$ has a unique extension to a forest algebra morphism
 $\alpha:\A^{\Delta}\to (H,V)$ such that $\alpha(a \hole)=f(a)$ for all $a\in
 \A.$ In view of this universal property, we call $\A^{\Delta}$ the {\it free
   forest algebra} on \A.
 
 We say that a forest algebra $(H,V)$ {\it recognizes} a forest language
 $L\subseteq H_{\A}$ if there is a morphism $\alpha:\A^{\Delta}\to (H,V)$ and a
 subset $X$ of $H$ such that $L=\alpha^{-1}(X).$ We also say that the morphism
 $\alpha$ recognizes $L.$ It is easy to show that a forest language is regular
 if and only if it is recognized by a \emph{finite} forest algebra.

 Given $L\subseteq H_{\A}$ we define an equivalence relation $\sim_L$ on
 $H_{\A}$ by setting $s\sim_L s'$ if and only if for every context $p\in
 V_{\A},$ $ps$ and $ps'$ are either both in $L$ or both outside of $L.$ We
 further define an equivalence relation on $V_{\A}$, also denoted $\sim_L,$ by
 $p\sim_L p'$ if for all $s\in H_{\A},$ $ps\sim_L p's.$ This pair of
 equivalence relations defines a congruence of forest algebras on
 $\A^{\Delta}.$ The quotient $(H_L,V_L)$ is called the {\it syntactic forest
   algebra} of $L.$ The projection morphism of $\A^{\Delta}$ onto $(H_L,V_L)$
 is denoted $\alpha_L$ and called the {\it syntactic morphism} of
 $L$. $\alpha_L$ always recognizes $L$ and it is easy to show that $L$ is
 regular iff $(H_L,V_L)$ is finite.

\paragraph*{Idempotents and aperiodicity.}
We recall the well known notions of idempotent and aperiodicity.  If $M$ is a
finite monoid and $m\in M,$ then there is a unique element $e=m^n,$ where
$n>0,$ such that $e$ is \emph{idempotent,} \emph{i.e.,} $e^2=e.$ If we take a
common multiple of these exponents $n$ over all $m\in M,$ we obtain an integer
$\omega>0$ such that $m^{\omega}$ is idempotent for every $m\in M.$ Observe
that while infinitely many different values of $\omega$ have this property with
respect to $M,$ the value of $m^{\omega}$ is uniquely determined for each $m\in
M.$

Let $(H,V)$ be a forest algebra.  Since we write the operation in $H$ additively, we denote powers of $h\in H$ by $n\cdot h,$ where $n\geq 0.$  As noted above, $H$ embeds in $V,$ so any $\omega>0$ that yields idempotents for $V$ serves as well for $H.$  That is, there is an integer $\omega>0$ such that $v^{\omega}$ is idempotent for all $v\in V,$ and $\omega\cdot h$ is idempotent for all $h\in H.$

We say that a finite monoid $M$ is {\it aperiodic} if it contains no nontrivial groups.  Since the set of elements of the form $m^{\omega}m^k$ for $k\geq 0$ is a group, aperiodicity is equivalent to having $m^{\omega} = m^{{\omega}+1}$ for all $m\in M.$  In this case we can take $\omega = |M|.$ All the finite monoids that we encounter in this paper are aperiodic.  In particular, every ${\mathcal J}$-trivial monoid is aperiodic, because all elements of a group in a finite monoid generate the same two-sided ideal.

\paragraph*{Pieces.}
Recall that in Section~\ref{sec:notation}, we defined the piece relation for
contexts in the free forest algebra. We now extend this definition to an
arbitrary forest algebra $(H,V)$. The general idea is that a context $v \in V$
is a piece of a context $w \in V$, denoted by $v \piece w$, if one can
construct a term (using elements of $H$ and $V$) which evaluates to $w$, and
then take out some parts of this term to get $v$.

Let $(H,V)$ be a forest algebra. We say $v \in V$ \emph{is a piece}
  of $w \in V$, denoted by $v \piece w$, if $\alpha(p)=v$ and
  $\alpha(q)=w$ hold for some morphism
  \begin{eqnarray*}
    \alpha : \freef \A \to (H,V)
  \end{eqnarray*}
  and some contexts $p \piece q$ over \A.  The relation $\piece$ is
  extended to $H$ by setting $g \piece h$ if $g=v0$ and $h=w0$ for
  some contexts $v \piece w$.

As we will see in the proof of Lemma~\ref{lemma:equivalent-def}, in
the above definition, we can replace the term ``some morphism'' by
``any surjective morphism''.  The following example shows that
although the piece relation is transitive in the free algebra $\freef
\A$, it may no longer be so in a finite forest algebra.

\medskip
\begin{example}
  Consider the syntactic algebra of the language $\set{abcd}$, which
  contains only one forest, which in turn has just one path, labeled
  by $abcd$. The context part of the syntactic algebra has twelve
  elements: an error element $\infty$, and one element for each infix
  of $abcd$.  We have
  \begin{eqnarray*}
    a \piece aa = \infty  = bd  \piece bcd
  \end{eqnarray*}
 but we do not
  have $a \piece bcd$.
\end{example}
\medskip

We will now show that in a finite forest algebra, one can compute the
relation $\piece$ in time polynomial in $|V|$. The idea is to use a different
but equivalent definition. Let $R$ be the smallest relation on $V$
that satisfies the following rules, for all $v,v',w,w' \in V$:
\begin{eqnarray*}
  \begin{array}{rcll}
      \hole & R &  v\\
  v & R & v\\
  vw & R & v'w' \qquad & \mbox{ if $v\ R\ v'$ and $w\ R\ w'$}\\
  \hole + v0 & R & \hole +v'0 & \mbox{ if $v\ R\ v'$}\\
   v0 + \hole & R & v'0 + \hole  & \mbox{ if $v\ R\ v'$}
  \end{array}
\end{eqnarray*}

\begin{lemma}\label{lemma:equivalent-def}
  Over any finite forest algebra the relations $R$ and $\piece$ are the same.
\end{lemma}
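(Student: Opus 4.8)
The plan is to prove $R\subseteq\piece$ and $\piece\subseteq R$ separately; the first inclusion is routine, and the argument for it also gives the remark from the text that, in the definition of $\piece$, ``some morphism'' may be replaced by ``any surjective morphism''.

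\emph{The inclusion $R\subseteq\piece$.} First I would record two derived facts about $R$. From rules~1 and~3, if $u\mathrel R w$ then $u\mathrel R vwv'$ for all $v,v'\in V$ --- compose the seed $\hole\mathrel R v$ on the left and $\hole\mathrel R v'$ on the right; call this \emph{padding}. Also, using the embedding $h\mapsto h+\hole$ of $H$ into $V$, rule~3 shows that the relation $R_H$ on $H$ defined by ``$g\mathrel{R_H}h$ when $(g+\hole)\mathrel R(h+\hole)$'' is reflexive, compatible with $+$, and satisfies $0\mathrel{R_H}h$ for all $h$. Next fix a surjective morphism $\alpha\colon\freef V\to(H,V)$; one exists, namely the extension of the identity map $V\to V$, which is onto the $H$-component because $(h+\hole)0=h$. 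Put $\piece_\alpha=\{(\alpha(p),\alpha(q)):p\piece q\text{ are contexts over }V\}$. It then suffices to check that $\piece_\alpha$ is closed under the five rules, and this reduces to standard closure properties of the combinatorial piece relation: $\hole\piece q$ for every context $q$; $p\piece q$ and $p'\piece q'$ imply $pp'\piece qq'$; and $p\piece q$ implies $\hole+p0\piece\hole+q0$ and $p0+\hole\piece q0+\hole$, using the granted implication $p\piece q\Rightarrow p0\piece q0$. Surjectivity of $\alpha$ is used only to realise the seeds $\hole\mathrel R v$ and $v\mathrel R v$. Since $R$ is the least relation closed under the rules, $R\subseteq\piece_\alpha\subseteq\piece$; together with the reverse inclusion this also yields $\piece=\piece_\alpha$ for every surjective $\alpha$.

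\emph{The inclusion $\piece\subseteq R$.} Let $v\piece w$ be witnessed by a morphism $\beta\colon\freef\A\to(H,V)$ and contexts $p\piece q$ over $\A$. As each rule defining $R$ refers only to $\hole$, composition, $+$ and $0$, the image under $\beta$ of a derivation of $p\mathrel R q$ in $\freef\A$ is a derivation of $v\mathrel R w$ in $(H,V)$; so it is enough to treat the case $(H,V)=\freef\A$, that is, to prove the combinatorial claim that $p\piece q$ implies $p\mathrel R q$ for contexts over $\A$. I would argue by induction on the number of nodes of $q$ (with the depth of the hole of $q$ as a secondary tie-breaker, to handle boundary cases in which the hole moves from the top level into a subtree), analysing the embedding $f\colon p\to q$. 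If $q=\hole$ then $p=\hole$ (rule~2). If $q$ has one top-level tree, $q=aq'$ with $q'$ a context forest, then either $p=\hole$ (rule~1), or the root of the unique top-level tree of $p$ is sent to the root of $q$, so that tree is $ap'$ with $p'\piece q'$ and we conclude by the induction hypothesis on $q'$ and rule~3, or all of $p$ embeds into $q'$, so $p\piece q'$, the induction hypothesis gives $p\mathrel R q'$, and padding gives $p\mathrel R aq'=q$. In the remaining case $q$ has at least two top-level trees; writing $q$ around the tree carrying the hole and applying rule~3, I would peel off the hole-free top-level siblings on the left and on the right --- handled by rules~4 and~5, the $R_H$ facts, and the induction hypothesis applied to contexts of the form $s+\hole$ --- reducing to the hole-carrying subtree, which is strictly smaller; the trees of $p$ lying left of its hole-tree embed into the part of $q$ left of the hole-path of that subtree, symmetrically on the right, and the resulting derivations are glued by rule~3.

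\emph{The main obstacle.} This last case is the crux, for two reasons. First, $R$ is not transitive (the example before the lemma witnesses this), so one cannot remove one node of $q$ at a time and chain: the induction must decompose $q$ \emph{globally}, along its entire hole-path, and assemble a single derivation at the end. Second, a piece-embedding may send a top-level --- hence fixed-depth --- sub-forest of $p$ into a region of $q$ spread over several depths, because collapsing a node of $q$'s hole-path merges the left-sibling forests (and likewise the right-sibling forests) of two consecutive path nodes. Hence the induction hypothesis must be applied not to a contiguous sub-forest of $q$ but to the concatenation of such merged forests, and the bulk of the work is in identifying these forests precisely, checking that the matching parts of $p$ are indeed pieces of them, and verifying that the derivations for the pieces recombine --- using only rules~3, 4, 5 and padding --- into a derivation of $p\mathrel R q$.
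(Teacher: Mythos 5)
Your first half is sound: showing that the pairs $\{(\alpha(p),\alpha(q)) : p\piece q\}$ for a fixed surjective $\alpha$ form a relation closed under the five rules is just a rephrasing of the paper's induction on derivations, and the reduction of the converse to the free algebra (morphisms preserve every operation occurring in the rules) is harmless. The genuine gap is in the converse itself: your induction on the number of nodes of $q$ never actually proves its central case. When $q$ has at least two top-level trees---and, more generally, whenever the embedding spreads a fixed-depth block of $p$ across several levels along the hole-path of $q$---you only describe a plan (peel off hole-free siblings, apply the induction hypothesis to certain ``merged'' forests, glue with rules 3--5 and padding) and then explicitly defer ``the bulk of the work'': identifying those forests, checking that the corresponding parts of $p$ are pieces of them, and recombining the sub-derivations into one derivation of $p\mathrel{R}q$. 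That recombination is precisely where the content of the lemma lies, and it is not routine: as your own ``main obstacle'' paragraph observes, an inserted node of $q$ may adopt as children subtrees coming from different blocks of $p$, so a factorization of $p$ does not transfer naively to one of $q$, and the bookkeeping along the entire hole-path has to be set up and verified. As submitted, the proof of $\piece\subseteq R$ is an outline whose decisive step is missing.

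The paper avoids this difficulty by running the induction on the \emph{small} context $p$ rather than on $q$, directly in the target algebra. The cases are dictated by the shape of $p$: if $p=\hole$, use rule 1; if $p=a\hole$, then $q=q_1aq_2$ and your padding trick finishes; if $p=p_1p_2$ with both factors nonempty, one chooses a factorization $q=q_1q_2$ with $p_1\piece q_1$ and $p_2\piece q_2$ and applies rule 3; if $p=\hole+s$ or $s+\hole$ with $s$ a tree, write $s=a(p'0)$, extract $q=q_0(a(q'0)+q_1)$ (or its mirror image) with $p'\piece q'$, and conclude with rules 3, 4, 5. Because the case analysis consumes $p$ one letter or one top-level tree at a time, the needed decomposition of $q$ is essentially handed to you by the definition of $\piece$, there is no global analysis of $q$'s hole-path, and the non-transitivity of $R$ never becomes an issue. (The merging phenomenon you identified does surface there too---in the composite case the factorization of $p$ must be chosen compatibly with the embedding---but taming it for a single split of $p$ is far easier than for the whole hole-path of $q$.) So either carry out your gluing argument in full, or restructure the induction around $p$ as the paper does.
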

In any finite algebra, the relation $R$ can be computed by applying
the rules until no new relations can be added. This gives the
following corollary:
\begin{cor}\label{cor:decidable}
  In any given finite forest algebra, the relation $\piece$ on contexts (also
  on forests) can be calculated in polynomial time.
\end{cor}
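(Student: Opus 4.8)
The plan is to invoke Lemma~\ref{lemma:equivalent-def}, which identifies $\piece$ with the inductively defined relation $R$, and then to observe that $R$, being the least fixed point of a monotone operator on subsets of the finite set $V \times V$, is produced by a straightforward saturation procedure. Concretely, I would initialize a set $R_0 \subseteq V \times V$ with all pairs $(\hole, v)$ and all pairs $(v,v)$ for $v \in V$ --- these come from the first two rules --- and then repeatedly close the current relation under the three remaining rules: add $(vw, v'w')$ whenever $(v,v')$ and $(w,w')$ are already present; add $(\hole + v0,\ \hole + v'0)$ whenever $(v,v')$ is present; and add $(v0 + \hole,\ v'0 + \hole)$ whenever $(v,v')$ is present. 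One stops at the first stage where nothing new is added; by Lemma~\ref{lemma:equivalent-def} this final set equals $\piece$ on contexts.

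For the running time, the key remark is that every rule application produces only elements of $V \times V$, so the relation never leaves this set of size $|V|^2$; since the chain $R_0 \subseteq R_1 \subseteq \cdots$ is strictly increasing until it stabilizes, it stabilizes after at most $|V|^2$ rounds. Within one round, the product rule costs $O(|V|^4)$ --- iterate over all pairs of pairs currently in the relation and look up $vw$ and $v'w'$ in the multiplication table of $V$ --- while each of the two remaining rules costs $O(|V|^2)$, using the given action to evaluate $v0$ and the embedding $h \mapsto h+\hole$ to form $\hole + v0$ and $v0 + \hole$. Hence $\piece$ on contexts is computed in time polynomial in $|V|$; a crude bound of $O(|V|^6)$ suffices.

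For forests, recall that $g \piece h$ means $g = v0$ and $h = w0$ for some contexts $v \piece w$. Having computed $\piece$ on $V$, one declares $g \piece h$ exactly when there is a pair $(v,w)$ in the computed relation with $v0 = g$ and $w0 = h$; since $H$ embeds in $V$ we have $|H| \le |V|$, so this is again a polynomial-time check, completing the corollary.

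I do not expect a real obstacle here: once Lemma~\ref{lemma:equivalent-def} is available, the content is just the standard fact that a relation over a finite carrier defined by a finite set of closure rules is computable by naive fixpoint iteration, with a polynomial bound on both the number of iterations and the cost of each. The only point that deserves a moment's care is checking that the ``product'' and ``$+$'' rules never take us outside $V$, so that the fixpoint genuinely lives in the polynomial-size set $V \times V$; this is immediate because $V$ is a monoid closed under composition and $H$ is embedded in $V$ via $h \mapsto h + \hole$.
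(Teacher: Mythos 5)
Your proposal is correct and follows essentially the same route as the paper: invoke Lemma~\ref{lemma:equivalent-def} and compute $R$ by saturating under the defining rules until no new pairs are added, which is exactly the paper's (very brief) justification, with your version merely spelling out the polynomial bounds on the number of rounds and the cost per round, and the reduction of the forest case via $g=v0$, $h=w0$.
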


\medskip
\begin{proof}[Proof of Lemma~\ref{lemma:equivalent-def}]
  We first show the inclusion of $R$ in $\piece$. Let $\alpha : \freef \A \to
  (H,V)$ be any surjective morphism.  A simple induction on the number of steps
  used to derive $v \ R \ w$, produces contexts $p \piece q$ with $\alpha(p)=v$
  and $\alpha(q)=w$. The surjectivity of $\alpha$ is necessary for starting the
  induction in the case $\hole\ R\ v$. 

  For the opposite inclusion, suppose $v\piece w.$ Then there is a morphism
  $\alpha : \freef \A \to
  (H,V)$ and contexts $p\piece q$ such that $v=\alpha(p),$ $w=\alpha(q).$
  We will show that $\alpha(p)\ R \
  \alpha(q)$  by
  induction on the size of $p$:
  \begin{iteMize}{$\bullet$} 
  \item If $p$ is the empty context, then the result follows thanks to
    the first rule in the definition of $R$.  If $p=a\hole$ then from $p
    \piece q$ it follows that $q=q_1aq_2$ for some contexts $q_1,q_2$ and using
    the first three rules in the definition of $R$ we get that
    $\hole\cdot \alpha(a\hole)\cdot \hole\ R\ \alpha(q_1)\cdot
    \alpha(a\hole)\cdot \alpha(q_2)$ and hence $p\ R\ q$.
  \item If there is a decomposition $p=p_1 p_2$ where $p_1$ and $p_2$ are not
    empty contexts, then from $p\piece q$ there must be a
    decomposition $q=q_1 q_2$ with $p_1 \piece q_1$ and $p_2 \piece
    q_2$.
    By induction we get that $\alpha(p_1)\ R\ \alpha(q_1)$ and $\alpha(p_2)\
    R\ \alpha(q_2)$. Then $\alpha(p)\ R\ \alpha(q)$ follows by using the
    third rule in the definition of $R$.
  \item Suppose now $p=s+\hole$ or $p=\hole+s$. We can assume that $s$ is a tree, since
    otherwise the context $p$ can be decomposed as
    $(s_1+\hole)(s_2+\hole)$. Since $s$ is a tree, it can be decomposed as
    $a(p'0)$, with $a$ being a context with a single letter and the hole below
    and $p'$ a context smaller than $p$. By inspecting the definition of
    $\piece$, there must be some decomposition $q=q_0(a(q'0) + q_1)$ or
    $q=q_0(q_1 + a(q'0))$, with $p' \piece q'$. By the induction assumption,
    $\alpha(p')\ R \ \alpha(q')$.  From this the result follows by applying
    rules three, four and five in the definition of $R$.
  \end{iteMize}
  
  This argument shows that if $v\piece w$ with respect to a particular morphism $\alpha,$ then $v\ R\ w$ and consequently $v\piece w$ with respect to  every morphism.  Thus we have also established the claim made above that the $\piece$ relation on $H$ is independent of the underlying morphism. 
\end{proof}

\section{Piecewise Testable Languages}\label{section-descendant}
The main result in this paper is a characterization of
piecewise testable languages: 

\begin{thm}\label{thm:main} 
  A forest language is piecewise testable if and only if its syntactic algebra
  satisfies the identity
\begin{equation}
  \label{eq:context-absorb}
  u^\omega v = u^\omega = v u^\omega 
\end{equation}
for all $u,v\in V_L$ such that $v\piece u.$
\end{thm}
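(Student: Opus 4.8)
The plan is to prove the two implications separately, treating the easy "necessity" direction first and then devoting the bulk of the work to "sufficiency."

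For necessity, suppose $L$ is piecewise testable, so membership is determined by pieces of size at most $n$ for some fixed $n$. I want to show $u^\omega v = u^\omega = vu^\omega$ whenever $v\piece u$ in $V_L$. The key observation is that if $v\piece u$ then for the context powers $u^{\omega}$ and $u^{\omega}v$ (and similarly $vu^{\omega}$), any piece of bounded size that embeds into a forest of the form $p\,(u^{\omega}v)\,t$ — realized over the free algebra by a concrete context — already embeds into the corresponding forest $p\,u^{\omega}\,t$, because $u^{\omega}$ contains "enough copies" of $u$ to absorb the extra occurrence of $v$: the piece $v\piece u$ can be matched inside one of the $\omega$ (really, $\geq n$) blocks of $u$. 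Making this precise requires lifting $u,v$ to concrete contexts $\hat u\piece$-related to $\hat v$ over $\freef\A$ via a surjective syntactic morphism (using the flexibility granted by Lemma~\ref{lemma:equivalent-def}), choosing $\omega$ large compared to $n$, and a pigeonhole/Simon-style argument on where the at-most-$n$ nodes of a witnessing piece fall among the blocks. This gives $\alpha_L^{-1}$-indistinguishability of $u^{\omega}v$ from $u^{\omega}$ in every context, hence equality in $V_L$; the $vu^{\omega}$ case is symmetric.

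For sufficiency — which I expect to be the main obstacle — assume the syntactic algebra $(H_L,V_L)$ satisfies \eqref{eq:context-absorb} for all $v\piece u$. I want to show that $\sim_L$-equivalence, restricted to forests, is refined by "having the same pieces of size $\leq n$" for a suitable $n$ depending only on $|H_L|+|V_L|$; equivalently, that two forests with the same bounded-size pieces are $\sim_L$-equivalent. The strategy is an induction, along the lines of Simon's proof but adapted to the two-sorted forest setting: one shows by induction on the structure of the algebra (or on $|V_L|$, peeling off ideals) that if $s$ and $t$ have the same pieces up to size $n$ then $\alpha_L(s)=\alpha_L(t)$. The identity \eqref{eq:context-absorb} is exactly what lets one "saturate" contexts: once a context $u$ appears idempotently, any $v$ that is a piece of $u$ can be inserted or deleted without changing the value, which is the algebraic content of "piecewise testable." One decomposes a forest into a context applied to a subforest, uses the piece-closure hypotheses to normalize the context part into an idempotent power $u^{\omega}$ of something, invokes \eqref{eq:context-absorb} to absorb or create pieces, and recurses on the strictly smaller pieces; the horizontal ($+$) structure is handled via the embedded copy $h\mapsto h+\hole$ of $H_L$ in $V_L$, so the same absorption identity applies to forest concatenation.

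The hard part will be organizing this induction so that it actually terminates and so that the "same pieces up to size $n$" hypothesis is preserved (or appropriately weakened) at each recursive call — in the word case Simon uses a clean induction on the size of the generating set, but in forests one must simultaneously control both the vertical depth (iterating contexts) and the horizontal width (iterating $+$), and ensure that the bound $n$ chosen at the top is large enough to survive all the decompositions. I would isolate, as the central lemma, a statement of the form: \emph{if $v\piece u$ in $V_L$ and $u^{\omega}$ occurs as a factor of $w\in V_L$, then $w=w'$ where $w'$ is obtained by inserting or deleting $v$, provided the relevant pieces agree} — and then prove Theorem~\ref{thm:main} by feeding forests through this lemma until their values stabilize. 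Proposition~18, promised in the introduction, should provide the cleaner equivalent reformulation of \eqref{eq:context-absorb} in terms of $\mathcal J$-triviality that makes this bookkeeping manageable, so I would expect the actual proof to route through that reformulation rather than working with \eqref{eq:context-absorb} directly.
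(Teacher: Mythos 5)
Your necessity argument is essentially the paper's: since you cannot literally ``choose $\omega$ large compared to $n$'' (the idempotent exponent is dictated by the algebra), the correct move is to use $u^\omega=u^{i\omega}$ to pump the number of $q$-blocks and then compare pieces, which is what the paper does via the sandwich $rq^{ik}t\piece rq^{ik}pt\piece rq^{(i+1)k}t$ and stabilization of the set of small pieces; your pigeonhole variant, once phrased this way, goes through and is only a cosmetic difference.

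The sufficiency direction, however, has a genuine gap: your plan defers exactly the step where all the difficulty of the theorem lies. You say one should ``decompose a forest into a context applied to a subforest, normalize the context part into an idempotent power $u^{\omega}$ of something, invoke \eqref{eq:context-absorb}, and recurse,'' but given only that two forests have the same pieces of size at most $n$, nothing in the proposal explains why such a decomposition with an idempotent power, together with the needed relation $v\piece u$, is available, nor how the hypothesis ``same pieces of size $\le n$'' is passed to the sub-objects so that the recursion terminates. The paper's proof supplies precisely this missing mechanism: first a reduction (Lemma~\ref{lemma:eilenberg}, via a pumping fact) showing that $\sim_k$-equivalent forests share a common piece in the same $\sim_n$-class, so that it suffices to prove that removing a single node without changing the $\sim_n$-class preserves the type (Proposition~\ref{lemma:remover}); then, for that proposition, the notions of $vah$-decomposition and fractal, a pumping argument producing long fractals (Lemma~\ref{lemma-fractal}), a combinatorial decomposition claim plus the Erd\H{o}s--Szekeres theorem producing a \emph{tame} fractal (Lemma~\ref{lemma:decompose}), and Ramsey's theorem extracting a \emph{monochromatic} tame fractal of length $\omega+1$, at which point identity \eqref{eq:context-absorb} finally applies (Lemma~\ref{lemma:mono-vertical}). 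None of this repeated-structure extraction appears in your proposal, and an induction ``on $|V_L|$, peeling off ideals'' in Simon's style is not carried out either. Finally, your expectation that the argument routes through Proposition~\ref{prop:other-eq} is backwards: the paper proves completeness directly from \eqref{eq:context-absorb} and the relation $\piece$, and only afterwards derives the $\mathcal J$-triviality-plus-extra-identity reformulation as an equivalent statement.
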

The identity~(\ref{eq:context-absorb}) is illustrated in
Figure~\ref{fig:rule}.  

\begin{figure}
  \centering
    \includegraphics[scale=0.8]{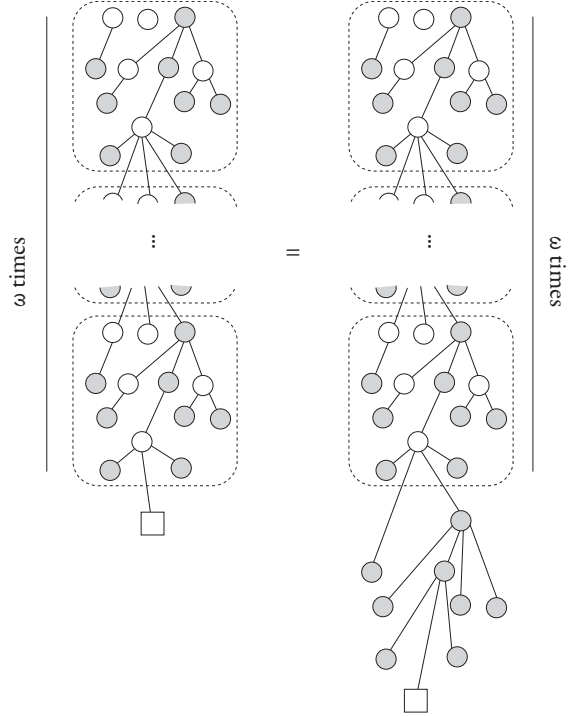}
    \caption{The identity $u^\omega = u^\omega v$, with $v \piece
      u$. The gray nodes are from $v$.}
  \label{fig:rule}
\end{figure}

In view of Corollary~\ref{cor:decidable}, an immediate consequence of Theorem
\ref{thm:main} is that piecewise testability is a decidable property.

\begin{cor}
  It is decidable if a regular forest language is piecewise testable.
\end{cor}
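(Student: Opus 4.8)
The plan is to read this corollary off directly from Theorem~\ref{thm:main}: that theorem has already done all the work by reducing piecewise testability to an identity that can be checked mechanically on a finite object, so what remains is only to observe that every ingredient of that identity is effectively computable.

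Concretely, I would proceed as follows. Given any reasonable presentation of the regular language $L$ --- a tree automaton, or, as we do in this paper, a morphism into a finite recognizing forest algebra --- one first computes the syntactic forest algebra $(H_L,V_L)$ and the syntactic morphism $\alpha_L$ by taking the quotient by the congruence $\sim_L$; since $L$ is regular, $(H_L,V_L)$ is finite, so this is a finite computation. (One must pass to the syntactic algebra, since the identity in Theorem~\ref{thm:main} is a property of that algebra, not of an arbitrary recognizing one.) From the multiplication table of $V_L$ one then computes an exponent $\omega>0$ with $v^\omega$ idempotent for all $v\in V_L$, for instance $\omega=|V_L|!$, and, using Corollary~\ref{cor:decidable} (which in turn rests on Lemma~\ref{lemma:equivalent-def}), one computes the piece relation $\piece$ on $V_L$ by closing the rules defining $R$ to a fixpoint, in time polynomial in $|V_L|$. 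Finally, for each of the at most $|V_L|^2$ pairs $(u,v)$ with $v\piece u$, one tests whether $u^\omega v = u^\omega = v u^\omega$; each test is a bounded number of multiplications. By Theorem~\ref{thm:main}, $L$ is piecewise testable if and only if all of these tests succeed, so the procedure decides piecewise testability --- in fact in time polynomial in the size of $(H_L,V_L)$.

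There is no genuine obstacle once Theorem~\ref{thm:main} is granted; the corollary is essentially a bookkeeping observation built on top of Corollary~\ref{cor:decidable}. The only subtlety worth flagging is that the effectivity and the well-definedness of $\piece$ on the abstract algebra $V_L$ both come from Lemma~\ref{lemma:equivalent-def}: it identifies $\piece$ with the rule-generated relation $R$, which both makes $\piece$ computable and shows that it does not depend on the particular morphism used to define it, so the identity appearing in Theorem~\ref{thm:main} really is a property of the syntactic algebra alone and the test above is meaningful.
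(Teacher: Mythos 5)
Your argument is correct and follows the paper's own route exactly: the paper also deduces the corollary by computing the syntactic forest algebra, using Corollary~\ref{cor:decidable} (via Lemma~\ref{lemma:equivalent-def}) to compute the piece relation, and then checking the identity of Theorem~\ref{thm:main} over all pairs in time polynomial in $|V_L|$. Your added remarks on computing $\omega$ and on the morphism-independence of $\piece$ are fine but just spell out what the paper leaves implicit.
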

\begin{proof}
  We assume the language is given by its syntactic forest algebra,
  which can be computed in polynomial time from any recognizing forest
  algebra.  The new identities can easily be verified in time polynomial in $|V_L|$ by enumerating all the elements of $V_L$.
\end{proof}

The above procedure gives an exponential upper bound for the complexity in case
the language is represented by a deterministic or even nondeterministic
automaton, since there is an exponential translation from automata into forest
algebras. We do not know if this upper bound is optimal. In contrast, for
languages of words, when the input language is represented by a deterministic
automaton, there is a polynomial-time algorithm for determining piecewise
testability~\cite{stern}.

\medskip

In Sections~\ref{sec:corr-equat} and~\ref{sec:compl-equat}, we prove both
implications of Theorem~\ref{thm:main}. Finally, in
Section~\ref{sec:an-equivalent-set}, we give an equivalent statement of
Theorem~\ref{thm:main}, where the relation $\piece$ is not used.
But before we prove the theorem, we would like to show how it relates to the
characterization of piecewise testable word languages given by Simon.

Let $M$ be a monoid. For $m,n \in M$, we write $m \wordpiece n$ if $m$ is
a---not necessarily connected---subword of $n$, i.e.~there are
elements $n_1,\ldots,n_{2k+1} \in M$ such that
   \begin{eqnarray*}
     n= n_1 \cdots n_{2k} n_{2k+1} \qquad m=n_2 n_4 \cdots n_{2k}\ .
   \end{eqnarray*}
We claim that, using this relation, the word characterization can be
written in a manner identical to Theorem~\ref{thm:main}:
\begin{thm}
  A word language is piecewise testable if and only if its syntactic
  monoid satisfies the identity
\begin{equation}
  \label{eq:monoid-context-absorb}
  n^\omega m = n^\omega = m n^\omega  \qquad \mbox{for }m \wordpiece n\ .
\end{equation}
\end{thm}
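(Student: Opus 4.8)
The plan is to show that the final theorem is essentially a corollary of Simon's original theorem together with the purely algebraic translation between the subword relation $\wordpiece$ and the statement of the identity~(\ref{eq:monoid-context-absorb}).

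First I would recall Simon's theorem in its classical form: a word language $L\subseteq A^*$ is piecewise testable if and only if its syntactic monoid $M(L)$ is $\mathcal J$-trivial. So it suffices to prove that a finite monoid $M$ is $\mathcal J$-trivial if and only if it satisfies~(\ref{eq:monoid-context-absorb}), i.e.\ $n^\omega m = n^\omega = m n^\omega$ whenever $m \wordpiece n$. For the direction from the identity to $\mathcal J$-triviality, I would use the standard fact that a finite monoid is $\mathcal J$-trivial iff it satisfies $x^\omega = x^{\omega+1}$ (aperiodicity) together with $(xy)^\omega = (yx)^\omega$; alternatively, $\mathcal J$-triviality is equivalent to the identities $x^\omega y x^\omega = x^\omega$ for $y$ in the submonoid generated by the letters of $x$, which is exactly a special case of~(\ref{eq:monoid-context-absorb}). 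Concretely: if $M$ satisfies~(\ref{eq:monoid-context-absorb}), then taking $n = xy$ and $m = x$ (note $x \wordpiece xy$ via $n_1 = 1$, $n_2 = x$, $n_3 = y$) gives $(xy)^\omega x = (xy)^\omega$, and symmetrically with $m = y$ one gets $y(xy)^\omega = (xy)^\omega$, from which the usual manipulations yield $(xy)^\omega = (yx)^\omega$ and aperiodicity, hence $\mathcal J$-triviality by Simon's algebraic characterization.

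For the converse direction, assume $M$ is $\mathcal J$-trivial and $m \wordpiece n$, so $n = n_1\cdots n_{2k+1}$ and $m = n_2 n_4\cdots n_{2k}$. I would argue that $m$ divides a power of $n$ in the two-sided sense, or more directly that $n^\omega m \le_{\mathcal J} n^\omega$ and $n^\omega m \ge_{\mathcal J} n^\omega$, whence $n^\omega m$ and $n^\omega$ generate the same two-sided ideal and are therefore equal by $\mathcal J$-triviality. The inequality $n^\omega m \le_{\mathcal J} n^\omega$ is trivial. For the reverse, the key observation is that since $m$ is a subword of $n$, the element $n^\omega$ can be written with a factor $m$ appearing: expanding $n^\omega$ as a long product of the $n_i$'s and using idempotency $n^\omega = n^\omega n^\omega$, one can insert a copy of the subword $m$ into the expansion of $n^\omega n^\omega\cdots$, exhibiting $n^\omega$ as $x m y$ for suitable $x,y$. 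Then $n^\omega = x m y$ shows $n^\omega \le_{\mathcal J} m$, and combined with $m \le_{\mathcal J} n$ (each $n_{2i}$ divides $n$, hence $m$ divides a power of $n$, so $m \le_{\mathcal J} n^\omega$) one gets that $n^\omega$, $m n^\omega$ and $n^\omega$ all sit in the same $\mathcal J$-class; $\mathcal J$-triviality forces equality. The argument for $n^\omega = m n^\omega$ is symmetric.

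I expect the main obstacle to be the bookkeeping in showing $n^\omega \le_{\mathcal J} m$, i.e.\ that a copy of the subword $m$ really can be extracted from (a power of) $n$: one must be careful that the factors $n_1,\dots,n_{2k+1}$ witnessing $m\wordpiece n$ are fixed elements of $M$ and not letters, so the "insertion" is at the level of the monoid $M$, using $n = n_1\cdots n_{2k+1}$ and idempotency of $n^\omega$. This is routine but is the only place where the combinatorial content of the subword relation is used; everything else is a direct appeal to Simon's theorem and the standard equational characterization of $\mathcal J$-trivial monoids.
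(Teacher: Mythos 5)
The ``only if'' half of your argument (instantiating the identity with $m=x\wordpiece xy$ and $m=y\wordpiece xy$ to recover $(xy)^\omega x=(xy)^\omega=y(xy)^\omega$ and then citing the standard equational characterization of $\mathcal J$-triviality) is fine and is essentially what the paper does. The gap is in the other direction, precisely at the step you dismiss as routine bookkeeping: the claim that, because $m=n_2n_4\cdots n_{2k}$ is a scattered subword of $n=n_1\cdots n_{2k+1}$, one can expand $n^\omega$ and ``insert a copy of $m$'' so as to write $n^\omega=xmy$. In any expansion of $n^\omega$ as a product of copies of the $n_i$, the factors $n_2,n_4,\ldots,n_{2k}$ occur interleaved with the odd-indexed factors, never contiguously, and regrouping cannot make their product appear as a factor; inserting extra factors into a product is exactly the absorption property the theorem asserts, so as stated the step is circular. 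Moreover, your justification never invokes $\mathcal J$-triviality, and as a statement about arbitrary finite monoids it is false: in the syntactic monoid of $A^*xzA^*$ over $A=\{x,y,z\}$, take $n$ to be the image of $xyz$ and $m$ the image of $xz$ (so $m\wordpiece n$ via $n_1=n_5=1$, $n_2=x$, $n_3=y$, $n_4=z$); then $m$ is a zero with $MmM=\{m\}$, while no power of $xyz$ contains the factor $xz$, so $n^\omega\notin MmM$. (That monoid is not $\mathcal J$-trivial, which is exactly the point: your ``insertion'' argument would apply to it verbatim.) There is also a secondary slip in the Green's-relations bookkeeping: to conclude $n^\omega m=n^\omega$ by triviality you need $n^\omega$ and $n^\omega m$ in the same $\mathcal J$-class, not $n^\omega$ and $m$ --- if you really had $n^\omega\le_{\mathcal J} m$ and $m\le_{\mathcal J} n^\omega$, $\mathcal J$-triviality would force the absurd conclusion $m=n^\omega$ --- and the parenthetical ``each $n_{2i}$ divides $n$, hence $m$ divides a power of $n$'' has the divisibility relation pointing the wrong way (divisors of $n$ are not closed under products in this sense).

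The way to close the hole is essentially the paper's argument, which is equational rather than ideal-theoretic: from $\mathcal J$-triviality one has the identity $(nm)^\omega n=(nm)^\omega=m(nm)^\omega$; applying it to the factorizations $n=(n_1\cdots n_i)(n_{i+1}\cdots n_{2k+1})$ one absorbs the prefixes of $n$ into $n^\omega$ by induction on $i$, obtaining $n^\omega n_1\cdots n_i=n^\omega$ and hence $n^\omega n_i=n^\omega$ for every $i$, from which $n^\omega m=n^\omega n_2n_4\cdots n_{2k}=n^\omega$ follows at once, and dually $mn^\omega=n^\omega$. This short induction is where the combinatorial content of $\wordpiece$ interacts with $\mathcal J$-triviality; it cannot be replaced by a purely combinatorial regrouping of a power of $n$.
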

\begin{proof}
  Recall that Simon's theorem says a word language is piecewise
  testable if and only if its syntactic monoid is $\mathcal
  J$-trivial. Therefore, we need to show $\mathcal J$-triviality is
  equivalent to~(\ref{eq:monoid-context-absorb}). We use an identity
  known to be equivalent to $\mathcal J$-triviality (see, for instance, ~\cite{eilenbergB}, Sec. V.3.):
\begin{equation}
  \label{eq:j-trivial}
  (nm)^\omega n = (nm)^\omega =  m(nm) ^\omega\ .
\end{equation}  
 Since the above identity is an immediate consequence
of~(\ref{eq:monoid-context-absorb}), it suffices to
derive~(\ref{eq:monoid-context-absorb}) from the above. We only show
$n^\omega m=n^\omega$. As we assume $m \wordpiece n$, there are
decompositions
   \begin{eqnarray*}
     n= n_1 \cdots n_{2k} n_{2k+1} \qquad m=n_2 n_4 \cdots n_{2k}\ .
   \end{eqnarray*}
   By induction on $i$, we show 
   \begin{eqnarray*}
     n^\omega n_i  = n^\omega\ ,
   \end{eqnarray*}
 The result then follows immediately. The base $i=0$, is immediate. In the induction step, we use the
   induction assumption to get:
   \begin{eqnarray*}
     n^\omega n_1 \cdots n_{i-1}  = n^\omega\ .
   \end{eqnarray*}
   By applying~(\ref{eq:j-trivial}), we have 
   \begin{eqnarray*}
    n^\omega =  n^\omega n_1 \cdots n_{i} 
   \end{eqnarray*}
and therefore 
\begin{eqnarray*}
  n^\omega   =    n^\omega    n_i\ .
\end{eqnarray*}
\end{proof}

Note that since the vertical monoid $V$ in a forest algebra is a monoid,
it would make syntactic sense to have the relation $\wordpiece$
instead of $\piece$ in Theorem~\ref{thm:main}. Unfortunately, the
``if'' part of such a statement would be false, as we will show in Section~\ref{sec:an-equivalent-set}. That is why we need to
have a different relation $\piece$ on the vertical monoid, whose
definition involves all parts of a forest algebra, and not just
composition in the vertical monoid.

\subsection{Correctness of the identities}
\label{sec:corr-equat}

In this section we show the easy implication in
Theorem~\ref{thm:main}.
\begin{prop}\label{prop:correcntess}
  If a language is piecewise testable, then its syntactic algebra
  satisfies identity~(\ref{eq:context-absorb}).
\end{prop}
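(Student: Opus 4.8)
The plan is to unwind the identity into a statement about bounded-size pieces in the free forest algebra, and then to observe that such pieces cannot notice the insertion of a context which already contains $v$ as a piece. Let $\alpha_L:\freef\A\to(H_L,V_L)$ be the syntactic morphism, which is surjective, and fix $n\ge 0$ witnessing that $L$ is piecewise testable, so that membership in $L$ depends only on the set of pieces of size at most $n$. Fix also an integer $\omega>0$ that is an idempotent exponent for $V_L$ and that in addition satisfies $\omega\ge n$; this is legitimate since any positive multiple of an idempotent exponent is again one, and the element $u^\omega$ is the same for every such exponent. Given $u,v\in V_L$ with $v\piece u$, I would use the observation accompanying Lemma~\ref{lemma:equivalent-def} — that ``some morphism'' may be replaced by ``any surjective morphism'' in the definition of $\piece$ — to pick concrete contexts $p\piece q$ over $\A$ with $\alpha_L(p)=v$ and $\alpha_L(q)=u$. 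Then $q^\omega$, $q^\omega p$ and $p q^\omega$ are contexts over $\A$ mapping under $\alpha_L$ to $u^\omega$, $u^\omega v$ and $vu^\omega$ respectively, so it suffices to establish the two $\sim_L$-equivalences $q^\omega p\sim_L q^\omega$ and $p q^\omega\sim_L q^\omega$ in $V_{\A}$; that is, that for all contexts $r$ and forests $t$ over $\A$ the forests $r\,q^\omega p\,t$, $r\,p q^\omega\,t$ and $r\,q^\omega t$ all lie in $L$, or all lie outside $L$.

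Since $L$ is determined by pieces of size at most $n$, I would prove that these three forests have exactly the same such pieces. The ingredients are two facts about the concrete piece relation, immediate from its definition in Section~\ref{sec:notation}: the empty context $\hole$ is a piece of every context, and composition is monotone for pieces in the sense that $p'\piece q'$ and $s'\piece s$ imply $p's'\piece q's$. In particular $s\piece p's$ for every context $p'$ and forest $s$, and $r s'\piece r s$ whenever $s'\piece s$. From this, inserting $p$ gives $r\,q^\omega t\piece r\,q^\omega p\,t$ and $r\,q^\omega t\piece r\,p q^\omega\,t$; and, using $p\piece q$, replacing $p$ by a further copy of $q$ gives $r\,q^\omega p\,t\piece r\,q^{\omega+1}t$ and $r\,p q^\omega\,t\piece r\,q^{\omega+1}t$. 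So every piece of $r\,q^\omega t$ is a piece of the other two, and every piece of the other two is a piece of $r\,q^{\omega+1}t$. It then remains only to show that a piece $w$ of $r\,q^{\omega+1}t$ with at most $n$ nodes is already a piece of $r\,q^\omega t$: since $|w|\le n\le\omega$, $w$ meets at most $\omega$ of the $\omega+1$ copies of $q$, and one can move the selected nodes order-preservingly onto $\omega$ copies, because every ancestor relation and every $\orderfo$ relation between two nodes sitting in different copies of $q$ — or in $r$, or in $t$ — depends only on where each node lies relative to the hole inside a single copy, not on how many copies occur. Chaining the inclusions shows the three sets of pieces of size at most $n$ coincide, which yields the two $\sim_L$-equivalences, hence identity~(\ref{eq:context-absorb}).

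The delicate point is this last copy-collapsing step, where one must verify that the relocation really preserves the forest-order and the ancestor relation, with some care needed about how the depth-first order behaves across composed contexts; I expect this to be the main, if routine, obstacle, the rest being bookkeeping. A tidier way to isolate it is to prove separately that for all $m,m'\ge n$ the forests $r\,q^{m}t$ and $r\,q^{m'}t$ have the same pieces of size at most $n$, and then to apply this with $m'=\omega$ and $m\in\{\omega,\omega+1\}$.
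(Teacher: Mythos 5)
Your proposal is correct, and its skeleton matches the paper's: use surjectivity of the syntactic morphism to realize $v\piece u$ by concrete contexts $p\piece q$ over \A, sandwich $rq^\cdot t \piece rq^\cdot pt \piece rq^{\cdot+1}t$ (and symmetrically for $pq^\cdot$) using monotonicity of composition, conclude that the relevant forests have the same pieces of size at most $n$, and read off the identity through $\sim_L$. Where you genuinely diverge is in how the equality of small-piece sets is obtained. The paper never analyzes how a small piece sits inside the iterated context: it notes that the sets of pieces of size at most $n$ of $rq^{ik}t$ grow monotonically with $i$, hence stabilize by finiteness, so for large $i$ the sandwiched forest $rq^{ik}pt$ has the same small pieces as $rq^{ik}t$; it then pulls the conclusion back from exponent $ik$ to exponent $k$ using $\alpha_L(q^{ik})=\alpha_L(q^k)$. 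You instead choose the idempotent exponent $\omega\geq n$ once and for all and prove directly that every piece of $rq^{\omega+1}t$ with at most $n$ nodes is already a piece of $rq^{\omega}t$, which removes the need for the final pull-back step. Your flagged ``delicate point'' is sound and can be made airtight more simply than by relocating nodes copy by copy: a piece with at most $n\leq\omega$ nodes meets at most $n$ of the $\omega+1$ copies of $q$, hence misses some copy entirely, and deleting all nodes of that copy from $rq^{\omega+1}t$ (allowed by the node-removal characterization of $\piece$) yields exactly $rq^{\omega}t$, into which the piece therefore embeds. The trade-off: the paper's stabilization trick is softer and needs no structural analysis of embeddings (and works at any idempotent exponent), while your argument is more explicit and quantitative but hinges on the choice $\omega\geq n$ and on carefully verifying the collapsing step.
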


\begin{proof}
  Fix  a language $L$ that is piecewise testable and let $n$ be
  such that membership of $t $  in $L$ only depends on the pieces of $t$
  with at most $n$ nodes.

We will use the following simple fact:
\begin{fact}\label{fact:descendant-obvious}
  If $r$ is any context, $p \piece q$ are contexts and $t$ is a forest, then
  $rpt \piece rqt$.
\end{fact}

We only show the first part of the identity, i.e.
\begin{equation*}
  u^\omega v = u^\omega \hspace{2cm}\text{for $v \piece u$}
\end{equation*}

Fix   $v \piece u$ as above.  By definition of $\omega$, we can
write the identity as an implication: for $k \in \Nat$, if $u^k=u^k
\cdot u^k$ then $u^k \cdot v = u^k$. Let  $k$ be as above. Let $p
\piece q$ be contexts that are mapped to $v$ and $u$ respectively  by the syntactic
morphism of $L$. By unraveling the definition of the syntactic algebra, we need
to show that
\begin{eqnarray*}
    rq^kpt \in L & \qquad  \mbox{ iff } \qquad & rq^kt  \in L
\end{eqnarray*}
holds for any context $r$ and forest $t$.
Consider now the forests 
\begin{eqnarray*}
  rq^{ik}t  \qquad \mbox{and} \qquad rq^{ik}pt \qquad \mbox{for }i \in \Nat \ .
\end{eqnarray*}
As $\hole \piece p \piece q$, thanks to Fact~\ref{fact:descendant-obvious}, we get
\begin{eqnarray*}
    rq^{ik}t\ \piece\ rq^{ik}pt\ \piece\  rq^{(i+1)k}t
\end{eqnarray*}
When $i$ is increasing, the number of pieces of size $n$ of $rq^{ik}t$ is
increasing.  As there are only finitely many pieces of size $n$, for $i$
sufficiently large, the two forests $rq^{ik}t$ and $rq^{(i+1)k}t$ have the same
set of pieces of size $n$. Therefore, for sufficiently large $i$, the two forests 
$rq^{ik}t$ and $rq^{ik}pt$ have the same set of pieces of size $n$, and either both
belong to $L$, or both are outside $L$. However, since
$\alpha_L(q^k)=\alpha_L(q^kq^k),$ we have
\begin{eqnarray*}
  rq^{ik}t\  \in L & \qquad  \mbox{ iff } \qquad & rq^{k}t \in
  L\\
  rq^{ik}pt\  \in L & \qquad  \mbox{ iff } \qquad & rq^{k}pt \in
  L\ ,
\end{eqnarray*}
which gives the desired result.
\end{proof}

\subsection{Completeness of the identities}
\label{sec:compl-equat}
This section is devoted to
showing completeness of the identities: an algebra that satisfies
identity~(\ref{eq:context-absorb}) in Theorem~\ref{thm:main} can only recognize
piecewise testable languages.  We fix an alphabet \A, and a forest language
$L$ over this alphabet, whose syntactic forest algebra $(H_L,V_L)$ satisfies
the identity. We will write $\alpha$ rather than $\alpha_L$ to denote the
syntactic morphism of $L$, and sometimes use the term ``type of $s$'' for the
image $\alpha(s)$ (likewise for contexts).

We write $s \sim_n t$ if the two forests $s,t$ have the same pieces of size
no more than $n$. Likewise for contexts. The completeness part of Theorem~\ref{thm:main}
follows from the following two results.

\begin{lemma}\label{lemma:eilenberg}
  Let $n \in \Nat$. For $k$ sufficiently large, if two forests satisfy
  $s \sim_k s'$, then they have a common piece $t$ in the same
  $\sim_n$-class, i.e.~
  \begin{eqnarray*}
    t \piece s, \quad t \piece s', \quad t \sim_n s, \quad \mbox{and} \quad t \sim_n s'\ .
  \end{eqnarray*}
\end{lemma}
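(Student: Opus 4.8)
The statement is a tree analogue of a classical lemma of Simon (often attributed via Eilenberg's presentation, hence the label): if two words agree on all subwords of length $\le k$, then for $k$ large enough (as a function of $n$) they have a common subword agreeing with both on subwords of length $\le n$. So the natural strategy is to mimic that argument, with the induction organized around the structure of forests rather than the length of words.

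Let me think about how the word proof goes. For words, one shows: if $s \sim_k s'$ with $k$ much larger than $n$, then one can find a common subword $t$ with $t\sim_n s \sim_n s'$. The trick is: since $s$ and $s'$ have the same length-$k$ subwords, in particular they have "many" occurrences of certain letters or patterns, and one greedily matches up a large common subword by reading both from the left; one keeps a budget so that whatever length-$\le n$ pattern appears in $s$, it can be traced into the matched common part. The key quantitative input is a Ramsey/pigeonhole bound: if $k$ is large enough relative to $n$ and the alphabet, then matching along both words leaves enough slack. The first move, then, is to decide on the right notion of "size" to induct on for forests — I would induct on $n$, and for fixed $n$ do a secondary induction on the structure (number of nodes, or on contexts versus forests), choosing $k = k(n)$ recursively and very large.

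Here is the shape I would aim for. First I would reduce the forest statement to a statement about contexts and a statement about "horizontal sequences of trees", using the inductive decomposition of forests: a forest is a sum $t_1 + \dots + t_m$ of trees, and a tree is $a\,s$ for a letter $a$ and a subforest $s$. So $s \sim_k s'$, read at the top level, says the two sequences of trees have the same $\sim$-behaviour on small pieces; I would first match up a large common horizontal subsequence of trees (this is exactly the word case, with "alphabet" being $\sim_{n}$-classes of trees), then recurse into the matched trees to find common sub-contexts/sub-forests. Concretely: pieces of size $\le n$ of a forest $u_1+\dots+u_\ell$ are obtained by choosing a sub-multiset of the $u_i$'s (respecting order) and taking pieces of size $\le n$ inside each chosen tree, with the sizes summing to $\le n$. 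So to control $\sim_n$ of the whole forest I need to control $\sim_n$-or-smaller behaviour of each selected tree, and those selections are themselves controlled by a word-level argument on the sequence of $\sim_{\le n}$-types. This gives the recursion $k(n)$ depends on $k(n-1)$ (for going inside a tree) and on the word-level bound (for the horizontal match), so $k(n)$ will be some tower/Ramsey-type function of $n$ — which is fine, the lemma only asserts existence of a large enough $k$.

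The step I expect to be the main obstacle is making the "common piece $t$ that is simultaneously a piece of $s$, a piece of $s'$, and $\sim_n$ to both" work coherently across the two recursion axes (horizontal and vertical) at once. In the word case there is only one axis, so the greedy matching is clean; here, when I pick a common horizontal subsequence of trees and then want to descend into each matched pair $(t_i, t'_i)$, I must ensure (a) that each such pair still satisfies $t_i \sim_{k'} t'_i$ for a $k'$ large enough to feed the induction — this should follow because a length-$k$ piece of the big forest can be made to sit entirely inside one tree, so agreement on big pieces of $s,s'$ forces agreement on big pieces of each $t_i$ vs $t'_i$, at least for the trees that occur "deeply enough" — and (b) that the budget bookkeeping for pieces of total size $\le n$ distributes correctly, i.e. a piece of size $\le n$ of $s$ that spreads across several of the matched trees still traces into $t$. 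I would handle (b) by observing that such a piece uses at most $n$ trees and at most size $n$ inside each, so as long as the horizontal common subsequence is long enough (word bound with parameter $n$) and each vertical recursion is done with parameter $n$ (not less), everything fits. The honest accounting of these two budgets, and the precise recursive definition of $k(n)$, is where the real work lies; the rest is the routine translation of Simon's word argument. I would also isolate, as a clean sub-lemma, the purely horizontal/word statement so the forest proof is just "apply the word lemma at the top, recurse into the matched subtrees," keeping the double induction readable.
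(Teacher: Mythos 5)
Your plan hinges on the step you yourself flag as the obstacle, and that step does not hold: after matching a long common horizontal subsequence of trees of $s$ and $s'$, you need each matched pair $(t_i,t_i')$ to satisfy $t_i \sim_{k'} t_i'$ for a large $k'$, and you assert this ``should follow because a length-$k$ piece of the big forest can be made to sit entirely inside one tree.'' But $s \sim_k s'$ is a purely global condition on the two sets of small pieces; it induces no correspondence between the trees of $s$ and the trees of $s'$. A piece of size $\le k$ of $s$ that lies inside one tree of $s$ may be realized in $s'$ spread across several trees, or inside a tree whose other small pieces are quite different, so nothing forces any particular pair of ``matched'' trees to agree on pieces of any prescribed size. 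Moreover the matching itself would be done on $\sim$-types of trees, and equality of such types is far weaker than the $\sim_{k'}$-equivalence the recursion needs. Since this failure recurs at every level of the vertical recursion, the proposed double induction does not get started, and the budget bookkeeping you defer cannot repair it.

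The idea you are missing, which makes the lemma short, is to demand that the common piece $t$ itself have size at most $k$; then its being a piece of $s'$ is automatic from the hypothesis. Concretely, the paper first proves a pumping fact: for every $n$ and every regular language $K$ there is a constant $k$ such that every forest in $K$ has a piece of size at most $k$ that is still in $K$ and $\sim_n$-equivalent to it. This is a pigeonhole argument in a finite forest algebra (the product of the syntactic algebra of $K$ with the quotient $\freef\A/\sim_n$): any sufficiently large forest contains a long chain or many siblings, hence factors as $q_0q_1\cdots q_m s''$ with $m=|H|$ nonempty contexts, two of the partial images coincide, so some $q_i$'s can be deleted without changing the image; iterating bounds the size. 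Now fix $n$, apply this fact to each of the finitely many $\sim_n$-classes, and let $k$ be the maximum of $n$ and the resulting constants. Given $s\sim_k s'$, take $t\piece s$ of size at most $k$ with $t\sim_n s$. Since $t$ is a piece of $s$ of size at most $k$ and $s\sim_k s'$, we get $t\piece s'$ immediately; and since $k\ge n$, $s\sim_k s'$ gives $s\sim_n s'$, so $t\sim_n s'$ by transitivity. No structural comparison of $s$ and $s'$, and no Simon-style greedy matching, is needed.
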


\begin{prop}\label{lemma:remover}
  For $n$ sufficiently large, $pat \sim_n pt$ entails
  $\alpha(pat)=\alpha(pt)$.
\end{prop}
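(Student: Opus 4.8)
The plan is to reduce the claim to an application of the identity~(\ref{eq:context-absorb}), with pigeonhole supplying the idempotents it needs. Writing $w:=\alpha(p)$, $x:=\alpha(a\hole)$ and $h:=\alpha(t)$ and using associativity of the action, the goal $\alpha(pat)=\alpha(pt)$ is exactly $wxh=wh$. The first observation is that the hypothesis forces $pt$, hence $pat$, to be large: if $pt$ had fewer than $n$ nodes, then $pat$ would have at most $n$ nodes, so $pat\sim_n pt$ would place $pat$ among the pieces of $pt$ of size $\le n$ and thus give $pat\piece pt$ --- impossible, since $pat$ has strictly more nodes than $pt$. So, choosing $n$ at least a suitable bound depending on $|\A|$ and the size of $(H_L,V_L)$, we may assume $pt$ has at least $n$ nodes.

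The technical heart is a single, self-contained lemma: if $G$ is a forest, $c$ one of its nodes, $G\setminus c$ the forest obtained by deleting $c$, and if both (a) $G\sim_n G\setminus c$ and (b) $c$ is adjacent to a chain of ancestors, or to a sibling-sequence, of length exceeding a suitable bound $D$ polynomial in $|\A|$ and $|(H_L,V_L)|$, then $\alpha(G)=\alpha(G\setminus c)$. I would prove this by pigeonholing the types --- taken in $V_L$, using the embedding $g\mapsto g+\hole$ in the horizontal case --- of the nested contexts, respectively the initial sibling segments, along that long dimension: this yields a non-empty context $u$ abutting $c$ such that the idempotent $e:=\alpha(u)^\omega$ is absorbed by the enclosing type on the relevant side. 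Hypothesis (a) --- which says precisely that the copy of $c$ carries no information at scale $n$ --- is then fed to Lemma~\ref{lemma:equivalent-def} to produce a morphism $\freef\A\to(H_L,V_L)$ together with contexts witnessing that the one-node context around $c$, suitably padded by part of the surrounding structure, is a \emph{piece} of $u$; that is, a piece $v\piece u$ accounting for $c$. The three-way identity $e=ev=ve$ of Theorem~\ref{thm:main} then removes $c$ at the level of types.

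Granting this lemma, the proposition follows by induction on the number of nodes of $pt$. The base case $|pt|<n$ is vacuous by the observation above. For the inductive step, first suppose the ``neighbourhood of $a$'' in $pat$ --- the chain of ancestors of the hole of $p$ together with the sibling-sequence of that hole and the top of $t$ --- contains a chain or sibling-sequence longer than $D$; then the lemma applied to $G=pat$ and $c=a$ gives $\alpha(pat)=\alpha(pt)$ directly. Otherwise $pt$, being large, must contain such a long chain or sibling-sequence elsewhere, and a counting argument (a non-redundant node lies in the image of every embedding of some fixed piece of size $\le n$, of which there are finitely many, each of size $\le n$) yields on it a node $b$ that is $n$-redundant in $pt$, i.e.\ with $pt\sim_n pt\setminus b$. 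An elementary manipulation of pieces --- using that deleting a node, or the node $a$, yields a piece, together with the hypothesis $pat\sim_n pt$ --- shows that $b$ is also $n$-redundant in $pat$ and that $(pat\setminus b)\sim_n(pt\setminus b)$. The lemma, applied to $(pat,b)$ and to $(pt,b)$, gives $\alpha(pat)=\alpha(pat\setminus b)$ and $\alpha(pt)=\alpha(pt\setminus b)$, while the induction hypothesis, applied to the strictly smaller instance $(pat\setminus b)\sim_n(pt\setminus b)$, gives $\alpha(pat\setminus b)=\alpha(pt\setminus b)$; chaining the three equalities finishes the step.

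The main obstacle is the lemma, and inside it the passage from the combinatorial statement (a) to the algebraic piece relation $v\piece u$ in the \emph{syntactic} algebra, carried out \emph{in sync} with the pigeonholed idempotent. The difficulty is twofold. First, the extra copy of the node (or of its label) that (a) guarantees elsewhere in $G$ must be made to fall inside the idempotent context $u$ abutting $c$; arranging this simultaneously with the type-pigeonhole is what forces $D$, and hence $n$, to carry the extra $|\A|$ factor. Second, the idempotent produced by pigeonhole may be vertical (a power of a context) while the node sometimes has to be routed horizontally through sibling forests, or the reverse; the embedding of $H_L$ into $V_L$ and the full strength of~(\ref{eq:context-absorb}) together cover both, but setting up the piece $v\piece u$ in the correct direction, and verifying that the resulting type rewriting is legitimate in the forest algebra while keeping the $\sim_n$-hypothesis alive across the reductions, is the delicate part that I expect to occupy most of the argument.
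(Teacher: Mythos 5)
Your outer scaffolding (largeness of $pt$, induction on $|pt|$, the bookkeeping showing that an $n$-redundant node $b$ of $pt$ is also $n$-redundant in $pat$ and that $(pat\setminus b)\sim_n(pt\setminus b)$, the counting bound on non-redundant nodes) is fine, but it only relocates the problem: everything rests on your ``technical heart'' lemma, and the sketch you give for it does not work. A pigeonhole on the types of the nested prefixes along a long chain above $c$ gives you $z=\alpha(rq_1\cdots q_i)=\alpha(rq_1\cdots q_j)=zu$ with $u=\alpha(q_{i+1}\cdots q_j)$, hence $zu^\omega=z$; but to delete $c$ you must then show something like $u^\omega\, y\,\alpha(a\hole)=u^\omega\, y$, where $y$ is the type of the (unbounded) portion of the forest strictly between the pigeonholed window and $c$, including all side subtrees hanging off that part of the chain. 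Identity~(\ref{eq:context-absorb}) only absorbs elements that are pieces of $u$ and that sit \emph{immediately} next to $u^\omega$, so you would need $y\piece u$ and $y\,\alpha(a\hole)\piece u$ in the syntactic algebra --- and nothing in your construction provides this: the window $q_{i+1}\cdots q_j$ chosen by pigeonhole bears no piece-relation to the material below it. The hypothesis $G\sim_n G\setminus c$ cannot fill this hole either, since it only controls pieces of bounded size $n$, whereas $y$ is a context of unbounded size; and Lemma~\ref{lemma:equivalent-def} is merely the combinatorial characterization of $\piece$ inside a finite algebra --- it does not manufacture embeddings of the surroundings of $c$ into the idempotent window from the $\sim_n$ hypothesis. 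The difficulty you defer to ``the delicate part'' is in fact the entire content of the proposition.

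This is precisely why the paper's proof is structured the way it is. It applies the $\sim_n$ hypothesis \emph{iteratively} (Lemma~\ref{lemma-fractal}, via Fact~\ref{fact:pumping} applied to the regular language of forests containing a fractal of length $k$) to build a fractal, i.e.\ a nested chain of $vah$-decompositions $x_1\in X_1,\dots,x_k\in X_k$ with $X_i\subseteq X_{i+1}\setminus\{x_{i+1}\}$; the nesting is what later guarantees the piece relations $w_{ijl}\piece u_{ijl}\piece u_{ij(l+1)}$ at the level of concrete restrictions of the \emph{same} contexts, hence in the syntactic algebra. It then aligns the fractal along one decomposition $qq_1\cdots q_ks'$ using Erd\H{o}s--Szekeres (Lemma~\ref{lemma:decompose}), and --- crucially, and with no counterpart in your sketch --- uses Ramsey's theorem to extract a \emph{monochromatic} fractal, so that all the relevant window types equal a single $u$ and the type of $s[X_k]$ telescopes into $u^\omega\cdot u_{(k-1)kk}$ versus $u^\omega\cdot w_{(k-1)kk}$ with both right-hand factors pieces of $u$; only at that point does identity~(\ref{eq:context-absorb}) apply (Lemma~\ref{lemma:mono-vertical}). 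A single idempotent obtained by pigeonhole, even with an extra factor of $|\A|$ in your bound $D$, gives neither the adjacency nor the piece relation that the identity requires, so as it stands your proof of the key lemma --- and hence of the proposition --- has a genuine gap.
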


\begin{proof}[Proof of the completeness part of Theorem~\ref{thm:main}]
Take $n$ as in Proposition~\ref{lemma:remover}, and then apply
Lemma~\ref{lemma:eilenberg} to this $n$, yielding $k$.  We show that $s \sim_k
s'$ implies $s \in L \iff s' \in L$, which immediately shows that $L$ is
piecewise testable, by inspecting pieces of size $k$.  Indeed, assume $s \sim_k
s'$, and let $t$ be their common piece as in Lemma~\ref{lemma:eilenberg}. Since
$t$ is a piece of $s$ with the same pieces of size $n$, it can be obtained from
$s$ by a sequence of steps where a single letter is removed in each step without affecting
the $\sim_n$-class.  Each such step preserves the type thanks to
Proposition~\ref{lemma:remover}. Applying the same argument to $s'$, we get
\begin{eqnarray*}
  \alpha(s) = \alpha(t) = \alpha(s')\ ,
\end{eqnarray*}
which gives the desired conclusion.
\end{proof}

We begin by showing Lemma~\ref{lemma:eilenberg}, and then
the rest of this section is devoted to proving
Proposition~\ref{lemma:remover}, the more involved of the two results.

\begin{proof}[Proof of Lemma~\ref{lemma:eilenberg}]
We begin with the following observation.
\begin{fact}\label{fact:pumping}
  Let $n \in \Nat$ and let $K$ be a regular language. There is some constant $k$, such that
  every $t \in K$ contains a piece $s \in K$ of size at most $k$ such that $s
  \sim_n t$.
\end{fact}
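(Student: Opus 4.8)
The plan is to reduce Fact~\ref{fact:pumping} to a purely algebraic statement about finite forest algebras. First I would package the two relevant invariants of a forest — its $\sim_n$-class and its membership in $K$ — into a single finite forest algebra. For each forest $s_0$ with at most $n$ nodes the language $P_{s_0}=\{t:s_0\piece t\}$ is regular, and the $\sim_n$-class of a forest is exactly determined by the finite family of bits $\bigl(t\in P_{s_0}\bigr)_{|s_0|\le n}$; taking $(H,V)$ to be the product of the syntactic forest algebras of $K$ and of all these $P_{s_0}$, and $\beta:\freef\A\to(H,V)$ the product of the syntactic morphisms, gives a finite forest algebra and a morphism such that $\beta(s)=\beta(t)$ implies both $s\sim_n t$ and ($s\in K$ iff $t\in K$). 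So it is enough to produce, for every forest $t$, a piece $s\piece t$ with $\beta(s)=\beta(t)$ and $|s|\le k$ for some $k$ depending only on $|H|$ and $|V|$: applied to $t\in K$ this $s$ is exactly what Fact~\ref{fact:pumping} asks for.

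To produce such an $s$ I would fix $t$ and take $s$ to be a piece of $t$ of \emph{minimal size} among all pieces with the same $\beta$-image (this set is nonempty, as $t$ itself belongs to it), and then show that minimality forces $s$ to be shallow and narrow. For the branching bound: if at some node of $s$ (or at top level) the subtrees are $\tau_1+\cdots+\tau_m$ with $m>|H|$, then two of the partial sums $\beta(\tau_1)+\cdots+\beta(\tau_i)\in H$ coincide, say for $i<j$, and deleting the block $\tau_{i+1},\dots,\tau_j$ produces a strictly smaller piece of $t$ with the same $\beta$-image (the image of that sub-forest is unchanged, and the rest of $s$ is a fixed context around it) — contradicting minimality. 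For the depth bound: along a chain $x_0<\cdots<x_\ell$ of more than $|V|$ nodes, writing $T_i$ for the subtree at $x_i$ and $c_{i,j}$ for the context obtained from $T_i$ by replacing $T_j$ with $\hole$, one has $c_{0,j}=c_{0,i}c_{i,j}$, so two of the $\beta(c_{0,i})\in V$ must coincide; from $\beta(c_{0,i})=\beta(c_{0,j})=\beta(c_{0,i})\beta(c_{i,j})$ one gets the absorption $\beta(c_{0,i})\beta(c_{i,j})=\beta(c_{0,i})$, and replacing $T_i=c_{i,j}T_j$ by $T_j$ inside $s$ then yields a strictly smaller piece of $t$ with unchanged $\beta$-image — again a contradiction. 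Hence every node of $s$ has at most $|H|$ children, $s$ has at most $|H|$ maximal trees, and every chain in $s$ has at most $|V|$ nodes, so $|s|\le k$ for $k:=|H|^{|V|+1}$ (say).

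The only steps that need genuine, though routine, care are: checking that local surgery is legitimate, i.e.\ that replacing a subtree or a block of sibling subtrees by something with the same $H$-image leaves $\beta(s)$ unchanged (this is just compositionality of $\beta$ together with the fact that the surrounding material forms a fixed context, via a decomposition $s=q\,T_0$); and checking that each replacement gives a genuine piece of $s$ — in the branching case this is immediate, and in the depth case it reduces to $T_j\piece c_{i,j}T_j$, i.e.\ to $\hole\piece c_{i,j}$ — whence a piece of $t$ by transitivity of $\piece$ on actual forests. The one place I would be careful is the depth argument: one must run the pigeonhole on the \emph{prefix} contexts $c_{0,i}$, not on the "infix" contexts $c_{i,j}$, so that the resulting identity in $V$ is a right-absorption of $\beta(c_{i,j})$ by $\beta(c_{0,i})$ — which is precisely the identity that lets the segment of the chain between $x_i$ and $x_j$ be cut out without disturbing the type of the whole forest. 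Everything else is bookkeeping about sizes and about the shape of forests of bounded depth and branching.
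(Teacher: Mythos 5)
Your proposal is correct and follows essentially the paper's own route: the paper likewise reduces to a single finite forest algebra tracking both $K$ and $\sim_n$ (it takes the product of the syntactic algebra of $K$ with the quotient $\freef \A/\sim_n$) and then shrinks any forest to a bounded-size piece with the same image by pigeonhole pumping. The only divergence is organizational: where you take a minimal piece and run two separate pigeonholes (in $H$ on partial sums of sibling subtrees, in $V$ on prefix contexts along a chain), the paper factors every sufficiently large forest as $q_0q_1\cdots q_m s'$ with $m=|H|$ nonempty contexts (large forests are wide or deep) and pigeonholes once in $H$ on the suffix values $\beta(s'),\beta(q_m s'),\ldots$, deleting a block of the $q_i$ and iterating --- the same shrinking idea with the same kind of bound.
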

\begin{proof}[Proof of Fact~\ref{fact:pumping}]

Let $\beta: \freef \A \to
  (H,V)$ be a morphism into a finite forest algebra.  Let $m=|H|.$  There is a $k$
  such that every forest  $s$ of size greater than $k$ can be written as 
  $s=  q_0q_1 \cdots q_m
  s'$ where $s'$ is a forest and the $q_i$ are nonempty contexts:  this is because every large enough forest
  contains either a collection of $m$ siblings or a chain of length $m.$  It follows that the sequence of
  values $\beta(s'), \beta(q_ms'), \beta(q_{m-1}q_ms'),\ldots, \beta(q_1\cdots q_ms')$ contains
  a repeat, and so we can remove a subsequence of the $q_i$ and obtain a proper piece  $t$ of $s$ such that
  $\beta(s)=\beta(t).$  Thus every forest $s$ has a piece $t$ of size at most $k$ such that $\beta(s)=\beta(t).$
  
  Now let $(H,V)$ be the direct product of the syntactic algebra $(H_K,V_K)$ and the quotient algebra $\freef \A/\sim_n,$ and let $\beta$ be the product of the syntactic moprhism of $K$ and the natural projection onto the quotient by $\sim_n.$  If $s\in K$ then there is a piece $t$ of $s$ of size at most $k$ such that $\beta(s)=\beta(t).$  Thus $t\in K$ and $s\sim_n t,$ proving the Fact.
\end{proof}

We are now ready to prove Lemma~\ref{lemma:eilenberg}. Fix $n\in\Nat$. Notice
that each $\sim_n$ class is a regular language and $\sim_n$ has finitely many
classes. For each $\sim_n$-class $K,$ Fact~\ref{fact:pumping} gives a constant $k_K.$  Let $k$ be the maximum of $n$ and all these $k_K$; we claim the lemma holds for $k.$  Indeed, take
any two forests $s \sim_k s'$. Let $t$ be a piece of $s$ of size at most $k$
with $s \sim_n t$, as given by Fact~\ref{fact:pumping}.  Since $s \sim_k s'$,
the forest $t$ is also a piece of $s'$.  Furthermore since $\sim_k$ implies
$\sim_n$ (by $k \ge n$), we get $s' \sim_n s \sim_n t$, which implies $s'
\sim_n t$ by transitivity of $\sim_n$.
\end{proof}

\medskip

We now show Proposition~\ref{lemma:remover}. Let us fix a context $p$, a label
$a$ and a forest $t$ as in the statement of the proposition.  The context $p$
may be empty, and so may be the forest $t$. We search for the appropriate $n$;
the size of $n$ will be independent of $p,a,t$. We also fix the types
$v=\alpha(p)$, $h=\alpha(t)$ for the rest of this section. In terms of these
types, our goal is to show that $vh=v \alpha(a) h$. To avoid clutter, we will
sometimes identify $a$ with its image $\alpha(a)$, and write $vh=vah$ instead
of $vh=v\alpha(a)h$.

Let $s$ be a forest and $X$ be a set of nodes in $s$. The
\emph{restriction of $s$ to $X$}, denoted $s[X]$, is the piece of $s$
obtained by only keeping the nodes in $X$.

Let $s$ be a forest, $X$ a set of nodes in $s$, and $x \in X$. We say
that $x \in X$ is a $vah$-decomposition of $s$ if: a) if we restrict
$s$ to $X$, remove descendants of $x$, and place the hole in $x$, the
resulting context has type $v$; b) the node $x$ has label $a$; c) if
we restrict $s$ to $X$ and only keep nodes in $X$ that are proper
descendants of $x$, the resulting forest has type $h$.

\begin{df}
  A \emph{fractal} of length $k$ inside a forest $s$ is a sequence
  $x_1 \in X_1\ \cdots\ x_k \in X_k$ of $vah$-decompositions of $s$, where
  $X_i \subseteq X_{i+1} \setminus \set{x_{i+1}}$ holds for $i < k$.
\end{df}

A \emph{subfractal} is extracted by only using a subsequence 
\begin{eqnarray*}
  x_{i_1} \in X_{i_1} \qquad \cdots \qquad x_{i_j} \in X_{i_j}
\end{eqnarray*}
of the
$vah$-decompositions. Such a subsequence  is also a fractal.

\begin{lemma}\label{lemma-fractal}
  Let $k \in \Nat$. For $n$ sufficiently large, $pat \sim_n pt$
  entails the existence of a fractal of length $k$ inside $pat$.
\end{lemma}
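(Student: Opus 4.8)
The plan is to prove, by induction on $j$, a stronger statement with built‑in size control: for every $j$ there is a constant $c_j$, depending only on $j$ and on the syntactic algebra $(H_L,V_L)$ (in particular not on $p,a,t$), such that if $pat\sim_{c_j}pt$ then there is a piece $\rho_j\piece pat$ with at most $c_j$ nodes inside which there is a fractal of length $j$ (with $\rho_j$ read as the ambient forest). The Lemma then follows by taking $n:=c_k$: a piece‑embedding $\rho_k\hookrightarrow pat$ is an isomorphism onto an induced substructure, so restricting $pat$ to the image of any node set of $\rho_k$ reproduces the corresponding restriction of $\rho_k$; hence each $vah$‑decomposition occurring in the length‑$k$ fractal of $\rho_k$, together with the inclusions between their node sets, is carried to a fractal of length $k$ inside $pat$. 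The base case $j=0$ is trivial: take $\rho_0$ to be the empty forest and the empty fractal.

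For the inductive step, assume the claim for $j$ and let $pat\sim_{c_{j+1}}pt$, with $c_{j+1}\ge c_j$ to be fixed below. The induction hypothesis gives $\rho_j\piece pat$ of size $\le c_j\le c_{j+1}$ containing a fractal of length $j$. Since $\sim_{c_{j+1}}$ makes every piece of $pat$ of size at most $c_{j+1}$ a piece of $pt$, we get $\rho_j\piece pt$; fix an embedding of $\rho_j$ into $pt$ and compose it with the identification of $pt$ with the restriction of $pat$ to all nodes other than the node $x^\ast$ of $pat$ that is labelled $a$, lies at the hole of $p$, and is absent from $pt$. Writing $Y$ for the image, we have $pat[Y]\cong\rho_j$, $x^\ast\notin Y$, and $pat[Y]$ contains a copy of the length‑$j$ fractal with all its node sets avoiding $x^\ast$. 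The idea is to add one more $vah$‑decomposition, the node $x^\ast$ itself, as a new outermost level: since $pat=p\,(a\,t)$, the node $x^\ast$ has the context $p$ above it and the forest $t$ below it, so the only obstruction is that $Y\cap\mathrm{nodes}(p)$ is a piece of $p$ of the wrong type (not necessarily $v$) and $Y\cap\mathrm{nodes}(t)$ a piece of $t$ not necessarily of type $h$, and this must be repaired while keeping the size bounded.

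The repair, which I expect to be the main technical point, is a \emph{relative pumping} lemma: there is a constant $c'$ of order $|V_L|$ such that for every forest (resp.\ context) $s$ and every set $Y_0$ of its nodes there is $Z_0$ with $Y_0\subseteq Z_0\subseteq\mathrm{nodes}(s)$, $|Z_0|\le c'(|Y_0|+1)$ and $\alpha(s[Z_0])=\alpha(s)$ (preserving the hole, in the context case). This is proved in the style of the proof of Fact~\ref{fact:pumping}: the nodes of $Y_0$ cut $s$ into $O(|Y_0|)$ gap contexts/forests, and any gap larger than $|V_L|$ contains a repeated $\alpha$‑value along a chain or among a block of siblings, so it can be pumped down without removing a node of $Y_0$ and without changing $\alpha(s)$; iterating shrinks every gap to size at most $|V_L|$. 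Applying this to $p$ with $Y_0=Y\cap\mathrm{nodes}(p)$ gives a context‑piece $P'\piece p$ of type $v$ containing $Y\cap\mathrm{nodes}(p)$, with $|P'|$ bounded in terms of $|\rho_j|$; applying it to $t$ with $Y_0=Y\cap\mathrm{nodes}(t)$ gives a forest‑piece $T'\piece t$ of type $h$ containing $Y\cap\mathrm{nodes}(t)$, with $|T'|$ bounded. Put $Z:=\mathrm{nodes}(P')\cup\{x^\ast\}\cup\mathrm{nodes}(T')$ and $\rho_{j+1}:=pat[Z]$. Then $Y\subseteq Z$, so $\rho_{j+1}$ still contains the length‑$j$ fractal; and $x^\ast$ with the set $Z$ is a $vah$‑decomposition of $\rho_{j+1}$, since deleting the descendants of $x^\ast$ leaves $P'$ (type $v$), $x^\ast$ is labelled $a$, and the proper descendants of $x^\ast$ in $Z$ form $T'$ (type $h$). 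As the top set of the length‑$j$ fractal lies in $Y\subseteq Z\setminus\{x^\ast\}$, adding $x^\ast\in Z$ as the outermost member yields a fractal of length $j+1$ inside $\rho_{j+1}$, and $|\rho_{j+1}|=|Z|$ is bounded by a function of $|\rho_j|\le c_j$ and $c'$, which we take as $c_{j+1}$. The only other thing to watch throughout is the bookkeeping that piece‑embeddings are isomorphisms onto induced substructures, which is exactly what lets the $vah$‑decomposition conditions and the nesting transfer between $\rho_j$, $pat[Y]$ and $pat$.
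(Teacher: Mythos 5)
Your argument is sound, but it takes a genuinely different route from the paper's. The paper's induction step observes that the set $K$ of forests containing a fractal of length $k$ is a \emph{regular} forest language, applies Fact~\ref{fact:pumping} (with $n=0$) to $K$ to conclude that any fractal of length $k$ may be assumed to have at most $m$ nodes, transfers that small fractal through $\sim_m$ into $pt$, and then simply appends the pair $(x_{k+1},X_{k+1})=(\mbox{the $a$-node},\mbox{all nodes of }pat)$ as the new outermost $vah$-decomposition; since $X_{k+1}$ is the full node set, its context part is literally $p$ and its forest part literally $t$, so no type repair is ever needed. You instead build the size bound into a strengthened induction hypothesis (the whole fractal, including its top set, lives in a piece $\rho_j$ of size $\le c_j$), and this forces you to restore the types $v$ and $h$ when adding the new top level, via your ``relative pumping'' lemma. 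That lemma is true and provable with the paper's own tools, and your plan around it is correct; what each approach buys: the paper gets a shorter step at the price of invoking regularity of $K$ (an automata-theoretic detour), while you avoid that observation entirely and obtain explicit bounds $c_k$, at the price of a new, somewhat heavier combinatorial lemma.

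Two points in your sketch of the relative pumping lemma need repair, though both are fixable by the paper's own techniques. First, cutting $s$ at the nodes of $Y_0$ alone does not in general produce one-hole contexts and forests as gaps: if $Y_0$ contains incomparable nodes whose \lca is outside $Y_0$, the complementary region has several holes, and forest algebra has no value for it. You must first close $Y_0$ under \lcas (as in the proof of Claim~\ref{claim:decompose}); this at most doubles the set and then the gaps really are contexts and forests, so that $\alpha(s)$ is a fixed composition of their values and of the letters at the distinguished nodes. Second, shrinking a gap \emph{context} to bounded size while preserving its value and its hole is not quite the argument of Fact~\ref{fact:pumping}: you should factor the gap context along the spine to its hole, use a repetition among the $\alpha$-values of the prefixes to delete a middle block of factors, and then shrink the side forests by the forest case. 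With these adjustments your induction goes through as you describe.
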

\begin{proof}
  The proof is by induction on $k$.  The case $k=1$ is obvious.

  Assume the lemma is proved for $k$ and $n$ and consider the case $k+1$.
	
	The set of forests which have a fractal of length $k$ is a regular language, call it $K$. By Fact~\ref{fact:pumping} applied to $K$, there is some constant $m$  such that every forest in $K$  has a piece that is also in $K$, and whose size is bounded by $m$.  (In this reasoning, we do not use the parameter $n$ of Fact~\ref{fact:pumping}, so we can call Fact~\ref{fact:pumping} with $n=0$). We can assume without loss of generality that $m > n$.   
	 In other words, if  a forest has a fractal of length $k$, then it has a piece of size at most $m$ which has a fractal of length $k$.  This means that if a forest has a fractal of length $k$, then it has a fractal of length $k$ which has at most $m$ nodes (the number of nodes in a fractal is the number of nodes in the largest of its $vah$-decompositions).

  Assume now that $pat \sim_m pt$. By the induction assumption, as $m>n$, we have a
  fractal of length $k$ inside $pat$. From the previous observation, this fractal
  can be assumed to be of size smaller than $m$. Hence we obtain a piece of $pt$
  which is a fractal of length $k$ inside $pt$. Clearly, this resulting fractal can be
  extended to a fractal of length $k+1$ by taking for $X_{k+1}$ all the nodes
  of $pat$ and for $x_{k+1}$ the node $a$.
\end{proof}

Thanks to the above lemma, Proposition~\ref{lemma:remover} is a consequence of
the following result:
\begin{prop}
  For $k$ sufficiently large, the existence of a fractal of length $k$
  inside $pat$ entails $vh=vah$.
\end{prop}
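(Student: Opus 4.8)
The plan is to reduce, via the identity of Theorem~\ref{thm:main}, to producing one idempotent of the vertical monoid that witnesses the redundancy of $a$, and then to extract that idempotent from the chain of $a$-labelled nodes hidden inside a long fractal.

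\emph{The reduction.} Suppose we can produce an idempotent $w\in V_L$ with $\alpha(a\hole)\piece w$ and with $v=v'w$ for some $v'\in V_L$ (or, symmetrically, $h=wh'$ for some $h'\in H_L$). Since the syntactic algebra satisfies~(\ref{eq:context-absorb}) and $w=w^\omega$, applying the identity with $u=w$ and the piece $\alpha(a\hole)$ gives $w\cdot\alpha(a)=w^\omega\alpha(a\hole)=w^\omega=w$ (resp. $\alpha(a)\cdot w=w$). Hence $v\cdot\alpha(a)=v'w\alpha(a)=v'w=v$ (resp. $\alpha(a)\cdot h=\alpha(a)wh'=wh'=h$), and in either case $vh=vah$. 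So everything rests on finding such a $w$.

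\emph{Finding $w$.} First I would normalise the fractal. The distinguished nodes $x_1,\dots,x_k$ all carry label $a$; colouring each pair $\{i<j\}$ by the relative tree-position of $x_i$ and $x_j$ (one above the other, with a fixed direction, or incomparable) and invoking Ramsey's theorem, I may pass to a subfractal --- still a fractal by the remark following the definition --- in which the $x_i$ lie on a single chain, or form an antichain. In the chain case, orient so that the node $x_j$ carrying the largest index is the \emph{lowest} one on the chain; then, by $X_i\subseteq X_{i+1}\setminus\{x_{i+1}\}$, the set $X_j$ contains all the earlier $x_i$, and none of $x_1,\dots,x_{j-1}$ is a descendant of $x_j$, so the type-$v$ context read off from $x_j$'s $vah$-decomposition with respect to $X_j$ contains the whole chain $x_1,\dots,x_{j-1}$ of $a$-labelled nodes, each of them an ancestor of its hole. (The opposite orientation puts this chain below $x_j$, inside the type-$h$ forest, and the antichain case puts the $x_i$ as siblings inside the type-$h$ forest, to be handled with the horizontal operations $(\hole+g),(g+\hole)$ of the forest algebra.) Now factor this type-$v$ context along the chain into the successive contexts joining consecutive chain nodes, each an $a$-labelled node sitting over a smaller context, so that $v$ is a product in $V_L$ of these joining types (times the part above the topmost chain node). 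Applying the Ramsey/factorization-forest argument for the finite semigroup $V_L$ to this factorisation, I obtain a sub-chain ending at $x_j$ whose consecutive joining contexts all have one and the same \emph{idempotent} type $w$; since each joining context contains an $a$-labelled node that is an ancestor of its hole, $a\hole$ is a piece of it in the free algebra, whence $\alpha(a\hole)\piece w$. As this idempotent block reaches down to the hole of the type-$v$ context, we get $v=v'w$ for the type $v'$ of the part above the block, and the reduction applies.

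The main obstacle is precisely this normalisation-and-orientation step, together with keeping track of which side of $v$ (or of $h$) the idempotent block sits on, so that the two-sided identity~(\ref{eq:context-absorb}) can absorb the spurious $\alpha(a)$: this forces the split into the chain-up, chain-down and antichain cases, the last of which must be run through the horizontal structure of the forest algebra, and there is some additional bookkeeping in the degenerate situations where $p$ or $t$ is empty (so that $v=\hole$ or $h=0$ and the relevant part is trivial). Once the combinatorics of the fractal are organised, the algebraic content is just the Ramsey argument for $V_L$ together with the already-established identity.
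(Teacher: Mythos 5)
Your reduction is fine: if you can exhibit an idempotent $w\in V_L$ with $\alpha(a\hole)\piece w$ sitting \emph{adjacent} to the junction, i.e.\ $v=v'w$ or $h=wh'$, then identity~(\ref{eq:context-absorb}) indeed absorbs the letter and gives $vah=vh$. The genuine gap is in producing such a $w$ at that position. The Ramsey (or factorization) argument on the product $v=\alpha(r_0)\alpha(r_1)\cdots\alpha(r_m)$ gives you a monochromatic triple of cut points \emph{somewhere} in the chain, hence an idempotent block of consecutive joining factors in the middle of the product; Ramsey's theorem gives no control over which cut points the homogeneous set contains, so you cannot arrange that the block "reaches down to the hole" and your claim that the extracted sub-chain "ends at $x_j$" is unjustified. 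Once the block is interior, $v=v_1wv_2$ with $v_2\neq\hole$, the computation $v\alpha(a)=v_1wv_2\alpha(a)$ no longer simplifies: to absorb anything you would need $v_2\alpha(a\hole)\piece w$, i.e.\ that the \emph{entire} part of the context below the block, with the extra $a$ appended, is a piece of the block --- and your construction cannot guarantee this, because after extracting the chain of $a$-labelled nodes it discards the nested sets $X_1\subseteq\cdots\subseteq X_k$, which are exactly the data that could force such a piece relation. (Indeed, nothing in your set-up prevents the whole chain of $x_i$'s from lying well inside the middle of $v$, with a long, unrelated portion of the context between it and the hole, so the adjacency you need can genuinely fail for the chain you chose.) There is also a smaller slip: in the antichain case the earlier $x_i$ are \emph{not} descendants of $x_j$, so they lie in the type-$v$ context of $x_j$'s decomposition, not "inside the type-$h$ forest", and incomparable nodes need not be siblings.

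The paper's proof is built precisely to avoid this adjacency problem. It does not try to absorb the original letter $a$ at the $v$--$h$ junction at all; instead it compares the restrictions $s[X_k]$ and $s[X_k\setminus\set{x_k}]$, whose types are $vah$ and $vh$ because $x_k\in X_k$ is itself a $vah$-decomposition, so the node actually removed is the \emph{last} fractal node $x_k$, which by construction sits in the last factor of a tame decomposition $qq_1\cdots q_ks'$, immediately above the common tail $s'[X_k]$. The quantities $u_{ijl}$ carry a third index recording which restriction set $X_l$ is used, Ramsey is applied to triples, and the nestedness $X_l\subseteq X_{l+1}$ yields the piece relations~(\ref{eq:vert-piece1}); monochromaticity then shows that both versions of the final joining factor, $u_{(k-1)kk}$ and $w_{(k-1)kk}$, are pieces of the base $u_{01k}$ of the idempotent prefix, so identity~(\ref{eq:context-absorb}) collapses both products to $u_{01k}^\omega$. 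To repair your argument you would have to retain this extra bookkeeping (the dependence of the joining factors on the restriction sets), at which point you essentially recover the paper's proof.
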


The rest of this section is devoted to a proof of this proposition.
The general idea is as follows. Using some simple combinatorial
arguments, and also Ramsey's Theorem, we will show that there is
also a large subfractal whose structure is very regular, or tame, as
we call it. We will then apply identity~(\ref{eq:context-absorb}) to
this regular fractal, and show that a node with label $a$ can be
eliminated without affecting the type.

\begin{figure}
  \centering
  \includegraphics[scale=0.8]{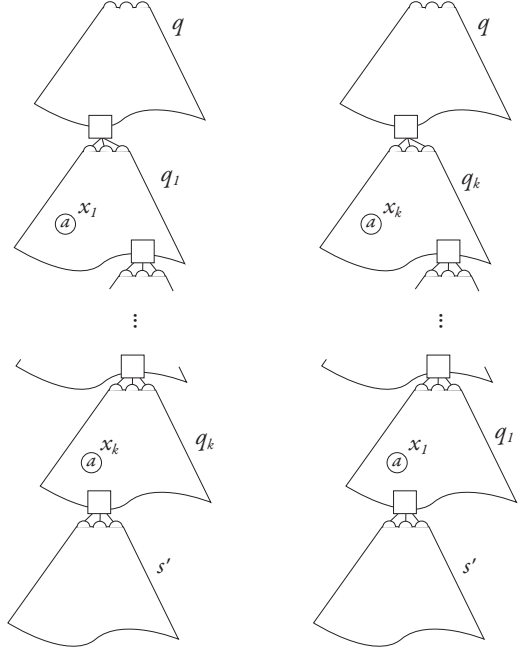}
  \caption{Two types of tame fractal.}
\label{fig:tame}
\end{figure}

A fractal $x_1 \in X_1\ \cdots\ x_k \in X_k$ inside a forest $s$ is
called \emph{tame} if $s$ can be decomposed as $s=q q_1 \cdots q_ks'$
(or $s=q q_k \cdots q_1s'$) such that for each $i=1,\ldots,k$, the
node $x_i$ is part of the context $q_i$, see Fig.~\ref{fig:tame}.
This does not necessarily mean that the nodes $x_1,\ldots,x_k$ form a chain,
since some of the contexts $q_i$ may be of the form $\hole + t$.

\begin{lemma}\label{lemma:decompose}
  Let $k \in \Nat$. For $n$ sufficiently large, if there is a fractal
  of length $n$ inside $pat$, then there is a tame fractal of length $k$ inside
  $pat$.
\end{lemma}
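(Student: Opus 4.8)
The plan is to extract from a long fractal a long subfractal that is tame, using the fact that a fractal of length $n$ is "spread out" through the forest. First I would observe that a fractal $x_1 \in X_1 \cdots x_n \in X_n$ inside $pat$ has the property that $X_1 \subsetneq X_2 \subsetneq \cdots \subsetneq X_n$ and, crucially, that $x_{i+1} \notin X_i$ while $x_j \in X_i$ for all $j \le i$ whenever $X_j \subseteq X_i$; in particular the $x_i$ are pairwise distinct nodes of $pat$. The key combinatorial input is a Ramsey/pigeonhole argument: if $n$ is large (say $n \ge $ some bound depending on $k$ and on the structure of König-type branching), then among the $n$ nodes $x_1, \ldots, x_n$ one can find a subsequence of length $k$ that is "linearly arranged" in the sense required by tameness — either all lying along a single root-to-leaf path, or all being "hung off" a common path via contexts of the form $\hole + t$ or $t + \hole$. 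Concretely, I would 2-color (or finitely-color) pairs $\{i,j\}$ with $i<j$ by the relative position of $x_i$ and $x_j$ in $pat$ (ancestor, descendant, or incomparable, and on which side), and invoke Ramsey's theorem to get a long homogeneous subsequence $x_{i_1}, \ldots, x_{i_k}$. By the subfractal remark in the excerpt, this subsequence is again a fractal.

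Second, I would argue that a homogeneous subfractal is tame. If the homogeneous color is "ancestor going down," the nodes $x_{i_1} < x_{i_2} < \cdots < x_{i_k}$ form a chain, and I can cut $pat$ along this chain into $q q_1 \cdots q_k s'$ with $x_{i_j}$ inside $q_j$ (taking $q_j$ to be the context strictly between $x_{i_{j-1}}$ and $x_{i_j}$, with $x_{i_j}$ placed as a single labeled node at the hole position, extended downward). If the homogeneous color is "incomparable, left-to-right," then the $x_{i_j}$ are pairwise incomparable, so their closest common ancestors organize them along a single path, and the contexts $q_j$ take the form $\hole + t$ (or $t + \hole$) as in the second picture of Figure~\ref{fig:tame}; some care is needed to merge the intermediate pieces so that exactly one $x_{i_j}$ sits in each $q_j$, but this is the routine decomposition already hinted at in the definition of tameness. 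In either case one also has to re-verify conditions (a)--(c) of a $vah$-decomposition for the chosen $X_{i_j}$: these hold because a subsequence of a fractal is a fractal, so nothing new is needed here.

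Third — and this is where I expect the real work — one must be careful that "sufficiently large $n$" really can be chosen depending only on $k$, independently of $p$, $a$, $t$. The subtlety is that the relevant "positions" to color are not arbitrary; I should color pairs $\{i,j\}$ by a type drawn from a fixed finite set (ancestor/descendant, and for incomparable nodes, which one precedes the other in $\orderfo$), and then the bound coming out of Ramsey's theorem is a function of $k$ and of this fixed finite palette only. Thus if $n \ge R(k)$ for the appropriate Ramsey number, the argument goes through uniformly. The main obstacle, then, is organizing the case analysis on the homogeneous color cleanly — especially showing that the incomparable-homogeneous case genuinely yields a decomposition $s = q q_1 \cdots q_k s'$ with each context of the permitted shape, rather than something more tangled — and verifying that the partition of the "space between consecutive $x_{i_j}$'s" can always be collected into a single context $q_j$; but since tameness explicitly allows contexts of the form $\hole + t$, this amounts to a bookkeeping argument on how the nodes of $pat$ distribute relative to the chosen path.
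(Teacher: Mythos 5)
There is a genuine gap in the second case of your argument. You claim that if the homogeneous colour is ``pairwise incomparable, left-to-right'', then the closest common ancestors of the chosen nodes ``organize them along a single path'' and the decomposition can be taken with contexts of the form $\hole+t$. This is false, and the failure is exactly at the point where pairwise information is not enough. Consider the tree $r\bigl(a(x_1+x_2)+b(x_3+x_4)\bigr)$ with the fractal nodes $x_1,x_2,x_3,x_4$ in this index order: every pair gets the same colour (incomparable, and the later node lies to the right), so Ramsey on pairs gives you nothing beyond this set, yet the closest common ancestors of consecutive pairs are $a,r,b$, which do not lie on a single path. Moreover one can check directly that no decomposition $s=qq_1q_2q_3q_4s'$ (nor the reversed one $s=qq_4q_3q_2q_1s'$) puts each $x_i$ into $q_i$: the set of nodes contributed by $q_i\cdots q_ks'$ must be descendant-closed with its minimal elements a block of consecutive siblings, and no such set separates $x_1$ from $\{x_2,x_3,x_4\}$ (or $x_4$ from $\{x_1,x_2,x_3\}$). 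Scaling this up to a deep complete binary tree with the fractal nodes at its leaves shows that a long homogeneous ``incomparable'' subsequence need not be tame at all in the shape you describe: the tame subfractal one actually finds there follows a long \emph{chain of closest common ancestors}, with the selected incomparable nodes hanging off that chain inside contexts that are not of the form $\hole+t$. So the structural dichotomy ``homogeneous chain'' versus ``homogeneous antichain with contexts $\hole+t$'' is not the right one, and no colouring of pairs can repair it, since the good configuration (all nodes in distinct child subtrees of one node) and the bad one (nodes at the leaves of a deep branching structure) are indistinguishable at the level of pairs.

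The paper avoids this by working with the closure $Y$ of the node set under closest common ancestors and proving a purely combinatorial claim: if $X$ is large enough then $s$ decomposes as $qq_1\cdots q_ms'$ with every $q_i$ containing a node of $X$, because $s[Y]$ must contain either a long chain, or a node with many children whose subtrees meet $Y$, or many trees at the top level; in the chain case the $q_i$ descend the chain (each carrying the side subtrees), and in the branching cases the $q_i$ are of the form $t_i+\hole$. Only after this spatial decomposition does an order issue arise, namely that the fractal index $n_i$ of the representative chosen in $q_i$ need not be monotone in $i$; this is fixed not by Ramsey on positions but by Erd\H{o}s--Szekeres, taking $m=k^2$ and extracting a monotone subsequence of length $k$ (the increasing and decreasing cases giving the two orientations allowed in the definition of tameness). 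Your chain case and your remark that the bounds depend only on $k$ are fine, but to make the argument work you would essentially have to replace the pairwise Ramsey step by this cca-closure trichotomy (or something equivalent), which is the missing idea.
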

\begin{proof}
The main step is the following claim.

\begin{claim}\label{claim:decompose}
  Let $m \in \Nat$. For $n$ sufficiently large, for every forest $s$, and every
  set $X$ of at least $n$ nodes, there is a decomposition $s=q q_1 \cdots q_m
  s'$ where every context $q_i$ contains at least one node from $X$.
\end{claim}
\proof
  Let $Y$ be the smallest set of nodes that contains $X$ and is closed under
  closest common ancestors. If $n$ is chosen large enough, either $s[Y]$
  consist of more than $m$ trees, or it contains a node having more than $m$
  children, or $s[Y]$ contains a chain of length bigger than $m$. We are thus
  left with three cases:
\begin{iteMize}{$\bullet$}
\item In the set $Y$, there is a path $y_1 < \cdots < y_{m+1}$.  For $i \in
  \set{1,\ldots,m+1}$, consider the set of nodes
	\begin{align*}
		Y_i = \set{ z : z \ge y_i \mbox{ and }z \not \ge y_{i+1}}.
	\end{align*}
	Each set $Y_i$ contains at least one node of $X$, by definition of the
        set $Y$.  The decomposition in the statement of the lemma is chosen so
        that context $q_i$ corresponds to the set $Y_i$.  The context $q$
        corresponds to all nodes that are not descendants of $y_1$, and the
        forest $s'$ corresponds to all descendants of $y_{m+1}$.

\item There is a node $y \in Y$ such that at least $m+1$ children of
        $y$ have some node from $Y$ (and therefore also $X$) in their subtree.
        Let $t$ be the forest containing all proper descendants of $y$.  By
        assumption on $y$, the forest $t$ can be decomposed as $t=t_1 + \cdots
        + t_{m+1}$ so that each of the forests contains at least one node from $X$.
        For the decomposition in the statement of the lemma, we define $q$ to
        be the set of nodes outside $t$, which includes $y$, and we define
        $q_i$ to be $t_i + \hole$ and $s'$ as $t_{m+1}$.
\item The forest $s$ can be decomposed as $t=t_1 + \cdots + t_{m+1}$ so that
  each of the forests contains at least one node from $X$. We conclude as in the
  previous case but with an empty $q$.\qed
\end{iteMize}

\noindent We now come back to the proof of the lemma.  For $k\in\Nat$ let $n$ be
the number defined by Claim~\ref{claim:decompose} for $m=k^2$. Let
$x_1 \in X_1\ \cdots\ x_n \in X_n$ be a fractal of length $n$ inside $s=pat$. We
apply Claim~\ref{claim:decompose}, with $X=\set{x_1,\ldots,x_n}$ and
obtain a decomposition $s=q q_1 \cdots q_m s'$. For each
$i=1,\ldots,m$ the context $q_i$ contains at least one node of $X$.
We chose arbitrarily one of them and denote it by
$x_{n_i}$. Unfortunately, the function $i \mapsto n_i$ need not be
monotone, as required in a tame fractal. However, we can always
extract a monotone subsequence, since any number sequence of length
$k^2$ is known to have a monotone subsequence of
length~$k$~\cite{erdosszekeres}
\end{proof}

We now assume there is a tame fractal $x_1 \in X_1\ \cdots\ x_k \in
X_k$ inside $s=pat$, which is decomposed as $s=qq_1\cdots q_k s'$,
with the node $x_i$ belonging to the context $q_i$. The dual case when
the decomposition is $s=qq_k \cdots q_1 s'$, corresponding to a
decreasing sequence in the proof of Lemma~\ref{lemma:decompose}, is
treated analogously.

The general idea is as follows. We will define a notion of
monochromatic tame fractal, and show that $vah=vh$ follows from the
existence of large enough monochromatic tame fractal. Furthermore, a
large monochromatic tame fractal can be extracted from any
sufficiently large tame fractal thanks to the Ramsey Theorem.

Let $i,j,l$ be such that $0\leq i < j \leq l \leq k$. We define
$u_{ijl}$ to be the image under $\alpha$ of the context obtained from
$q_{i+1} \cdots q_{j}$ by only keeping the nodes from $X_l$ (with the
hole staying where it is). We define $w_{ijl}$ to be the image under
$\alpha$ of the context obtained from $q_{i+1} \cdots q_j$ by only
keeping the nodes from $X_l \setminus \set {x_l}$.  Straight from this
definition, as $X_l \subseteq X_{l+1}$ we have
\begin{equation}
  \label{eq:vert-piece1}
    w_{ijl}  \piece u_{ijl} \text{ and } u_{ijl}  \piece u_{ij(l+1)}
\end{equation}

A tame fractal is called \emph{monochromatic} if for all $i < j < l$ and all
$i' < j' < l'$ taken from $\set{1,\ldots,k}$, we have
\begin{eqnarray*}
  u_{ijl}= u_{i'j'l'}\ .
\end{eqnarray*}
Note that in the above definition, we require $j < l$, even though
$u_{ijl}$ is defined even when $j \le l$.

We apply the following form of Ramsey's Theorem (see, for example, Bollobas~\cite{bollobas}):  Let $c, r, k$ be positive integers. Then there exists an integer $N$ with the following property.  Let $|S|\geq N,$ and suppose that the subsets of $S$ of cardinaility $r$ are colored with $c$ colors. Then there exists a subset $T$ of $S$ with $|T|\geq k$ such that all subsets of $T$ with of cardinality $r$ have the same color.

Let $\omega$ be the exponent associated to the syntactic forest algebra
$(H_L,V_L)$ as defined in Section~\ref{sec:forest-algebras}. If there is a tame fractal of size $N$ inside $s,$ then the map $\{i,j,l\}\mapsto u_{ijl}$ gives us a coloring of the cardinality 3 subsets of $\{1,\ldots,N\}$ with $|V_L|$ colors.  By Ramsey's Theorem, if $N$ is sufficiently large, there is a  monochromatic fractal of length~$k=\omega+1$ inside $s$.

We conclude by showing the following result:
\begin{lemma}\label{lemma:mono-vertical}
  If there is a monochromatic tame fractal of length~$k=\omega+1$ inside
  $pat=qq_1\cdots q_k s'$, then $vah=vh$.
\end{lemma}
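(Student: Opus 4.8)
We are given a monochromatic tame fractal $x_1\in X_1\ \cdots\ x_k\in X_k$ of length $k=\omega+1$ inside $s=pat=qq_1\cdots q_ks'$, with each $x_i$ sitting inside the context $q_i$. Write $u$ for the common value $u_{ijl}$ (for all $i<j<l$), and recall from~(\ref{eq:vert-piece1}) that $w_{ijl}\piece u_{ijl}$ and $u_{ijl}\piece u_{ij(l+1)}$. The first task is to extract, from the monochromaticity, the two facts about $u$ that we actually need: that $u$ is idempotent, and that the ``$x$-free'' contexts $w_{ijl}$ are pieces of $u$. Idempotence follows because $u_{0,1,l}\cdot u_{1,2,l}=u_{0,2,l}$ when we restrict $q_1q_2$ to $X_l$ with $l\ge 3$ — and all three of these colours equal $u$ by monochromaticity, so $u^2=u$ (here is where we need $k=\omega+1$ rather than just $\omega$: we have enough room above index $2$ to pick a witnessing $l$, and in fact $u^m=u$ for all the relevant ranges). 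For the piece statement, fix the top set $X_k$ and observe $w_{ij}:=w_{ijk}\piece u_{ijk}=u$ for $i<j<k$; similarly $u=u_{ijk}$ with $j<k$.

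**Main argument.** With $u$ idempotent and the relevant $w$'s being pieces of $u$, identity~(\ref{eq:context-absorb}) gives $u\,w=u=w\,u$ for each such $w$. Now I would split the context $s=qq_1\cdots q_ks'$ along the fractal. Using the top restriction $X_k$: the part of $s$ above $x_k$ — i.e.\ $q$ together with $q_1\cdots q_k$ restricted to $X_k$ but with the hole put at $x_k$ — has type $v$ (this is condition (a) of a $vah$-decomposition for $x_k\in X_k$); the proper descendants of $x_k$ kept in $X_k$ form a forest of type $h$ (condition (c)); and $x_k$ has label $a$ (condition (b)). So $\alpha(s[X_k]) = v\,a\,h$ after restriction, but of course we want to say something about the actual $\alpha(qq_1\cdots q_ks')$ versus what we get by deleting $x_k$. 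The clean way: compare the restriction of $s$ to $X_k$ with the restriction of $s$ to $X_{k-1}$ (which omits $x_k$ and lies inside $X_k\setminus\{x_k\}$). Restricting $q_1\cdots q_k$ to $X_k$ factors through $u$-blocks: the block $q_1\cdots q_{k-1}$ restricted to $X_k$ has type $u_{0,k-1,k}=u$, and $q_k$ restricted to $X_k$ contributes the piece through which $a$ and $h$ sit; meanwhile restricting to $X_{k-1}$ replaces the $x_k$-containing block by its $x$-free version, whose type is a piece of $u$. Absorbing everything into the idempotent $u$ on the appropriate side via $u\,w=u=w\,u$ collapses both restrictions to the same $\alpha$-value, and since $s[X_k]$ realizes $vah$ while $s[X_{k-1}]$ (extended back up) realizes $vh$, we conclude $vah=vh$. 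Concretely one writes $v = v'u$ or $v=v'$ for the part above, then $vah = v'\,u\,(\text{stuff})\,h$ and pushes the deletion of $x_k$ through using absorption; the monochromaticity is exactly what guarantees every intermediate vertical type collapses to $u$.

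**The main obstacle.** The delicate point is bookkeeping: making precise how restricting $q_1\cdots q_k$ to $X_k$ versus to $X_{k-1}$ interacts with the decomposition into $u$-coloured blocks, and in particular correctly locating the hole and the descendant forest of $x_k$ so that conditions (a)--(c) of ``$vah$-decomposition'' line up with the algebraic factorization. One has to be careful that the contexts $q_i$ may be of the form $\hole+t$ (so $x_i$ is a sibling, not an ancestor, of the hole), which is why tameness was defined the way it was; this is harmless for the computation since we only manipulate types in $V_L$ via composition and the absorption identity, but it must be checked that the ``restrict to $X_l$'' operation really does yield the claimed composite of $u$'s. I would organize this by first proving a short sublemma: for any $0\le i<j\le k$, the restriction of $q_{i+1}\cdots q_j$ to $X_k$ has type $u^{j-i}=u$ (by monochromaticity and idempotence), and the restriction with $x_j$ additionally removed has a type that is a piece of $u$; then the final identity is a two-line computation applying~(\ref{eq:context-absorb}).
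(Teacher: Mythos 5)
Your overall skeleton is the paper's: decompose the restriction of $s$ along the tame blocks, collapse the unaffected blocks to a power of $u$ by monochromaticity, and absorb the difference coming from the block containing $x_k$ via identity~(\ref{eq:context-absorb}); your observation that $u^2=u$ already follows from monochromaticity is correct (the paper instead just uses that the $k-1=\omega$ unaffected blocks give $u^\omega$). But the central step is wrong. You compare $s[X_k]$ with $s[X_{k-1}]$, and claim the latter ``realizes $vh$''. It does not: $x_{k-1}\in X_{k-1}$ is itself a $vah$-decomposition, so $\alpha(s[X_{k-1}])=vah$ as well, and equality of these two types says nothing. The forest whose type is $vh$ is $s[X_k\setminus\set{x_k}]$, and it is this restriction that must be compared with $s[X_k]$. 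Moreover, passing from $X_k$ to $X_{k-1}$ does not ``replace the $x_k$-containing block by its $x$-free version'': it shrinks \emph{every} block (and also $q[\cdot]$ and $s'[\cdot]$), and blocks restricted to $X_{k-1}$ are not controlled by monochromaticity, so the claimed collapse fails. Deleting just $x_k$ from $X_k$, by contrast, changes only the $q_k$-block (this is where tameness is used), replacing $u_{(k-1)kk}$ by $w_{(k-1)kk}$.

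Even with the comparison corrected, the piece facts you actually need are missing: namely $u_{(k-1)kk}\piece u$ and $w_{(k-1)kk}\piece u$, i.e.\ statements about the block with $j=k$, the only block affected by deleting $x_k$. Your ``piece statement'' records only $w_{ijk}\piece u_{ijk}=u$ for $j<k$, which concerns blocks that do not change, and your closing sublemma asserts that the restriction of $q_{i+1}\cdots q_j$ to $X_k$ has type $u$ for all $j\le k$; for $j=k$ this does not follow from monochromaticity, which constrains $u_{ijl}$ only for restriction levels $l$ strictly above $j$ --- and if it were true, the lemma would be immediate with no need for identity~(\ref{eq:context-absorb}) at all. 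This is precisely the delicate point: the paper derives $u_{(k-1)kk},\,w_{(k-1)kk}\piece u$ from~(\ref{eq:vert-piece1}) together with monochromaticity at a level strictly above $k$ (its $u_{(k-1)k(k+1)}$), i.e.\ from the containment of $X_k$ in a larger $X_l$. If only the levels $X_1,\ldots,X_k$ are available, the clean repair is to perform the deletion one level down: compare $s[X_{k-1}]$ with $s[X_{k-1}\setminus\set{x_{k-1}}]$, so that $u_{(k-2)(k-1)(k-1)}$ and $w_{(k-2)(k-1)(k-1)}$ are pieces of $u_{(k-2)(k-1)k}=u$, while the remaining $k-2$ blocks each equal $u$ and your idempotence observation turns $u^{k-2}$ into $u^\omega$. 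That argument works, but it is not the one you wrote.
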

\begin{proof}
 Fix a monochromatic tame fractal $x_1 \in X_1\
  \cdots\ x_k \in X_k$ inside a forest $s=pat=qq_1\cdots q_ks'$.  Since
  $x_k \in X_k$ is a $vah$-decomposition, the statement of the lemma
  follows if $\alpha$ assigns the same type to the two restrictions
  $s[X_k]$ and $s[X_k \setminus \set{x_k}]$.

  Recall the definition of $u_{ijl}$ and $w_{ijl}$ above.  The type of
  the forest $s[X_k]$ can be decomposed as
\begin{eqnarray*}
  \alpha(s[X_k]) = \alpha(q[X_k]) \cdot 
  u_{01k} \cdot u_{12k} \cdot u_{23k} \cdots u_{(k-1)kk} \cdot \alpha(s'[X_k])
\end{eqnarray*}
The type of $s[X_k \setminus \set{x_k}]$ is decomposed the same way,
only $u_{(k-1)kk}$ is replaced by $w_{(k-1)kk}$.  Therefore, the lemma
will follow if
\begin{eqnarray*}
  u_{01k}\cdot u_{12k} \cdot u_{23k} \cdots u_{(k-1)kk} = 
  u_{01k} \cdot u_{12k} \cdot u_{23k} \cdots w_{(k-1)kk}\ .
\end{eqnarray*}
Since the fractal is monochromatic, and since $k=\omega+1$ the above becomes
\begin{eqnarray*}
  u_{01k}^\omega \cdot u_{(k-1)kk} =   u_{01k}^\omega \cdot w_{(k-1)kk} \ .
\end{eqnarray*}
By~(\ref{eq:vert-piece1}) and monochromaticity we have 
\begin{eqnarray*}
  w_{(k-1)kk} & \piece& u_{(k-1)k(k+1)}= u_{01k} \\
u_{(k-1)kk} & \piece & u_{(k-1)k(k+1)}=u_{01k}\ .
\end{eqnarray*}
Therefore identity~(\ref{eq:context-absorb}) can be applied to show
that both sides are equal to $u_{01k}^\omega$.  Note that we use only
one side of identity~(\ref{eq:context-absorb}), $u^\omega v=u^\omega$
. We would have used the other side when considering the case when
$s=qq_k\cdots q_1s'$.
\end{proof}

\subsection{An equivalent set  of identities}
\label{sec:an-equivalent-set}

\begin{figure*}[htbp]
\centering
  \includegraphics[scale=0.8]{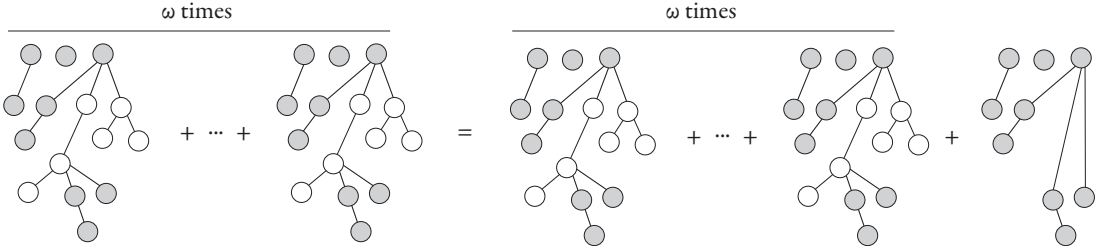}
  \caption{The identity     $\omega(vuh)=\omega(vuh) + vh$, with the white nodes belonging to
    $u$.}
  \label{fig:omega}
\end{figure*}

In this section, we rephrase the identities used in
Theorem~\ref{thm:main}.  There are two reasons to rephrase the identities.

The first reason is that identity~(\ref{eq:context-absorb}) refers to
the relation $v \piece w$. One consequence is that we need to prove
Corollary~\ref{cor:decidable} before concluding that
identity~(\ref{eq:context-absorb}) can be checked effectively.

The second reason is that we want to pinpoint how
identity~(\ref{eq:context-absorb}) diverges from $\mathcal J$-triviality of the
context monoid $V$. Consider the forest language ``all trees in the forest are
of the form $aa$''. It is easy to verify that the syntactic forest algebra of
this language is such that $V$ is $\mathcal J$-trivial. But this language is
not piecewise testable, since for any $k>0,$ the forests $k\cdot aa$ and
$k\cdot aa +a$ contain the same pieces of size at most $k,$ but the first of these
forests is in the language, while the second is not.

The proposition below identifies an additional
condition (depicted in Figure~\ref{fig:omega}) that must be added to $\mathcal J$-triviality.

\begin{prop}\label{prop:other-eq}
  Identity~(\ref{eq:context-absorb}) is equivalent to $\mathcal
  J$-triviality of $V$, and the identity
\begin{equation}\label{eq:new-eq}
  vh+  \omega\cdot vuh  =   \omega \cdot vuh = \omega \cdot vuh + vh
\end{equation}  
\end{prop}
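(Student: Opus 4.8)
The plan is to prove the two implications separately, using Lemma~\ref{lemma:equivalent-def} (so that $\piece$ on $V$ can be handled through the inductive rules defining $R$) and the algebraic fact shown above, namely that a finite monoid is $\mathcal J$-trivial if and only if it satisfies $n^\omega m=n^\omega=mn^\omega$ whenever $m\wordpiece n$. For the forward implication, assume identity~(\ref{eq:context-absorb}). First I would observe that $\wordpiece$ on the monoid $V$ is contained in $\piece$: if $m\wordpiece n$ with $n=n_1\cdots n_{2k+1}$ and $m=n_2n_4\cdots n_{2k}$, then the rules $\hole\ R\ n_{2i+1}$ together with closure under composition give $m\ R\ n$. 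Hence~(\ref{eq:context-absorb}) yields $n^\omega m=n^\omega=mn^\omega$ for all $m\wordpiece n$, which is exactly~(\ref{eq:monoid-context-absorb}) and so forces $V$ to be $\mathcal J$-trivial. To obtain~(\ref{eq:new-eq}), fix $v,u\in V$ and $h\in H$: from $v\ R\ v$, $\hole\ R\ u$ and the composition rule we get $v(h+\hole)\ R\ vu(h+\hole)$, and rule five then gives $vh+\hole\piece vuh+\hole$. Applying~(\ref{eq:context-absorb}) to this pair and evaluating the three resulting contexts at $0\in H$, using the easy identity $(g+\hole)^\omega=\omega\cdot g+\hole$, produces precisely the three terms of~(\ref{eq:new-eq}).

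For the converse, assume $V$ is $\mathcal J$-trivial and that~(\ref{eq:new-eq}) holds. The first step is an auxiliary identity on $H$: for all forests $g\piece g'$,
\begin{equation*}
 g+\omega\cdot g'\ =\ \omega\cdot g'\ =\ \omega\cdot g'+g.
\end{equation*}
I would prove this by induction on the number of single-node removals taking $g'$ to $g$. The base case is aperiodicity of $H$ (a submonoid of the aperiodic $V$). For one removal, the larger forest is $c(af)$ and the smaller is $c(f)$ for some context $c$, label $a$ and forest $f$, so, with $C=\alpha(c)$, $f'=\alpha(f)$ and $a$ identified with its image in $V$, the two types are $Caf'$ and $Cf'$. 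Instantiating the three variables of~(\ref{eq:new-eq}) by $C,a,f'$ gives $Cf'+\omega\cdot Caf'=\omega\cdot Caf'$ (and its mirror), and combining this with the induction hypothesis $Caf'+\omega\cdot g'=\omega\cdot g'$ (first iterated to absorb $\omega$ copies of $Caf'$) yields $Cf'+\omega\cdot g'=\omega\cdot g'$; the other equality is symmetric.

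The second step deduces~(\ref{eq:context-absorb}). Given $v\piece u$ in $V$, Lemma~\ref{lemma:equivalent-def} supplies a derivation of $v\ R\ u$, and I would induct on its length, proving $u^\omega v=u^\omega=vu^\omega$ at each stage. The axioms $\hole\ R\ v$ and $v\ R\ v$ are immediate (the latter from aperiodicity of $V$). For the composition rule $vw\ R\ v'w'$, one uses $v',w'\wordpiece v'w'$ and $\mathcal J$-triviality to get $(v'w')^\omega v'^\omega=(v'w')^\omega=(v'w')^\omega w'^\omega$ and $v'^\omega(v'w')^\omega=(v'w')^\omega=w'^\omega(v'w')^\omega$, after which the induction hypotheses $v'^\omega v=v'^\omega$, $w'^\omega w=w'^\omega$ (and their mirrors) collapse $(v'w')^\omega vw$ to $(v'w')^\omega$, and symmetrically on the other side. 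For rules four and five the pair is $\hole+v0\ R\ \hole+v'0$ (resp.\ $v0+\hole\ R\ v'0+\hole$) with $v\ R\ v'$; using $(\hole+g)^\omega=\hole+\omega\cdot g$ (resp.\ $(g+\hole)^\omega=\omega\cdot g+\hole$) the required equality reduces precisely to the auxiliary identity of step one applied to the forests $v0\piece v'0$.

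I expect the main obstacle to be the converse direction, and within it the first step: identity~(\ref{eq:new-eq}) only removes one sibling subtree at a time, so the auxiliary identity must be built up node by node, while simultaneously ``saturating'' with $\omega$ copies so that the induction hypothesis can be absorbed, and getting the order of these absorptions right is the delicate point. Spotting the substitution $vh+\hole\piece vuh+\hole$ in the forward direction is the other place where a small idea is needed.
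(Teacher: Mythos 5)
Your proposal is correct, and at its core it runs on the same engine as the paper's proof: the key lemma in both is the $H$-level absorption identity $g+\omega\cdot g'=\omega\cdot g'=\omega\cdot g'+g$ for forest pieces $g\piece g'$, proved by induction on single-node removals using~(\ref{eq:new-eq}) and aperiodicity, wrapped in an outer structural induction over the piece relation on $V$ whose vertical cases are discharged by $\mathcal J$-triviality. The packaging differs in a way worth noting. The paper inducts on the size of a free context $p$ representing $v$ (with $p\piece q$, $\alpha(q)=u$) and analyses the shape of $p$: its composition case is lighter than yours (it uses $p_1\piece q$ and $p_2\piece q$ against the same $u$, so only the induction hypothesis is needed there), it has a separate single-node case handled by~(\ref{eq:j-trivial}), and in its horizontal case $p=\hole+s$ it must first insert $(\alpha(\hole+t))^\omega$ into $u^\omega$ via~(\ref{eq:j-trivial}) before reducing to the $H$-identity. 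You instead induct on derivations in the rule system $R$ of Lemma~\ref{lemma:equivalent-def}; this makes the rule-four/five case a pure horizontal reduction through $(\hole+g)^\omega=\hole+\omega\cdot g$ with no insertion step, and the single-node case dissolves into the base and composition cases, but in exchange your composition case must split $u$ as $v'w'$ and use $\mathcal J$-triviality there via $(v'w')^\omega v'=(v'w')^\omega$ and $(v'w')^\omega w'=(v'w')^\omega$ --- which you carry out correctly. You also spell out the direction the paper calls obvious; the derivation of~(\ref{eq:new-eq}) from~(\ref{eq:context-absorb}) via the observation $vh+\hole\ \piece\ vuh+\hole$ and evaluation at $0$ is exactly right, and the word-piece argument gives $\mathcal J$-triviality as claimed.
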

\begin{proof}
  One implication is obvious: both $\mathcal J$-triviality
  and~(\ref{eq:new-eq}) follow from~(\ref{eq:context-absorb}). For the other
  implication, we assume $V $ is $\mathcal J$-trivial and that
  ~(\ref{eq:new-eq}) holds.  We must show that if $v \piece u$, then
\begin{eqnarray*}
  u^\omega v = u^\omega = v u^\omega \ .
\end{eqnarray*}
We will only show the first equality, the other is done the same way.
By unraveling the definition of $v \piece u$, there is a morphism
\begin{eqnarray*}
  \alpha : \freef \A \to (H,V)
\end{eqnarray*}
and two contexts $p \piece q$ over \A such that $\alpha(p)=v$ and
$\alpha(q)=u$.

The proof goes by induction on the size of $p$.

If $p$ can be decomposed as $p_1 p_2$ with $p_1,p_2$ nonempty, then we have
$p_1 \piece q$ and $p_2 \piece q$ and, by induction, $\alpha(q)^\omega \cdot
\alpha(p_1)=\alpha(q)^\omega$, $\alpha(q)^\omega \cdot
\alpha(p_2)=\alpha(q)^\omega$. Hence we get:
\begin{eqnarray*}
  \alpha(q)^\omega \cdot \alpha(p_1) \cdot \alpha(p_2) = \alpha(q)^\omega \cdot
  \alpha(p_2) = \alpha(q)^\omega\ .
\end{eqnarray*}

If $p$ consists of single node with a hole below, then we have $q=q_0
p q_1$ for some two contexts $q_0,q_1$, and therefore also $u=u_0 v
u_1$ for some $u_0,u_1$. The result then follows by $\mathcal J$-triviality of
$V$ (recall that $\mathcal J$-triviality implies identity~\eqref{eq:j-trivial}):
\begin{eqnarray*}
  u^\omega v = (u_0vu_1)^\omega v = (u_0vu_1)^\omega u_0 v = (u_0vu_1)^\omega = u^\omega \ .
\end{eqnarray*}
In the above, we used twice identity~\eqref{eq:j-trivial}: Once when adding $u_0$ to $u^\omega$, and then when
removing $u_0v$ from after $u^\omega$.

The interesting case is when $p=\hole + s$ for some tree $s$. In this
case, the context $q$ can be decomposed as $q_1(\hole+t)q_2$, with $s
\piece t$. We have
\begin{eqnarray*}
  u^\omega v = \alpha(q_1(\hole+t)q_2)^\omega \alpha(\hole+s)\ .
\end{eqnarray*}
 Thanks to identity~\eqref{eq:j-trivial}, the above can be rewritten as
 \begin{eqnarray*}
  u^\omega v =  \alpha(q_1(\hole+t)q_2)^\omega (\alpha(\hole+t))^\omega
  \alpha(\hole+s)\ .
 \end{eqnarray*}
 Notice now that 
\begin{eqnarray*}
(\alpha(\hole+t))^\omega \alpha(\hole+s) = ( \hole + \alpha(s) +  \omega \cdot
\alpha(t)  )\ .
 \end{eqnarray*}
It is therefore sufficient to show that $s \piece t$ implies
\begin{eqnarray*}
  \omega \cdot \alpha(t) =\alpha( s) +  \omega \cdot \alpha(t)\ .
\end{eqnarray*}
The proof of the above equality is by induction on the number of nodes
that need to be removed from $t$ to get $s$.  The base case $s=t$
follows by aperiodicity of $H$, which follows by aperiodicity of $V$,
itself a consequence of $\mathcal J$-triviality.  Consider now the case when
$t$ is bigger than $s$. In particular, we can remove a node from $t$
and still have $s$ as a piece. In other words, there is a
decomposition $t=q_0q_1 t'$ such that $s \piece q_0 t'$. Applying the
induction assumption, we get
\begin{eqnarray*}
  \omega \cdot \alpha(q_0 t') = \alpha(s) +  \omega \cdot \alpha( q_0 t') \ .
\end{eqnarray*}
Furthermore, applying identity~(\ref{eq:new-eq}), we get
\begin{eqnarray*}
  \omega \cdot \alpha(t) = \alpha(q_0 t') +  \omega \cdot \alpha(t)  = 
  \omega \cdot \alpha(q_0 t') + \omega \cdot \alpha(t)\ \ .
\end{eqnarray*}
Combining the two equalities, we get the desired result.
\end{proof}

\newcommand{\ccapiece}{\unlhd}
\section{Closest common ancestor}\label{section-cca}

According to the definition of piece in Section~\ref{sec:notation}, $t=d(a+b)$
is a piece of the forest $s=dc(a+b).$ In this section we consider a notion of
piece which does not allow removing the closest common ancestor of two nodes,
in particular removing the node $c$ in the example above. The logical
counterpart of this notion is a signature where the closest common ancestor
(a three argument predicate) is added.

Recall that in a forest $s$ we say that a node $z$ is the \emph{\lca} of the
nodes $x$ and $y$, denoted $z=x \sqcap y$, if $z$ is an ancestor of both $x$
and $y$ and all other nodes of $s$ with this property are ancestors of
$z$. Note that the ancestor relation can be defined in terms of the \lca, since
a node $x$ is an ancestor of $y$ if and only if $x$ is the \lca of $x$ and
$y$. We now say that a forest $s$ \emph{is a \lcapiece} of a forest $t$, and
write this as $s \ccapiece t$, if there is an injective mapping from nodes of
$s$ to nodes of $t$ that preserves the label of the node together with the forest-order and the \lca relationship
(the ancestor relationship is then necessarily preserved).  An equivalent
definition is that the \lcapiece relation is the reflexive transitive closure
of the relation
\begin{eqnarray*}
\set{  (pt, pat) : \mbox{$p$ is a context, $a$ is a node, $t$ is a \emph{tree} or empty}}
\end{eqnarray*}
Notice the difference with the notion of piece as defined in
Section~\ref{sec:notation}, where $t$ could be an arbitrary forest.
Similarly we say that a context $p$ is a \lcapiece of the context $q$, $p
\ccapiece q$, if there is an injective mapping from $p$ to $q$ as above that also preserves the hole.

A forest language $L$ is called \emph{\lcapiecewise testable} if there exists
$n >0$ such that membership of $t$ in $L$ depends only on the set of
{\lcapiece}s of $t$ of size $n$.

As before, every \lcapiecewise testable language is regular and an analogue of
Proposition~\ref{boolean_sigma_1} holds as well.

\begin{prop}\label{boolean_sigma_1-cca}
 A forest language is \lcapiecewise testable 
  iff it is definable by a Boolean combination of $\Sigma_1(\sqcap,\orderfo)$ formulas.
\end{prop}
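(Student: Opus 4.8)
The plan is to mimic the proof sketch already given for Proposition~\ref{boolean_sigma_1}, adapting it to the closest common ancestor signature. The two directions are: (1) every language of the form $\set{t : s \ccapiece t}$ is definable by a $\Sigma_1(\sqcap,\orderfo)$ sentence, and conversely (2) every $\Sigma_1(\sqcap,\orderfo)$-definable language is a finite Boolean combination of such basic languages. Since a \lcapiecewise testable language is by definition a Boolean combination of the basic languages $\set{t : s \ccapiece t}$ (as with the original piece relation, fixing $n$ and taking Boolean combinations over all {\lcapiece}s of size $n$ is the same thing), both implications together give the claim.

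For direction (1), I would argue as follows. Fix a forest $s$ with nodes $z_1 \orderfo \cdots \orderfo z_m$. A forest $t$ has $s$ as a \lcapiece iff there is an injective label-preserving map $z_i \mapsto x_i$ into $t$ that preserves $\orderfo$ and the $\sqcap$-relation. The key observation is that preserving $\sqcap$ on all \emph{pairs} of the $x_i$ is enough to force preservation of $\sqcap$ on the whole configuration, because $\sqcap$ is determined pairwise: $x_a \sqcap x_b$ equals one of the already-named nodes $x_c$ (namely, whichever $z_c$ equals $z_a \sqcap z_b$ in $s$). So the required sentence is
\begin{eqnarray*}
  \exists x_1 \cdots x_m \ \bigwedge_i P_{\mathrm{lab}(z_i)}(x_i) \ \land\ \bigwedge_{z_i \orderfo z_j} x_i \orderfo x_j \ \land\ \bigwedge_{z_i = z_a \sqcap z_b} x_i = x_a \sqcap x_b,
\end{eqnarray*}
where the last conjunction ranges over all triples with $z_i$ the \lca of $z_a$ and $z_b$ in $s$; I should also add, for each pair $z_a, z_b$ whose \lca is not among the $z_i$, a clause asserting that $x_a \sqcap x_b$ is a node distinct from all the $x_i$ (one can say ``$x_a \sqcap x_b$ is strictly above every relevant $x_i$'' using the ancestor relation, which is itself expressible via $\sqcap$ as noted in the text: $x < y$ iff $x = x \sqcap y$). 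This sentence is satisfied exactly by the forests having $s$ as a \lcapiece.

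For direction (2), given a $\Sigma_1(\sqcap,\orderfo)$ sentence $\exists x_1 \cdots x_n\ \gamma$, I disambiguate: there are finitely many atomic types describing, for each pair of variables, their relative $\orderfo$-position and, for each triple, the identity of the closest common ancestor (again finitely many possibilities, since $x_a \sqcap x_b$ is either one of the $x_c$ or a ``new'' node above certain of the $x_c$'s). Each complete consistent such type corresponds to a finite set of witness forests, and $\gamma$ holds under an assignment iff the induced type is one of those implying $\gamma$; hence the language is the finite union over the satisfying types of the languages ``$t$ contains this particular configuration as a \lcapiece'', which is a finite Boolean combination (in fact a finite union) of basic languages. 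I expect the main obstacle to be the bookkeeping in direction (1): making sure that demanding pairwise $\sqcap$-preservation together with the ``new ancestor'' clauses really does force a genuine \lcapiece embedding, i.e. that one cannot accidentally identify a ``new'' \lca node with a named node or with another new node in a way that violates the structure of $s$. This is handled by observing that in $s$ the set of nodes $\set{z_1,\ldots,z_m}$ together with all pairwise {\lca}s is exactly the \lcapiece-closure, so recording the $\sqcap$ of every pair (either naming it or bounding it) pins down the embedding up to the freedom that \lcapiece embeddings are allowed anyway.
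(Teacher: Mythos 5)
Your direction (1) is where the argument breaks, and it is exactly at the point you yourself flag as ``the main obstacle''. The paper's \lcapiece relation is (equivalently) generated by deleting leaves and nodes with a single child, so an embedding witnessing $s\ccapiece t$ must send nodes lying in different trees of $s$ to nodes with \emph{no} closest common ancestor in $t$, and may never ``skip'' the cca of two retained nodes. Already $s=a+b$ is fatal for a $\Sigma_1$ definition: $a+b\ccapiece a+b$, but $a+b\not\ccapiece c(a+b)$ (the binary node $c$ can never be deleted), while the substructure of $c(a+b)$ induced on its $a$-node and $b$-node is isomorphic, with respect to the label predicates, $\sqcap$ and $\orderfo$, to $a+b$; since existential sentences are preserved when passing from a substructure to a superstructure, no $\Sigma_1(\sqcap,\orderfo)$ sentence defines $\set{t: a+b\ccapiece t}$. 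Your proposed repair cannot be written down either: $\sqcap$ is a three-place \emph{predicate}, so to say anything about $x_a\sqcap x_b$ you must introduce a fresh existentially quantified witness, which asserts that the pair \emph{has} a cca, whereas for $z_a,z_b$ in different trees of $s$ the faithful embedding requires that $x_a\sqcap x_b$ not exist at all --- a universal, not existential, condition (and ``exists but is distinct from the named nodes'' is the wrong condition anyway). Your first displayed sentence, without the repair, defines only the weaker relation in which positive $\sqcap$-facts are preserved forward, and that is a genuinely different relation from $\ccapiece$.

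The same confusion undermines direction (2): a complete quantifier-free $(\sqcap,\orderfo)$-type cannot distinguish ``these two named incomparable nodes have an unnamed cca'' from ``they have none'', so the set of forests realizing a type is in general not a union of languages of the form $\set{t:s\ccapiece t}$ --- e.g.\ the type of an incomparable $a$-node/$b$-node pair is realized both in $a+b$ and in $c(a+b)$. This is precisely what makes the $(<,\orderfo)$ argument of Proposition~\ref{boolean_sigma_1} work and fail to transfer verbatim: there, the induced substructure on the named nodes literally \emph{is} the piece; here it is not. Nor is this mere bookkeeping: one can check that, for large $k$, the all-$a$ complete binary tree $T_k$ and the forest $T_{k-1}+T_{k-1}$ of its two root subtrees realize the same quantifier-free $(\sqcap,\orderfo)$-types on any fixed number of variables, although only the latter has $a+a$ as a \lcapiece; so the deletion-based piece relation cannot be matched to $\Sigma_1$ types (even up to Boolean combinations) without reinterpreting ``preserves the \lca relationship'' as an induced-substructure condition on the named nodes only. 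As it stands, both halves of your proof rest on identifying $\ccapiece$ with such induced-substructure embeddings, which is false for the relation the paper defines, so the proof has a genuine gap.
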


Recall that the ancestor relation can be expressed using the \lca relation
hence $\Sigma_1(\sqcap,\orderfo)$ could be replaced by
$\Sigma_1(\sqcap,\orderfo,<)$ in the statement of
Proposition~\ref{boolean_sigma_1-cca}.
A first remark is that there are more \lcapiecewise testable languages
than there are piecewise testable ones. Hence the identities that
characterize piecewise testable languages are no longer valid. In
particular, in the syntactic algebra of a \lcapiecewise testable
language, the context monoid $V$ may no longer be $\mathcal J$-trivial. To
see this consider the language $L$ of forests over $\{a,b,c\}$ that
contain the \lcapiece $a(b+c)$. This is the language ``some $a$ is the
\lca of some $b$ and $c$''.  Then, for all $n$, the context $p=(ab)^n\hole$ is
not the same as the context $q=(ab)^na\hole$ as $p(b+c) \not\in L$
while $q(b+c) \in L$. Hence the identity $(uv)^\omega=(uv)^\omega u$ does not
hold in the syntactic context monoid of $L$. However as we noted earlier, any $\mathcal
J$-trivial monoid satisfies this identity.
Note however that $p$ and $q$ satisfy the equivalence $pt \in L$ iff $qt \in L$
for all {\em trees} $t$.  The characterization below is a generalization of
this idea of distinguishing trees from forests.

We call a context a \emph{tree-context} if it is nonempty and has one node that
is the ancestor of all other nodes, including the hole.

In the presence of the \lca,  the algebraic situation is more complicated as well: \lcapiecewise testability of a forest language $L$ is not determined by the
syntactic forest algebra alone. To obtain an algebraic characterization of this class of languages,  it is necessary to look at the
\emph{syntactic morphism} $\alpha_L:\A^{\Delta}\to (H_L,V_L)$ that maps each
$(h,v)$ to its $\sim_L$-class, and not just the  the image of
this morphism. (We can be considerably more precise about this:  The distinction is that the \lcapiecewise testable languages
do not form a variety of languages in the sense described by Eilenberg~\cite{eilenbergB}.  In particular, this family of languages lacks
the crucial property of being closed under inverse images of morphisms between free forest algebras; this fails if the morphism maps some generator $a\hole$ to the empty
context, or to a context of the form $p+s,$ where $p$ is a context and $s$ is a nonempty forest.  However
\lcapiecewise testable languages satisfy all the other properties of varieties
of languages and in particular they are closed under  inverse images of
homomorphisms that are ``tree-preserving'', i.e., the image of $a\hole$ is a
tree-context $p$ for all $a$. Varieties of forest languages are discussed in~\cite{varieties-forest-algebra}.)

We extend the \lcapiece relation to elements of a forest algebra $(H,V)$ in the
presence of a morphism $\alpha:\A^{\Delta}\to (H,V)$ as follows: we write
$v \ccapiece w$ if there are contexts $p \ccapiece q$ that are mapped to $v$
and $w$ respectively by the morphism $\alpha$. There is a subtle difference
here with the definition of $\piece$ defined in Section~\ref{sec:notation}: the
$\ccapiece$ relation on $V$ depends on the morphism $\alpha$! Similarly we
define the notion of $g\ccapiece h$ for $g,h \in H$.

 The elements of $V$ that are images under the morphism
$\alpha$ of a tree-context are called tree-context-types.  Similarly, the
elements of $H$ that are images of a tree are called tree-types (it is possible
for an element to be an image of both a tree and a non-tree, but it is still
called a tree-type here). Note that the notions of tree-type and of
tree-context-type are relative to $\alpha$.

\begin{thm}\label{thm:main-lca}
  A forest language $L$ is \lcapiecewise testable if and only if its
  syntactic algebra and syntactic morphism satisfy the following identities:
\begin{equation}\label{eq:lca-absorb}
 u^\omega h = u^\omega v h = v u^\omega h 
\end{equation}
whenever $h$ is a tree-type or empty, and $v \ccapiece u$ are tree-context-types, and
\vspace{-.1cm}
\begin{equation}\label{eq:lca-forest-absorb}
 \omega \cdot h = \omega \cdot h + g =g+ \omega \cdot h\hspace{2cm}\text{if $g \ccapiece h$}
\end{equation}
\end{thm}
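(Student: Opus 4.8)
The plan is to follow the same two‑sided template that worked for Theorem~\ref{thm:main}, splitting into a correctness (``only if'') direction and a completeness (``if'') direction, but now everywhere replacing the piece relation and the notion of context by their \lca‑sensitive counterparts, and carefully tracking the distinction between trees/tree‑contexts and arbitrary forests/contexts. For the correctness direction, I would fix a language $L$ that is \lcapiecewise testable with threshold $n$, take contexts $p \ccapiece q$ with $\alpha(p)=v$, $\alpha(q)=u$ (where $p,q$ are in fact tree‑contexts since $v,u$ are tree‑context‑types), and a tree $t$ (or empty) with $\alpha(t)=h$. The key observation is the \lca‑analogue of Fact~\ref{fact:descendant-obvious}: if $r$ is a tree‑context, $p\ccapiece q$ are tree‑contexts, and $t$ is a tree, then $rpt \ccapiece rqt$; crucially this stays within the world of trees so that no \lca gets destroyed. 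Then the pumping argument is identical: $rq^{ik}t \ccapiece rq^{ik}pt \ccapiece rq^{(i+1)k}t$ for all $i$, the set of {\lcapiece}s of bounded size stabilizes, and one reads off $u^\omega v h = u^\omega h$; the symmetric computation with $r$ on the other side gives $vu^\omega h = u^\omega h$. For identity~(\ref{eq:lca-forest-absorb}), the same pumping with $g\ccapiece h$ and a forest context built from repeated additions gives $\omega\cdot h + g = \omega\cdot h = g + \omega\cdot h$; here there is no tree restriction because horizontal composition never creates or destroys an \lca.

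For the completeness direction, I would mirror the structure of Section~\ref{sec:compl-equat}. First I would re‑prove the analogues of Lemma~\ref{lemma:eilenberg} and Proposition~\ref{lemma:remover} for $\ccapiece$ and $\sim_n^{\ccapiece}$ (the equivalence of having the same {\lcapiece}s of size $\le n$); Fact~\ref{fact:pumping} goes through verbatim since it only used the existence of either many siblings or a long chain, and the quotient algebra argument applies to the \lca‑quotient. The reduction ``$L$ is \lcapiecewise testable if removing a letter without changing the $\sim_n^{\ccapiece}$‑class preserves the type'' is the same. The content then lies in the \lca‑analogue of Proposition~\ref{lemma:remover}: if $pat \sim_n^{\ccapiece} pt$ (where now, by definition of $\ccapiece$, removing the node $a$ means $t$ is a \emph{tree} or empty, i.e.\ $a$ is not a common ancestor of two surviving siblings), then $\alpha(pat)=\alpha(pt)$. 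Here I would reuse the fractal machinery: define a $vah$‑decomposition exactly as before, extract a long fractal inside $pat$ from $\sim_n^{\ccapiece}$‑equivalence, make it tame via Claim~\ref{claim:decompose} and Erd\H{o}s--Szekeres, make it monochromatic via Ramsey, and conclude. The final algebraic step splits according to the two shapes of tame fractal. When the contexts $q_i$ form a genuine chain (the $x_i$ lie on a root‑to‑hole path), the relevant $u_{ijl}$ and $w_{ijl}$ are tree‑context‑types, $w\ccapiece u$ as tree‑context‑types, and $h$ is a tree‑type, so identity~(\ref{eq:lca-absorb}) applies to kill the $a$. When instead some $q_i$ is of the form $\hole + t_i$ (the horizontal case), the corresponding collapse is governed by identity~(\ref{eq:lca-forest-absorb}) on the $H$‑side, after first peeling off the common ancestor and reducing to an additive identity $\omega\cdot h = \omega\cdot h + g$ with $g\ccapiece h$, in the style of the proof of Proposition~\ref{prop:other-eq}.

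The main obstacle I anticipate is getting the tame‑fractal case analysis exactly right so that every application of an identity is legitimate under the \lca‑restrictions. In Theorem~\ref{thm:main} we only ever needed one side of~(\ref{eq:context-absorb}) because the tame fractal came in two mirror shapes; here the shapes are genuinely different (vertical chain versus horizontal fan), and they are handled by \emph{different} identities~(\ref{eq:lca-absorb}) and~(\ref{eq:lca-forest-absorb}), so I must be sure the Ramsey/Erd\H{o}s--Szekeres extraction can always be steered into one of these two pure shapes, and that in the vertical case the contexts obtained really are tree‑contexts and the forest $h$ really is a tree‑type — this is where the hypothesis ``$t$ is a tree'' in the refined definition of $\ccapiece$ is essential and must be threaded through the whole fractal construction. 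A secondary subtlety is that $\ccapiece$ on $V$ depends on the chosen morphism $\alpha$, unlike $\piece$; so throughout the completeness proof I must work with the syntactic morphism $\alpha_L$ fixed, and in the correctness proof verify that the identities are stated (as they are) relative to $\alpha_L$. Once these bookkeeping points are settled, each individual step is a routine adaptation of the corresponding step in Sections~\ref{sec:corr-equat}--\ref{sec:an-equivalent-set}.
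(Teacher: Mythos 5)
Your plan follows the paper's architecture (pumping for necessity; fractals, tameness, Ramsey, then the two identities for sufficiency), but there is a genuine gap at the heart of the completeness direction. You propose to ``define a $vah$-decomposition exactly as before'' and to obtain tameness ``via Claim~\ref{claim:decompose} and Erd\H{o}s--Szekeres''; as stated this fails. With the old definitions the restriction maps only give $w_{ijl}\piece u_{ijl}$ and $u_{ijl}\piece u_{ij(l+1)}$, not the $\ccapiece$ relations, the final factor $h=\alpha(q_m[X_k]s'[X_k])$ need not be a tree-type, and in the vertical case the $u_{ijl},w_{ijl}$ need not be tree-context-types -- so neither identity~(\ref{eq:lca-absorb}) nor~(\ref{eq:lca-forest-absorb}) can legitimately be applied at the last step. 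The paper resolves this by changing the combinatorics, not just the bookkeeping: the notion of $vah$-decomposition is redefined so that $s[X]$ is a \lcapiece of $s$ (hence each $X_i$ is closed under \lcas of its elements) and the part of $X$ strictly below $x$ is empty or a tree; and the tameness lemma (Lemma~\ref{lemma:lca-decompose}) is reproved rather than reused: one passes to the set $Y$ of \lcas of the selected nodes $x_i$, distinguishes the case where all $x_i$ share one \lca (giving contexts $\hole+t_i$) from the case where $Y$ contains a long chain $y_1<\cdots<y_m$, uses the \lca-closure of the $X_i$ to conclude $y_i\in X_{i+1}$, and then keeps only every other index (taking $m=2k+2$ and the subfractal $x_2\in X_2,\ldots,x_{2k}\in X_{2k}$, with $q_i$ rooted at $y_{2i-1}$ and its hole at $y_{2i+1}$) so that each $q_i$ is a tree-context whose root lies in $X_i\setminus\set{x_i}$. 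These refinements are precisely what yield~(\ref{eq:lca-prop-decompose1}) and~(\ref{eq:lca-prop-decompose2}) and make the two-case application of the identities legitimate. You correctly flag the tree-context/tree-type side conditions as the anticipated obstacle, but your outline leaves exactly this, the substantive new content of the proof, unresolved.

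Two smaller points. In the correctness direction you state the insertion fact with $r$ a tree-context; to derive an identity in the syntactic algebra you must compare $rq^kpt$ with $rq^kt$ under \emph{all} contexts $r$, so you need the fact for arbitrary $r$ (as in Fact~\ref{fact:lca-descendant-obvious}); fortunately it holds in that generality because $pt$ and $qt$ are trees, so this is only a slip. The remaining steps you describe -- the reduction to single-node removals with a tree or empty part below, the fractal-existence lemma, Ramsey, and the split into the vertical case handled by~(\ref{eq:lca-absorb}) and the horizontal case handled by~(\ref{eq:lca-forest-absorb}) -- do match the paper's proof.
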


Because of the finiteness of the syntactic forest algebra $(H_L,V_L)$ one can
effectively decide whether an element of one of these monoids is the image of a
tree-context or of a tree.  Whether or not $v \ccapiece u$ or $g \ccapiece h$
holds can be decided in polynomial time using an algorithm as in
Corollary~\ref{cor:decidable} based on the following equivalent definition of
$\ccapiece$: Let $(H,V)$ be a forest algebra and $\alpha$ a surjective morphism
from $\A^{\Delta} \to (H,V)$. Let then $R$ be the
smallest relation on $V$ that satisfies the following rules, for all $v,v',w,w'
\in V$:
\begin{eqnarray*}
  \begin{array}{rcll}
      \hole & R &  v\\
      \alpha(a)v & R & \alpha(a)v' \qquad & \mbox{ if $v\ R\ v'$}\\
  vw & R & v'w' \qquad & \mbox{ if $v\ R\ v'$ and $w\ R\ w'$ and $w,w'$ are
    tree-context-types}\\
  vw & R & v'w' \qquad & \mbox{ if $v\ R\ v'$ and $w\ R\ w'$ and $v,v'$
    are of the form $(s+\hole+t)$}\\
  \hole + v0 & R & \hole +v'0 & \mbox{ if $v\ R\ v'$}\\
   v0 + \hole & R & v'0 + \hole  & \mbox{ if $v\ R\ v'$}
  \end{array}
\end{eqnarray*}

\begin{lemma}\label{lemma:equivalent-def-cca}
  For any finite $(H,V)$ and surjective morphism $\alpha$, the relations $R$ and
  $\ccapiece$ are the same.
\end{lemma}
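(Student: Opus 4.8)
The plan is to mirror the proof of Lemma~\ref{lemma:equivalent-def}, adapting each step to the extra bookkeeping that the \lca setting requires, namely that $\ccapiece$ is only defined relative to the fixed surjective morphism $\alpha$ and that the inductive decompositions of contexts must respect tree-context boundaries. First I would prove the inclusion $R \subseteq \ccapiece$ by induction on the number of rule applications used to derive $v\ R\ w$. The base case $\hole\ R\ v$ uses surjectivity of $\alpha$ to pick a context $q$ with $\alpha(q)=v$; then $\hole\ccapiece q$ because one can always delete all non-hole nodes of $q$ one tree at a time (this is legitimate in the $\ccapiece$ relation, whose generating step removes a node whose descendant-part is a \emph{tree}, but iterated deletion from the leaves upward handles arbitrary removals as in the forest case). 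For the inductive step I would check each of the five generating rules of $R$ in turn, producing witnessing contexts $p\ccapiece q$ over $\A$ from the witnesses supplied by the induction hypothesis: the rule $\alpha(a)v\ R\ \alpha(a)v'$ corresponds to prepending a common single-letter root; the rule $vw\ R\ v'w'$ with $w,w'$ tree-context-types corresponds to composition where the lower piece is a tree-context, which is exactly the operation allowed when forming \lcapieces of contexts; the rule with $v,v'$ of the form $(s+\hole+t)$ corresponds to composition where the upper piece is a pure horizontal context; and the last two rules handle the two ways of wrapping a context inside a $\hole+(\cdot)0$ or $(\cdot)0+\hole$. In each case the relevant closure property of $\ccapiece$ under these syntactic operations on contexts gives $p\ccapiece q$.

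For the converse $\ccapiece \subseteq R$, I would fix contexts $p\ccapiece q$ with $\alpha(p)=v$, $\alpha(q)=w$ and induct on $|p|$. If $p=\hole$, apply the first rule of $R$. If $p$ is a single-letter root above a hole, say $p=a\hole$, then $p\ccapiece q$ forces $q$ to have the form $q_0 (a q_1)$ where $a$ labels an ancestor of the hole — more precisely $q=q_0\,a\,q_1$ with $a q_1$ a tree-context — and from $\hole\ R\ \alpha(q_0)$, $\hole\ R\ \alpha(q_1)$ together with the second and third rules (the third applied with the tree-context-type $\alpha(a q_1)$) we get $\alpha(a\hole)\ R\ \alpha(q)$. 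If $p$ decomposes nontrivially as $p=p_1p_2$, I would split into the case where $p_2$ is a tree-context (so $p_1\ccapiece q_1$, $p_2\ccapiece q_2$ with $q=q_1q_2$ and $q_2$ a tree-context, use rule three) and the case where $p_1$ is of the form $s+\hole+t$ (use rule four), noting that every nonempty context is, up to such a decomposition, either a tree-context, a horizontal context, or a composite of a horizontal context with something strictly smaller. Finally, if $p=s+\hole$ or $p=\hole+s$ with $s$ a tree $a(p'0)$, then $\ccapiece$ forces $q$ to have a matching shape $q_0(a(q'0)+q_1)$ (or the mirror image) with $p'\ccapiece q'$ and, crucially, with the whole surrounding structure also being a horizontal context so that removing nothing across the $\hole$ is forced; by induction $\alpha(p')\ R\ \alpha(q')$, and rules three through six assemble $\alpha(p)\ R\ \alpha(q)$.

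The main obstacle I anticipate is getting the case analysis in the converse direction exactly right, because in the \lca setting the five $R$-rules are deliberately more restrictive than their counterparts in Lemma~\ref{lemma:equivalent-def}: a composition $vw$ may only be broken up by $R$ when the lower factor is a tree-context-type or the upper factor is purely horizontal, so I must argue that \emph{every} way a context $p$ can be a \lcapiece of $q$ factors through one of these two admissible shapes of composition. This amounts to a structural lemma about contexts: any nonempty context is either a tree-context, or has a unique maximal ``horizontal prefix'' $s+\hole+t$ with the remainder strictly smaller, and the \lcapiece relation is compatible with this decomposition on both sides. Establishing that compatibility — in particular that a \lcapiece cannot ``cross'' a tree-context boundary in a way unaccounted for by the rules — is where the real content lies; once it is in hand, the inductions on both sides are routine, just as in the proof of Lemma~\ref{lemma:equivalent-def}. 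I would also double-check that surjectivity of $\alpha$ is genuinely needed (for the $\hole\ R\ v$ base case) and that, unlike in the forest case, one cannot drop the dependence on $\alpha$, consistent with the remark preceding the lemma that $\ccapiece$ on $V$ is morphism-dependent.
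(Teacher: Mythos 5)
Your overall strategy does mirror the paper's, and the easy direction $R\subseteq\ccapiece$ goes through essentially as you describe. The gap is in the converse direction: your case analysis is incomplete, and the structural claim you rest it on --- that every instance of $p\ccapiece q$ must ``factor through one of these two admissible shapes of composition'' (lower factor a tree-context, or upper factor of the form $s+\hole+t$) --- is false. Take $p=a(t+\hole)$ with $t$ a nonempty tree (more generally $p=a(s_1+\hole+s_2)$ with $s_1+s_2$ nonempty). This $p$ is not $\hole$, not $a\hole$, and not of the form $s+\hole$ or $\hole+s$; and its only nontrivial factorization is $p=(a\hole)(t+\hole)$, where the lower factor $t+\hole$ is not a tree-context and the upper factor $a\hole$ is not horizontal. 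So none of your cases yields a derivation of $\alpha(p)\ R\ \alpha(q)$, and no reorganization of the two composition rules alone can, since any derivation whose last step is rule three or four would need the induction hypothesis applied to $p$ itself.

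The missing idea is the paper's case ``$p=p_1ap_2$'' with $p_1,p_2$ allowed to be empty: whenever the hole of $p$ has a proper ancestor labelled $a$, apply the induction hypothesis to $p_2$ (which is strictly smaller because the letter $a$ has been removed, even when $p_1=\hole$), lift it with the \emph{second} rule to get $\alpha(ap_2)\ R\ \alpha(aq_2)$, and only then compose by rule three, which is legitimate because $ap_2$ and $aq_2$ are tree-contexts. In other words, rule two is a third mechanism, besides the two composition rules, through which $\ccapiece$ factors, and it is precisely what handles tree-contexts whose part below the root is not itself a tree-context. The same omission appears in your last case: from $\alpha(p')\ R\ \alpha(q')$ you cannot reach $\alpha(ap')\ R\ \alpha(aq')$ using ``rules three through six'' (rule three needs tree-context-types below, rule four a horizontal upper factor); rule two is needed there too. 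Once rule two is put in its proper place, the induction closes essentially as in the paper.
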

\proof
  We first show the inclusion of $R$ in $\ccapiece$. A simple induction on the
  number of steps used to derive $v \ R \ w$, produces contexts $p \ccapiece q$
  with $\alpha(p)=v$ and $\alpha(q)=w$. Moreover $p$ ($q$) is a
  tree-context whenever $u$ ($v$) is a tree-context-type. The surjectivity of
  $\alpha$ is necessary for starting the induction in the case $\hole\ R\ v$.

  For the inclusion of $\ccapiece$ in $R$, we show that $\alpha(p)\ R \
  \alpha(q)$ holds for all contexts $p \ccapiece q$. The proof is by
  induction on the size of $p$:
  \begin{iteMize}{$\bullet$}
  \item If $p$ is the empty context, then the result follows thanks to
    the first rule in the definition of $R$.  If $p=a\hole$ then from $p
    \ccapiece q$ it follows that $q=q_1aq_2$ for some contexts $q_1,q_2$ and using
    the first and second rule in the definition of $R$ we get that
    $\hole\ R\ \alpha(q_1)$, $\hole\ R\ \alpha(q_2)$, and $\alpha(a) R
    \alpha(a)\alpha(q_2)$. Hence using the third rule in the definition of $R$
    we get the desired result by composition.
  \item If there is a decomposition $p=p_1 a p_2$ where $p_1,p_2$ are contexts,
    then from $p\ccapiece q$ there must be a decomposition $q=q_1 a q_2$ with
    $p_1 \ccapiece q_1$ and $p_2 \ccapiece q_2$.
    By induction we get that $\alpha(p_1)\ R\ \alpha(q_1)$ and $\alpha(p_2)\
    R\ \alpha(q_2)$. Applying the second rule to the latter we get that $\alpha(ap_2)\
    R\ \alpha(aq_2)$. We can now apply the third rule to derive $\alpha(p)\ R\
    \alpha(q)$. 
  \item If there is a decomposition $p=p_1p_2$ where $p_1,p_2$ are non empty
    contexts and $p_1$ is of the form $(s+\hole + t)$, then from $p\ccapiece q$
    there must be a decomposition $q=q_1 q_2$ with $p_1 \ccapiece q_1$ and $p_2
    \ccapiece q_2$ and where $q_1$ is of the form $(s'+\hole+t')$. We conclude
    by induction and using the fourth rule in the definition of $R$.
  \item The remaining case is when $p=(t+\hole)$ (or $p=\hole+t$) where $t$ is
    a tree of the form $ap'0$ for some context $p'$. Then from $p\ccapiece q$
    we have $q=aq'0+q_1$ for some contexts $q_1,q'$, with $p' \ccapiece
    q'$. By induction we have $\alpha(p')\ R\ \alpha(q')$. Using the second
    rule we get $\alpha(ap')\ R\ \alpha(aq')$. Using the last rule
    we get $\alpha(p)\ R\ \alpha(aq'0+\hole)$. By the first rule we
    have $\hole\ R\ \alpha(q_1)$. We conclude using the
    fourth rule.\qed
\end{iteMize}

\noindent This implies that Theorem~\ref{thm:main-lca} yields a decidable
characterization  of the \lcapiecewise testable languages.

\begin{cor}
  It is decidable if a regular forest language is \lcapiecewise testable.
\end{cor}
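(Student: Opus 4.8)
The plan is to read Theorem~\ref{thm:main-lca} as an algorithm. Given any reasonable representation of $L$, one first computes its syntactic forest algebra $(H_L,V_L)$ together with the syntactic morphism $\alpha_L$; as elsewhere in the paper we may assume $L$ is presented by a morphism into a finite recognising forest algebra, from which $(H_L,V_L)$ and $\alpha_L$ are obtained in polynomial time. It then remains to check the two families of identities~(\ref{eq:lca-absorb}) and~(\ref{eq:lca-forest-absorb}). The only point that needs care is that, unlike the plain piece relation $\piece$, the relation $\ccapiece$ and the notions of tree-type and tree-context-type all refer to $\alpha_L$ and not merely to the abstract algebra, so the algorithm must keep the morphism at hand.

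Three ingredients make the check effective. First, the tree-context-types and tree-types are easy to list: every tree-context over $\A$ has the form $a q$ for a letter $a$ and a context $q$, and every tree has the form $a s$ for a letter $a$ and a forest $s$, so by surjectivity of $\alpha_L$
\begin{eqnarray*}
  \{\text{tree-context-types}\} & = & \{\,\alpha_L(a\hole)\cdot v \;:\; a \in \A,\ v \in V_L\,\}\ ,\\
  \{\text{tree-types}\} & = & \{\,\alpha_L(a\hole)\cdot h \;:\; a\in\A,\ h\in H_L\,\}\ .
\end{eqnarray*}
Second, the relation $\ccapiece$ on $V_L$ --- and the induced relation on $H_L$, read off through $v\mapsto v0$ as in its definition --- is computed by the fixpoint relation $R$ of Lemma~\ref{lemma:equivalent-def-cca}: one starts from the rules $\hole\ R\ v$ and saturates, which terminates in time polynomial in $|V_L|$, exactly as in Corollary~\ref{cor:decidable}; the dependence on the morphism is harmless because the rules for $R$ already mention the generator images $\alpha_L(a)$. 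Third, the exponent $\omega$ and hence the powers $u^\omega$ and $\omega\cdot h$ are computed in the standard way, by iterating a single element until its powers cycle and taking the idempotent on the cycle.

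With these in place, the algorithm enumerates: for every triple $u,v,h$ such that $u,v$ are tree-context-types, $v \ccapiece u$, and $h$ is a tree-type or $h=0$, it verifies~(\ref{eq:lca-absorb}); and for every pair $g \ccapiece h$ it verifies~(\ref{eq:lca-forest-absorb}). The whole procedure is polynomial in $|H_L|+|V_L|$, and by Theorem~\ref{thm:main-lca} it returns ``yes'' precisely when $L$ is \lcapiecewise testable.

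I expect the only genuinely delicate point to be the one already flagged before the statement of the theorem: \lcapiecewise testability is not determined by the syntactic forest algebra alone, so the identities must be checked against the syntactic morphism $\alpha_L$ itself. Running the same test on an arbitrary recognising morphism could give the wrong answer --- for instance if it collapses some $a\hole$ to the empty context, or to a context of the form $p+s$ with $s$ a nonempty forest --- so the reduction to Theorem~\ref{thm:main-lca} has to go through the syntactic morphism. Beyond that, the corollary is a routine combination of Theorem~\ref{thm:main-lca}, Lemma~\ref{lemma:equivalent-def-cca}, and the same idempotent-power computation already used to decide plain piecewise testability.
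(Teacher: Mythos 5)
Your proposal is correct and follows essentially the same route as the paper: reduce to Theorem~\ref{thm:main-lca}, compute the relation $\ccapiece$ by the fixpoint characterization of Lemma~\ref{lemma:equivalent-def-cca}, decide which elements are tree-types and tree-context-types, and check identities~(\ref{eq:lca-absorb}) and~(\ref{eq:lca-forest-absorb}) against the syntactic morphism (the paper likewise stresses that the morphism, not just the algebra, is what matters). Your explicit enumeration of tree-context-types and tree-types as $\alpha_L(a\hole)v$ and $\alpha_L(a\hole)h$ via surjectivity is a clean, correct way to make precise what the paper attributes simply to finiteness.
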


The proof of Theorem~\ref{thm:main-lca} follows the same outline as that of the proof of
Theorem~\ref{thm:main}, but the details are somewhat complicated. 

\subsection{Proof of Theorem~\ref{thm:main-lca}}

The proof that~(\ref{eq:lca-absorb}) and~(\ref{eq:lca-forest-absorb})
are necessary is the same as Section~\ref{sec:corr-equat}. The only
difference is that instead of Fact~\ref{fact:descendant-obvious}, we
use the following.
\begin{fact}\label{fact:lca-descendant-obvious}
  If $r$ is any context, $p \ccapiece q$ are tree-contexts, and $t$ is a
  tree or empty, then $rpt \ccapiece rqt$.
\end{fact}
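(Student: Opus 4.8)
The plan is to follow the (easy) proof of Fact~\ref{fact:descendant-obvious}, taking care at the single point where the \lca imposes the extra hypotheses. Unfolding the definition of $\ccapiece$, fix an injective map $f$ from the nodes of $p$ to the nodes of $q$ that preserves labels, the forest-order, the hole, and the \lca relation. Since $p$ and $q$ are tree-contexts, every two nodes of $p$, resp.\ of $q$, do have a common ancestor, so there is no separate ``undefined \lca'' case to handle inside $p$ or inside $q$. The nodes of $rpt$ split into a copy of the nodes of $r$, a copy of the non-hole nodes of $p$, and a copy of the nodes of $t$; likewise for $rqt$ with $q$ in place of $p$. I define $g$ from $rpt$ to $rqt$ to be the identity on the copy of $r$, the identity on the copy of $t$, and to agree with $f$ on the copy of $p$; this is well defined into $rqt$ because $f$ sends the hole of $p$ to the hole of $q$, it is injective because the three target blocks are disjoint and each of the partial maps is, and it preserves labels. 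Preservation of the forest-order is routine bookkeeping: substituting a forest into the hole of a context leaves the depth-first order of the surrounding nodes untouched, and $f$ handles the forest-order internal to $p$. So the only real work is to show that $g$ preserves the \lca.

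Let $x,y$ be nodes of $rpt$ with \lca $z$; I argue according to which of the three blocks contains $z$. The key structural remarks are: (i) since $t$ is a tree (or empty), its root is an ancestor in $rpt$ of every node of $t$, so an \lca landing in the $t$-block can only be the \lca of two $t$-nodes, computed entirely inside $t$; and (ii) since $p$ is a tree-context, any two nodes of the $p$-block, and any node of the $p$-block together with the former hole of $p$, have a common ancestor inside $p$, so an \lca of a $p$-node with another $p$- or $t$-node lands inside the $p$-block. Granting (i) and (ii): if $z$ lies in the $r$-block, then each of $x,y$ that is not an $r$-node behaves, for the \lca computation, like the hole of $r$, so $z$ is an \lca computed purely inside $r$; since $g$ is the identity on $r$ and plugging $p$ or $q$ into $r$'s hole does not affect ancestry among $r$-nodes, the same $z$ comes out for $g(x),g(y)$ in $rqt$. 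If $z$ lies in the $t$-block, then by (i) both $x,y$ are $t$-nodes, $z = x\sqcap y$ inside $t$, and $g$ is the identity there while $t$ sits the same way inside $rqt$. If $z$ lies in the $p$-block, then by (ii) we may write $z = x'\sqcap y'$ inside $p$, where $x'$ is $x$ if $x$ is a $p$-node and the hole of $p$ if $x$ is a $t$-node, and similarly for $y'$; applying $f$, which preserves $\sqcap$ and the hole, gives $f(z) = f(x')\sqcap f(y')$ inside $q$, and one checks this is exactly the \lca of $g(x),g(y)$ in $rqt$ (using (ii) for $q$ and that plugging the tree $t$ into the hole of $q$ does not change ancestry among $q$-nodes). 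The ``no common ancestor'' case is symmetric: by (i) and (ii), two nodes of $rpt$ fail to have a common ancestor only because of the $r$-part, which $g$ copies identically. Hence $g$ is a \lca-piece embedding, so $rpt\ccapiece rqt$.

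I expect the main obstacle to be precisely this \lca-preservation step, and within it the need to isolate remarks (i) and (ii): these are exactly the places where ``$t$ is a tree'' and ``$p$ (hence also $q$) is a tree-context'' are used, and they reflect the fact --- visible already from the failure of the analogous statement for an arbitrary forest $t$ --- that the \lca of two nodes must not be allowed to ``escape upward'' out of the piece being inserted. Once (i) and (ii) are in place, the rest is the same routine shuffling of node sets as in Fact~\ref{fact:descendant-obvious}.
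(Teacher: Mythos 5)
Your proof is correct; the paper states this Fact without proof (just as it does Fact~\ref{fact:descendant-obvious}), and your block-by-block embedding argument --- identity on the $r$- and $t$-parts, the given embedding on the $p$-part --- together with observations (i) and (ii) is precisely the routine verification being left to the reader, and it correctly isolates the two places where the hypotheses ``$t$ is a tree or empty'' and ``$p,q$ are tree-contexts'' are needed. One cosmetic remark: that $q$ is a tree-context is part of the hypothesis rather than a consequence of $p$ being one (your parenthetical ``$p$ (hence also $q$)'' suggests an implication that does not hold in general, e.g.\ $a\hole \ccapiece a\hole + b$), but nothing in your argument relies on that inference.
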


We now turn to the completeness proof in Theorem~\ref{thm:main-lca}.
The proof is very similar to the one of the previous section, with
some subtle differences.

As before, we fix a language $L$ whose syntactic forest tree algebra
$(H,V)$ satisfies all the identities of
Theorem~\ref{thm:main-lca}. We write $\alpha$ for the syntactic
morphism.

We now write $s \sim_n t$ if the two forests $s,t$ have the same \lcapieces of
size $n$. Likewise for contexts.

The main step is to show the following proposition.

\begin{prop}\label{prop:remover-lca}
 For $n$ sufficiently large, if $t$ is a tree or empty, then $pat
 \sim_n pt$ entails $\alpha(pat)=\alpha(pt)$.
\end{prop}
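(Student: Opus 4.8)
The plan is to mimic the completeness argument for Theorem~\ref{thm:main}, adapting each ingredient to the \lca setting. Recall that in Section~\ref{sec:compl-equat}, Proposition~\ref{lemma:remover} was derived by: (i) using a pumping fact to extract small pieces (Fact~\ref{fact:pumping}), (ii) building long \emph{fractals} of $vah$-decompositions out of the hypothesis $pat\sim_n pt$ (Lemma~\ref{lemma-fractal}), (iii) making a fractal \emph{tame} via the Erd\H{o}s--Szekeres monotone-subsequence argument (Lemma~\ref{lemma:decompose}), (iv) making a tame fractal \emph{monochromatic} via Ramsey (the coloring $\{i,j,l\}\mapsto u_{ijl}$), and finally (v) applying identity~(\ref{eq:context-absorb}) to collapse $u_{01k}^\omega\cdot u_{(k-1)kk}=u_{01k}^\omega\cdot w_{(k-1)kk}$. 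My plan is to re-run exactly this pipeline, but now with $\sim_n$ meaning ``same \lcapieces of size $n$'', with $t$ restricted to be a tree or empty throughout, and with all contexts appearing as ``removable parts'' being \emph{tree-contexts} so that \lcapieces (not just pieces) are preserved.

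First I would restate and re-prove the pumping fact, Fact~\ref{fact:pumping}, in the \lca setting: every forest in a regular language $K$ has a \lcapiece in $K$ of bounded size with the same \lcapieces of size $n$. The combinatorial core still works --- a large forest either has many siblings at some node of the \lca-closure $Y$, or a long chain --- but the decomposition $s=q_0q_1\cdots q_m s'$ must now be chosen so that each $q_i$ is either a tree-context along a chain or of the form $t_i+\hole$ with $t_i$ a tree; deleting a repeated block in the $\beta$-sequence then removes nodes by a sequence of legal \lcapiece steps (whose removable part, being an entire $q_i$, is a tree or a tree-context). With that in hand I would redefine a $vah$-decomposition (same as before, $v=\alpha(p)$, $h=\alpha(t)$ with $t$ a tree or empty), redefine fractals and tame fractals, and reprove Lemma~\ref{lemma-fractal} and Claim~\ref{claim:decompose}/Lemma~\ref{lemma:decompose} verbatim --- these are purely combinatorial statements about node sets in forests and do not mention the piece relation, so they transfer directly once Fact~\ref{fact:pumping} has been re-established. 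The tame decomposition $s=qq_1\cdots q_ks'$ now has each $q_i$ a tree-context (chain case) or of the form $\hole+t_i$/$t_i+\hole$.

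The heart of the proof is the analogue of Lemma~\ref{lemma:mono-vertical}. As before, for $0\le i<j\le l\le k$ I set $u_{ijl}=\alpha$ of the context obtained from $q_{i+1}\cdots q_j$ by keeping only nodes of $X_l$, and $w_{ijl}$ similarly with $X_l\setminus\{x_l\}$; monochromaticity ($u_{ijl}=u_{i'j'l'}$) is extracted by Ramsey with $k=\omega+1$. One must now split into two cases according to whether the relevant $q_i$'s are tree-contexts (the chain case) or are of additive form. In the chain case, the $u_{ijl}$ and $w_{ijl}$ are tree-context-types, and because $X_l\subseteq X_{l+1}$, the relations $w_{(k-1)kk}\ccapiece u_{(k-1)k(k+1)}=u_{01k}$ and $u_{(k-1)kk}\ccapiece u_{01k}$ hold (the removed part being a tree-context, so this is an \lcapiece, not merely a piece); applying identity~(\ref{eq:lca-absorb}) with $h=\alpha(s'[X_k])$ (a tree-type or empty, since $s'$ sits below a chain) collapses both $u_{01k}^\omega u_{(k-1)kk}$ and $u_{01k}^\omega w_{(k-1)kk}$ to $u_{01k}^\omega$, exactly as before. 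In the additive case --- when the last block $q_k$ has the form $\hole+t_k$ and removing $x_k$ amounts to removing a sub-tree $g$ of a forest --- the relevant identity is the horizontal one~(\ref{eq:lca-forest-absorb}): one arranges a monochromatic situation where $\omega\cdot h = \omega\cdot h + g$ with $g\ccapiece h$, again using $X_l\subseteq X_{l+1}$ to witness the \lcapiece relation. Assembling: the types of $s[X_k]$ and $s[X_k\setminus\{x_k\}]$ agree, which is precisely $vh=vah$.

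The main obstacle I expect is bookkeeping the tree/forest distinction consistently through the combinatorics: in the \lca world a removal step is legal only when its removable part is a tree (or a tree-context), so the tame decomposition must be engineered so that every ``extra'' piece of $pat$ over $pt$ is of this shape, and one must verify that the node $a$ being eliminated indeed falls into a block of the right type so that either~(\ref{eq:lca-absorb}) (with a \emph{tree-type} $h$) or~(\ref{eq:lca-forest-absorb}) (with $g\ccapiece h$) is the applicable identity. A secondary subtlety is that $\ccapiece$ on $V$ depends on the morphism $\alpha$, so the witnessing contexts $p\ccapiece q$ must be tracked explicitly rather than invoked abstractly; but since the $u_{ijl},w_{ijl}$ are defined as images of honest sub-contexts of the fixed decomposition of $pat$, the $\ccapiece$ relations among them come with concrete witnesses, so this causes no real trouble.
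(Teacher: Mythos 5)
Your overall architecture is indeed the paper's own: strengthen the notion of $vah$-decomposition, rebuild fractals, tame them, extract a monochromatic one by Ramsey, and finish with identities~(\ref{eq:lca-absorb}) and~(\ref{eq:lca-forest-absorb}); your instinct to redo Fact~\ref{fact:pumping} so that the deleted material is removed by legal \lcapiece steps is also sound (the paper glosses this point). But the two places where you claim things ``transfer verbatim'' are exactly where the new work lies, and as written your argument has gaps there. First, the strengthened $vah$-decomposition is only half of what is needed: besides requiring the part of $s[X]$ below $x$ to be a tree or empty, the paper also requires $s[X]$ to be a \lcapiece of $s$, so in particular each $X_i$ is closed under \lcas. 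Your derivation of $w_{(k-1)kk}\ccapiece u_{01k}$ and $u_{(k-1)kk}\ccapiece u_{01k}$ from ``$X_l\subseteq X_{l+1}$'' is not valid without this closure: mere containment of node sets only yields the $\piece$ relation between the restrictions, since restricting to a non-\lca-closed set destroys \lcas, and identity~(\ref{eq:lca-absorb}) applies only to $\ccapiece$ and only to tree-context-types. Second, Lemma~\ref{lemma:decompose} does not carry over verbatim. In the \lca setting, tameness must additionally guarantee that each block $q_i$ is a tree-context \emph{whose root belongs to $X_i\setminus\{x_i\}$} (or that all blocks have the form $\hole+t_i$); this root-membership is what makes the restricted types $u_{ijl},w_{ijl}$ tree-context-types, without which identity~(\ref{eq:lca-absorb}) cannot be invoked at all. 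Securing it is the new combinatorial content of Lemma~\ref{lemma:lca-decompose}: one works in the \lca-closure $Y$ of $\{x_1,\dots,x_n\}$, notes in the chain case that $y_i$, being the \lca of $x_i$ and $x_{i+1}$, lies in the (\lca-closed) set $X_{i+1}$, and then keeps only every other index (whence $m=2k+2$) so that the root of each retained block lies in the right set; one must also make all blocks of the same kind. None of this is the Erd\H{o}s--Szekeres step ``as before''.

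A smaller but genuine problem is your final step in the chain case: $\alpha(s'[X_k])$ need not be a tree-type --- ``sitting below a chain'' does not make a restriction a tree --- and identity~(\ref{eq:lca-absorb}) requires $h$ to be a tree-type or empty. The paper avoids this by taking the monochromatic \lcatame fractal of length $\omega+2$ rather than $\omega+1$ and folding the last tree-context block into $h=\alpha(q_m[X_k]s'[X_k])$, which is then a tree-type. Likewise, in your reworked pumping fact, deleting an entire tree-context block is a sequence of legal \lcapiece steps only when the forest hanging below its hole is a tree or empty (removing $a$ from $a(b+c)$ is not a legal step), so you must either avoid deleting the lowest block or argue that the material below each deleted block is a tree. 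These last two points are repairable bookkeeping, but the missing \lca-closure condition on the $X_i$ and the missing root-membership argument in the taming step are essential ideas of the proof, not details.
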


Theorem~\ref{thm:main-lca} follows from the above proposition in the same way
as Theorem~\ref{thm:main} follows from Proposition~\ref{lemma:remover} in the
previous section. The reason why we assume that $t$ is either a tree or empty
is because when $s$ is an \lcapiece of $s'$, then $s$ can be obtained from $s'$
by iterating one of the following two operations: removing a leaf, or removing
a node which has only one child. Hence during the pumping argument yielding
Theorem~\ref{thm:main-lca} from Proposition~\ref{prop:remover-lca} it is enough
to preserve the type only for these operations. We thus concentrate on showing
Proposition~\ref{prop:remover-lca}.

We will now redefine the concept of fractal for our new, \lca setting.  The key
change is in the concept of a $vah$-decomposition. We change the notion of $x
\in X$ being a $vah$-decomposition of $s$ as follows: all conditions of the old
definition hold, but new conditions are added. First we require that $s[X]$ be
a \lca piece of $s$, in particular this implies that if two elements of $X$
have a \lca in $s$ then this \lca is also in $X$. Moreover either $x$ has no descendants in
$X$; or there is a minimal element of $X$ that has $x$ as a proper ancestor. In
other words, the part of $s[X]$ that corresponds to $h$ is either empty, or is
a tree.  In particular, $s[X \setminus \set x]$ is a \lca piece of $s[X]$;
which is the key property required below. From now on, when referring to a
$vah$-decomposition, we use the new definition. In particular in the concept of
a fractal $x_1 \in X_1,\ldots,x_k \in X_k$ inside $s$ we now have that for each
$i$, $x_i \in X_i$ is a $vah$-decomposition of $s$ in the new sense.

The proof of the following lemma is exactly the same as its counterpart in
Section~\ref{sec:compl-equat} (Lemma~\ref{lemma-fractal}) and is therefore omitted.
\begin{lemma}
 Let $k \in \Nat$. For $n$ sufficiently large, if $t$ is a tree or
 empty, then $pat \sim_n pt$ entails the existence of a fractal of
 length $k$ inside $pat$.
\end{lemma}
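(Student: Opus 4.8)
The plan is to follow the proof of Lemma~\ref{lemma-fractal} line for line, replacing the piece relation by $\ccapiece$, pieces by \lcapieces, the relation $\sim_n$ by its new meaning (same \lcapieces of size $n$), and the old notion of $vah$-decomposition by the new one; I argue by induction on $k$. The base case $k=1$ needs no hypothesis on $n$: since $t$ is a tree or empty, the node $a$ of $pat$ is a $vah$-decomposition of $pat$ in the new sense, witnessed by taking $X_1$ to be all nodes of $pat$. Indeed, deleting the descendants of $a$ and placing the hole there yields a context of type $v=\alpha(p)$; the node $a$ has label $a$; the proper descendants of $a$ kept in $X_1$ form the forest $t$, of type $h$; the restriction $pat[X_1]=pat$ is trivially an \lcapiece of $pat$; and the part of $pat[X_1]$ lying below $a$ is exactly $t$, which is empty or a tree, as the new definition requires.

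For the inductive step I first record the \lca-analogue of Fact~\ref{fact:pumping}: for every regular language $K$ there is a constant $m$ such that every $t\in K$ has an \lcapiece $s\in K$ with at most $m$ nodes. The proof of Fact~\ref{fact:pumping} carries over once one checks that the block deletion it performs is an \lcapiece operation and not merely a piece operation. A large enough forest contains either $m$ top-level trees, or $m$ siblings, or a chain of length $m$, and can therefore be written $s=q_0q_1\cdots q_m s'$ with nonempty contexts $q_i$; a repeat among $\beta(s'),\beta(q_ms'),\ldots,\beta(q_1\cdots q_ms')$ lets us delete a block $q_i\cdots q_{j-1}$ without changing $\beta$. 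This deletion removes whole subtrees --- an iteration of leaf removals --- in the sibling and top-level cases, and in the chain case it removes whole off-path subtrees and then contracts the degree-one chain nodes that remain; all of these are legitimate \lcapiece steps. Taking $\beta$ to be the syntactic morphism of $K$ gives the statement; since only the case with no $\sim_n$ constraint is needed, no product with a quotient algebra is required.

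With this in hand the inductive step is exactly that of Lemma~\ref{lemma-fractal}. Let $K$ be the regular language of forests that contain a fractal of length $k$ in the new sense, let $m$ be the constant given by the \lca-analogue of Fact~\ref{fact:pumping} for $K$, and enlarge $m$ so that $m>n_k$, where $n_k$ is the constant provided by the induction hypothesis for $k$. Since the sets $X_i$ in a fractal are increasing, the size of a fractal is $|X_k|$, so any forest containing a fractal of length $k$ contains one supported on at most $m$ nodes. Assume $pat\sim_m pt$. By the induction hypothesis (applicable as $m>n_k$), $pat$ contains a fractal of length $k$, which we may take supported on a node set $X_k$ of size at most $m$; then $pat[X_k]\ccapiece pat$ has at most $m$ nodes, hence by $pat\sim_m pt$ it is also an \lcapiece of $pt$, and transporting the fractal along an embedding $pat[X_k]\hookrightarrow pt$ --- which preserves labels, the forest-order and the \lca, hence all the data of a $vah$-decomposition --- yields a fractal of length $k$ inside $pt$. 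Finally $pt\ccapiece pat$, because $pt$ is obtained from $pat$ by deleting $a$ (a degree-one node when $t$ is a tree, a leaf when $t$ is empty), so this fractal also sits inside $pat$ on a node set avoiding $a$; we extend it to a fractal of length $k+1$ by letting $X_{k+1}$ be all of $pat$ and $x_{k+1}=a$, which is a $vah$-decomposition of $pat$ exactly as in the base case.

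I expect the only real obstacle to be the \lca-analogue of Fact~\ref{fact:pumping} --- specifically, confirming that every deletion used in the pumping is realized by removals of leaves and of degree-one nodes, so that one stays within the \lcapiece relation. This is also the point at which the hypothesis that $t$ be a tree or empty is genuinely used: it is what makes the extra node $a$ a legitimate $vah$-decomposition and what makes $pt$ an \lcapiece of $pat$.
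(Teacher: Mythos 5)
Your proposal is correct and follows exactly the route the paper intends: the paper omits this proof, stating it is the same as that of Lemma~\ref{lemma-fractal}, and your argument is precisely that adaptation, with the points the paper leaves implicit (the \lca-analogue of Fact~\ref{fact:pumping} via leaf and degree-one-node removals, and the use of the hypothesis that $t$ is a tree or empty) checked correctly.
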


A fractal $x_1 \in X_1\ \cdots\ ,x_k \in X_k$ inside $s$ is called
\emph{\lcatame} if $s$ can be decomposed as $s=q q_1 \cdots q_ks'$ (or
$s=q q_k \cdots q_1s'$) such that $x_1 \in q_1, \cdots, x_k \in q_k$
and such that either:
\begin{iteMize}{$\bullet$}
\item Each $q_i$ is a tree context whose root node belongs to $X_i
  \setminus \set{x_i}$.
\item Each $q_i$ is a context of the form $\hole + t_i$, with $t_i$ a
  forest.
\end{iteMize}

\begin{lemma}\label{lemma:lca-decompose}
 Let $k \in \Nat$. For $n$ sufficiently large, if there is a fractal
 of length $n$ inside $pat$, then there is a \lcatame fractal of length $k$
 inside $pat$.
\end{lemma}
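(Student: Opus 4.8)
The plan is to adapt the proof of Lemma~\ref{lemma:decompose} to the \lca setting, where the new complication is that the decomposition of $s$ must respect the \lca structure of the selected node set. First I would prove a \lca analogue of Claim~\ref{claim:decompose}: for every $m$ there is an $n$ such that for every forest $s$ and every set $X$ of at least $n$ nodes which is \emph{closed under closest common ancestors}, one can decompose $s = qq_1\cdots q_m s'$ (or the mirror image) where either every $q_i$ is a tree-context whose root lies in $X$, or every $q_i$ has the form $\hole + t_i$. The point of taking $X$ closed under \lcas from the start is that it matches the extra condition now built into the definition of a $vah$-decomposition (namely that $s[X_i]$ is a \lca piece of $s$); hence the node sets $X_i$ occurring in a fractal are already \lca-closed, and no further closure step is needed.

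To prove this claim I would again form $Y = X$ (which here is already \lca-closed, so $Y = X$) and look at the ``skeleton'' forest $s[X]$. For $n$ large, $s[X]$ either has more than $m$ trees, or has a node with more than $m$ children each of which has an $X$-node below it, or contains a chain of length $> m$. In the first two cases we are in the horizontal situation and build contexts of the form $\hole + t_i$ exactly as in the original proof. In the chain case $y_1 < \cdots < y_{m+1}$ in $X$, I would let $q_i$ be the tree-context corresponding to the nodes $\{z : z \ge y_i,\ z \not\ge y_{i+1}\}$; the root of $q_i$ is $y_i \in X$, so $q_i$ is a tree-context with root in $X$, as required. One must check that these contexts really are tree-contexts in the \lca sense — i.e.\ that within $\{z : z \ge y_i, z\not\ge y_{i+1}\}$ there is a single node, namely $y_i$, above both the hole (which sits below $y_{i+1}$) and any other retained node — and this is where \lca-closedness of $X$ is used: any retained node in this block has a \lca with $y_{i+1}$ which must be some $y_j$ on the chain, forcing it to lie below $y_i$.

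With the claim in hand, the lemma follows as before: given $k$, set $m = k^2$, take $n$ from the claim, let $x_1 \in X_1, \ldots, x_n \in X_n$ be a fractal of length $n$ inside $s = pat$, apply the claim with $X = \{x_1,\ldots,x_n\}$ (\lca-closed since it is contained in each $X_i$, which is \lca-closed, and the relevant \lcas land in these sets by the fractal condition — here a small extra argument is needed to see $X$ itself can be taken \lca-closed, or one simply closes it and absorbs the extra nodes into the bound) to get $s = qq_1\cdots q_m s'$ with each $q_i$ containing some $x_{n_i} \in X$. Picking one such $x_{n_i}$ per block and extracting, via the Erd\H{o}s--Szekeres lemma~\cite{erdosszekeres}, a monotone subsequence of length $k$ from the length-$k^2$ sequence $(n_i)$, one obtains a \lcatame fractal of length $k$ — tame in precisely one of the two senses listed before the lemma, according to which case of the claim occurred.

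The main obstacle I expect is the verification that the blocks $q_i$ in the chain case are genuine tree-contexts with root in $X_i \setminus \{x_i\}$ in the refined \lca sense, and more generally reconciling the \lca-closure requirement with the fractal structure: one has to be careful that restricting to the chosen nodes does not secretly remove a \lca of two retained nodes, which would violate the ``$s[X]$ is a \lca piece'' condition that the new $vah$-decompositions demand. Getting the interplay between \lca-closedness, the three combinatorial cases, and the monotone extraction exactly right is the delicate part; the rest parallels Section~\ref{sec:compl-equat} closely.
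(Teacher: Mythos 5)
Your overall plan (a decomposition claim for \lca-closed node sets, then an Erd\H{o}s--Szekeres extraction, as in Lemma~\ref{lemma:decompose}) is the natural one, but it leaves unproved exactly the condition that makes a fractal \lcatame in the chain case, and this is a genuine gap rather than a routine verification. The definition requires each tree-context $q_i$ of the decomposition to have its \emph{root in $X_i\setminus\set{x_i}$}, where $x_i\in X_i$ is the $vah$-decomposition attached to that block; this is not cosmetic, since it is what later guarantees property~(\ref{eq:lca-prop-decompose2}) (the restrictions $u_{ijl},w_{ijl}$ are tree-context-types, which is needed to apply identity~(\ref{eq:lca-absorb})). Your claim only delivers roots lying in the input set $X$, and you apply it with $X=\set{x_1,\ldots,x_n}$ (or its \lca-closure). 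First, the parenthetical justification that this $X$ is \lca-closed is wrong: the fractal condition gives $X\subseteq X_n$ with each $X_i$ \lca-closed, but the set of distinguished nodes itself need not be closed, so you must pass to the closure --- and then the block roots are only \lcas of pairs $x_r\sqcap x_{r'}$, not distinguished nodes. Second, and more seriously, after you pick one $x_{n_i}$ per block and extract a monotone subsequence, nothing ties the root of a (possibly merged) block to the set $X_{l}$ of the distinguished node $x_{l}$ you chose inside it: the root could be $x_{l}$ itself, or a \lca of nodes with indices larger than $l$, which need not belong to $X_{l}$ (the $X_i$ are \lca-closed but not ancestor-closed). So the fractal you produce need not be \lcatame, and the argument does not go through as written.

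The paper closes this hole with a different bookkeeping that you would need to reproduce (or replace): it extracts a subfractal of length $m=2k+2$ in which either all $x_1,\ldots,x_m$ share the same \lca (the horizontal case, where your argument is fine), or the \lcas $y_i$ of pairs of distinguished nodes form a chain aligned so that the block between $y_i$ and $y_{i+1}$ contains $x_i$. The key observations are then that $y_i$, being the \lca of $x_i$ and $x_{i+1}$, lies in $X_{i+1}$ because $X_{i+1}$ is \lca-closed and contains both nodes, and that by keeping only every other distinguished node ($x_2\in X_2, x_4\in X_4,\ldots,x_{2k}\in X_{2k}$, whence $m=2k+2$) and taking the block for $x_{2i}$ to run from root $y_{2i-1}$ to hole $y_{2i+1}$, one gets a root $y_{2i-1}\in X_{2i}$ that is automatically distinct from $x_{2i}$. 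Some argument of this kind --- aligning the chain of \lcas with the fractal indices and skipping indices so that each root is a \lca of \emph{earlier} distinguished nodes --- is the missing ingredient in your proposal.
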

\proof
  The proof is essentially the same as for the counter part in
  Section~\ref{sec:compl-equat} (Lemma~\ref{lemma:decompose});
  only this time we need to be more careful to satisfy the more
  stringent requirements in a \lcatame fractal.

  Let $m=2k+2$. Using the same reasoning as in the proof of
  Lemma~\ref{lemma:decompose}, if $n$ is large enough then we may extract a
  subfractal of length $m$ where either:
 \begin{iteMize}{$\bullet$}
\item All the nodes $x_1,\ldots,x_m$ have the same \lca. In this case,
     we can extract a \lcatame subfractal, where each context is of
     the form $\hole + t_i$.
\item The set $Y=\set{y : y \mbox{ is a \lca of some } x_i,x_j}$ contains a
     chain $y_1 < \cdots < y_{m}$, such that for each $i \leq m$, the set 
$Y_i = \set{ z : z \ge y_i \mbox{ and }z \not \ge y_{i+1}}$ contains at least one of the node $x_i$. (There is a second case,
     where the nodes $y_1,\ldots,y_m$ are ordered the other way: with
     $y_{i+1}$ an ancestor of $y_i$. This case is treated
     analogously.) In particular, $y_i$ is the \lca of $x_i$ and any
     of the nodes $x_{i+1},\ldots,x_m$. Since $X_{i+1}$ contains both
     $x_i$ and $x_{i+1}$, each node $y_i$ belongs to the set
     $X_{i+1}$. As we may have $x_i=y_i$, the desired \lcatame fractal is
     obtained as follows: We use
     $x_2 \in X_2, x_4 \in X_4, \ldots, x_{2k} \in X_{2k}$ as the
     fractal (recall that $m=2k+2$); while the decomposition $q q_1
     \ldots q_k s'$ is chosen so that $q_i$ has its root in
     $y_{2i-1}$, and its hole in $y_{2i+1}$.\qed
\end{iteMize}

\noindent Recall the definition of $u_{ijl}$ and $w_{ijl}$ as the image under
$\alpha$ of the context obtained from $q_{i+1} \cdots q_j$ by
restricting $s$ to $X_l$ and $X_l \setminus \set {x_l}$, respectively.
Note that because of the new definition of fractals we have:
\begin{equation}
 \label{eq:lca-prop-decompose1}
   w_{ijl}  \ccapiece u_{ijl}  \qquad \text{and}\qquad    u_{ijl}  \ccapiece u_{ij(l+1)} 
\end{equation}
\begin{equation}
 \label{eq:lca-prop-decompose2}
 \text{if the $q_i$ are tree-contexts then } u_{ijl}, w_{ijl} \text{ are tree-context-types}
\end{equation}

The definition of monochromaticity is the same as in the previous section and
Ramsey's Theorem gives.

\begin{lemma}
 If there is a \lcatame fractal of sufficiently large size inside $pat$, then there is a
 monochromatic \lcatame fractal of size $m=\omega+2$ inside $pat$.
\end{lemma}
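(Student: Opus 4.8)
The plan is to derive this lemma by a straightforward application of Ramsey's Theorem, essentially copying the argument given in Section~\ref{sec:compl-equat} for the non-cca case, but keeping track of the extra structure required by \lcatame fractals. First I would start from a \lcatame fractal $x_1\in X_1,\ldots,x_N\in X_N$ inside $pat$, with $N$ large, decomposed as $s=qq_1\cdots q_Ns'$ (the dual ordering being handled analogously). By the definition of \lcatame, the contexts $q_i$ are either all tree-contexts or all of the form $\hole+t_i$; I would fix whichever of the two cases we are in, since the coloring argument does not depend on it.

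Next I would define the coloring: to each three-element subset $\{i,j,l\}$ of $\{1,\ldots,N\}$ with $i<j<l$, assign the color $u_{ijl}\in V$, the image under $\alpha$ of the context obtained from $q_{i+1}\cdots q_j$ by restricting $s$ to $X_l$. This is a coloring with at most $|V|$ colors. Applying the form of Ramsey's Theorem quoted just before Lemma~\ref{lemma:mono-vertical} (with $c=|V|$, $r=3$, and target size $k=\omega+2$), there is, for $N$ sufficiently large, a subset $T\subseteq\{1,\ldots,N\}$ of size $\omega+2$ on which all cardinality-3 subsets get the same color. Extracting the subfractal indexed by $T$ yields a fractal of length $\omega+2$ that is monochromatic in the sense defined in the previous section (the requirement $j<l$ in the definition of monochromaticity is exactly matched by the fact that Ramsey gives us constant color on all $3$-subsets). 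I would then note that this extracted subfractal is still \lcatame: restricting a \lcatame decomposition $s=qq_1\cdots q_Ns'$ to the indices in $T$ merely fuses consecutive blocks, and fusing tree-contexts gives a tree-context while fusing contexts of the form $\hole+t_i$ gives a context of the same form, so the case distinction in the definition of \lcatame is preserved; moreover the root-membership condition $x_i\in q_i$ for the retained indices is inherited, and in the tree-context case the root of the fused block still lies in the appropriate $X_i\setminus\set{x_i}$ because $X$'s are increasing. Hence the extracted subfractal is a monochromatic \lcatame fractal of size $\omega+2=m$.

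I do not anticipate a serious obstacle here; the only point requiring a little care — and the one I would make sure to spell out — is the verification that \lcatameness really is inherited under subfractal extraction, since the analogous point in Section~\ref{sec:compl-equat} was essentially trivial (fusing arbitrary contexts), whereas here the two structural shapes of the $q_i$ must be checked to be closed under fusion. Everything else (the coloring, the choice of parameters, the invocation of Ramsey) is a verbatim transcription of the corresponding passage preceding Lemma~\ref{lemma:mono-vertical}, now using the cca-specific facts~\eqref{eq:lca-prop-decompose1} and~\eqref{eq:lca-prop-decompose2} in place of~\eqref{eq:vert-piece1} when they are later needed. Since the statement to be proved only asserts the existence of the monochromatic \lcatame fractal, the proof ends once the extraction is complete.
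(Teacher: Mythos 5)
Your proposal is correct and follows essentially the same route as the paper, which simply invokes Ramsey's Theorem on the coloring $\{i,j,l\}\mapsto u_{ijl}$ exactly as in the earlier section and does not spell out any further detail. The extra care you take in checking that \lcatame{}ness survives subfractal extraction (fused tree-contexts remain tree-contexts with root in $X_{i}\setminus\set{x_i}$, fused contexts $\hole+t_i$ keep that shape) is a point the paper leaves implicit, and your verification of it is sound.
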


We will now take a monochromatic \lcatame fractal, and conclude by showing that
$\alpha(pat)=\alpha(pt)$. 

\begin{lemma}\label{lemma:mono-vertical-gca}
 If there is a monochromatic \lcatame fractal of size $\omega+2$ inside $pat$, then
 $vah=vh$.
\end{lemma}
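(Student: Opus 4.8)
The plan is to mimic the proof of Lemma~\ref{lemma:mono-vertical}, but carefully splitting into the two shapes of \lcatame fractal and invoking the two new identities~\eqref{eq:lca-absorb} and~\eqref{eq:lca-forest-absorb} in the appropriate cases. Fix a monochromatic \lcatame fractal $x_1 \in X_1\ \cdots\ x_k \in X_k$ of length $k=\omega+2$ inside $s=pat=qq_1\cdots q_ks'$, with $x_i$ in the context $q_i$. Since $x_k \in X_k$ is a $vah$-decomposition in the new sense, it suffices to show that $\alpha(s[X_k]) = \alpha(s[X_k \setminus \set{x_k}])$, exactly as before. Decompose
\begin{eqnarray*}
  \alpha(s[X_k]) = \alpha(q[X_k]) \cdot u_{01k}\cdot u_{12k} \cdots u_{(k-1)kk} \cdot \alpha(s'[X_k])\ ,
\end{eqnarray*}
and similarly for $s[X_k\setminus\set{x_k}]$, where only $u_{(k-1)kk}$ gets replaced by $w_{(k-1)kk}$; note the new definition of $vah$-decomposition guarantees $s[X_k\setminus\set{x_k}]$ is indeed a \lcapiece of $s[X_k]$, so this substitution is legitimate. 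Using monochromaticity and $k=\omega+2$, the desired equality reduces to
\begin{eqnarray*}
  u_{01k}^{\omega+1} \cdot u_{(k-1)kk}\cdot\alpha(s'[X_k]) = u_{01k}^{\omega+1} \cdot w_{(k-1)kk}\cdot\alpha(s'[X_k])\ ,
\end{eqnarray*}
and by~\eqref{eq:lca-prop-decompose1} together with monochromaticity we have $w_{(k-1)kk} \ccapiece u_{01k}$ and $u_{(k-1)kk} \ccapiece u_{01k}$.

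Now the case split on the two shapes of \lcatame fractal. In the first case each $q_i$ is a tree-context, so by~\eqref{eq:lca-prop-decompose2} both $u_{01k}$ and $u_{(k-1)kk}, w_{(k-1)kk}$ are tree-context-types; moreover $\alpha(s'[X_k])$ is the part corresponding to $h$ of the $vah$-decomposition, which by the new definition is either a tree-type or empty. Hence identity~\eqref{eq:lca-absorb}, applied in its form $u^\omega v h = u^\omega h$ with $u=u_{01k}$, $v=u_{(k-1)kk}$ or $w_{(k-1)kk}$ and $h=\alpha(s'[X_k])$, shows both sides equal $u_{01k}^{\omega}\cdot\alpha(s'[X_k])$ (after absorbing one more factor of $u_{01k}$ into the $\omega$-power). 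In the second case each $q_i$ has the form $\hole + t_i$, so each $u_{ijk}$ is of the form $\hole + (\text{a forest})$; writing $u_{01k} = \hole + g_0$ and $u_{(k-1)kk} = \hole + g$, the product $u_{01k}^{\omega+1}\cdot u_{(k-1)kk}$ becomes, additively, $(\omega+1)\cdot g_0 + g$ placed in front of $\alpha(s'[X_k])$, and similarly with $g$ replaced by the $w$-version $g'$. Here $g' \ccapiece g_0$ and $g \ccapiece g_0$ follow from the same two displayed $\ccapiece$-facts specialized to the horizontal part, so identity~\eqref{eq:lca-forest-absorb} in the form $\omega\cdot h + g = \omega\cdot h$ (with $h = g_0$) gives $(\omega+1)\cdot g_0 + g = \omega\cdot g_0 + g_0 = (\omega+1)\cdot g_0$ and likewise for $g'$, so the two sides coincide. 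As in the previous section, the dual decomposition $s=qq_k\cdots q_1 s'$ is handled by the mirror-image halves of the identities ($v u^\omega h = u^\omega h$ and $g + \omega\cdot h = \omega\cdot h$).

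The main obstacle I anticipate is purely bookkeeping rather than conceptual: making sure that in the second ("horizontal") case the algebra really does collapse to the additive computation in $H$ — i.e.\ that $u_{01k}$, being an image of a context of the form $\hole + t$, genuinely acts on $H$ as addition of a fixed element on the appropriate side, and that the factor $\alpha(s'[X_k])$ sits on the correct side so that~\eqref{eq:lca-forest-absorb} (which is a statement about elements of $H$, not $V$) is applicable. One must also double-check that the new definition of $vah$-decomposition, specifically the clause forcing the $h$-part of $s[X]$ to be a tree or empty, is exactly what is needed for $h=\alpha(s'[X_k])$ to qualify as a tree-type in the first case and for the $\ccapiece$ relations on the forest parts to hold in the second; this is where the extra conditions added to the definition pay off, and it is worth spelling out rather than asserting.
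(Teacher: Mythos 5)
Your overall strategy is the paper's, but the indexing choice creates a real gap. You put all $\omega+2$ contexts into the product (taking $k=\omega+2$ and trailing forest $s'[X_k]$), whereas the paper sets $k=\omega+1$, multiplies only $u_{01k}\cdots u_{(k-1)kk}$, and folds the last context into the trailing forest $h=\alpha(q_m[X_k]s'[X_k])$ with $m=k+1=\omega+2$. Reserving that top level is not cosmetic: your key claim that $u_{(k-1)kk}\ccapiece u_{01k}$ and $w_{(k-1)kk}\ccapiece u_{01k}$ ``follow from \eqref{eq:lca-prop-decompose1} together with monochromaticity'' does not go through in your setup. Monochromaticity only equates the $u_{ijl}$ with $i<j<l$, so it says nothing about $u_{(k-1)kk}$ (where $j=l$), and the second half of \eqref{eq:lca-prop-decompose1} would require a set $X_{k+1}$, which you have already spent inside the product. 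The paper obtains exactly these relations as $u_{(k-1)kk},\,w_{(k-1)kk}\ccapiece u_{(k-1)k(k+1)}=u_{01k}$, using the spare index $k+1=m$; this is the entire reason the fractal must have length $\omega+2$ rather than $\omega+1$. Note also that $\ccapiece$ need not be transitive in a finite algebra, so you cannot repair this by chaining through intermediate elements.

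There is a second gap, in the tree-context case: to invoke \eqref{eq:lca-absorb} you need the trailing type to be a tree-type or empty, and your justification --- that $\alpha(s'[X_k])$ ``is the $h$-part of the $vah$-decomposition'' --- is incorrect. The $h$-part of the decomposition $x_k\in X_k$ consists of the $X_k$-nodes that are proper descendants of $x_k$, which is a different set of nodes; and $s'[X_k]$ can genuinely be a forest with several roots, since two $X_k$-nodes lying in different trees of $s'$ have their \lca above $s'$, so closure of $X_k$ under \lcas does not force a single root. The paper's trailing forest $q_m[X_k]s'[X_k]$ is a tree or empty precisely because $q_m$ is a tree-context: every node of $q_m s'$ lies below its root, so \lca-closure of $X_k$ excludes two maximal $X_k$-nodes in that region. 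This is the second place where keeping $q_m$ out of the product is essential. The remainder of your argument (the additive computation in the $\hole+t_i$ case via \eqref{eq:lca-forest-absorb}, and the dual treatment of $s=qq_k\cdots q_1s'$) matches the paper and works once both points are repaired by adopting the paper's indexing, since the needed $\ccapiece$ facts in $H$ also descend from the relations you failed to establish.
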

\begin{proof}
 Fix a monochromatic \lcatame fractal of size $m= \omega +2$ and let $k=m-1$.
 Since $x_k \in X_k$ is a $vah$-decomposition, the statement of the
 lemma follows once we show that $\alpha$ assigns the same type to
 the forest $s[X_k]$ and $s[X_k \setminus \set{x_k}]$.

 Recall that the type of the forest $s[X_k]$ can be decomposed as follows (the case
 where $s=qq_mq_{m-1}\cdots q_1s'$ is treated similarly by duality).
\begin{eqnarray*}
 \alpha(s[X_k]) = \alpha(q[X_k]) \cdot 
 u_{01k} \cdot u_{12k} \cdot u_{23k} \cdots u_{(k-1)kk} \cdot \alpha(q_m[X_k]s'[X_k])
\end{eqnarray*}
The type of $s[X_k \setminus \set{x_k}]$ is decomposed the same way,
only $u_{(k-1)kk}$ is replaced by $w_{(k-1)kk}$. Let
$h=\alpha(q_m[X_k]s'[X_k])$ and notice that if $q_m$ is a tree-context then $h$
is a tree-type. Therefore, the lemma will follow if
\begin{eqnarray*}
 u_{01k} \cdot u_{12k} \cdot u_{23k} \cdots u_{(k-1)kk} \cdot h = 
 u_{01k} \cdot u_{12k} \cdot u_{23k} \cdots w_{(k-1)kk} \cdot h \ .
\end{eqnarray*}
Since the fractal is monochromatic, and since $k=\omega+1$, the above becomes
\begin{eqnarray*}
 u_{01k}^\omega \cdot u_{(k-1)kk} \cdot h =   u_{01k}^\omega \cdot w_{(k-1)kk} \cdot h \ .
\end{eqnarray*}

By~(\ref{eq:lca-prop-decompose1}) and monochromaticity, we have
\begin{eqnarray}\label{eq-ccapiece}
 w_{(k-1)kk}\ , \  u_{(k-1)kk} \quad \ccapiece \quad
 u_{(k-1)k(k+1)}= u_{01k}\ ,
\end{eqnarray}
We now have two cases. If all the $q_i$ are tree-contexts, we conclude using
identity~(\ref{eq:lca-absorb}) which can be applied because
of~\eqref{eq-ccapiece}, and the fact that $h$ is then a tree-type
and~(\ref{eq:lca-prop-decompose2}).  If all the $q_i$ are contexts of the form
$\hole + f_i$, we conclude from~\eqref{eq-ccapiece} using identity~(\ref{eq:lca-forest-absorb}).
\end{proof}

\subsection{An equivalent set of identities.} 
\label{sec:an-equivalent-set-gca}
In this section, we give a set of identities that is equivalent to the
one used in Theorem~\ref{thm:main-lca}. The rationale is the same as
in Proposition~\ref{prop:other-eq}: we want to avoid the use of $v
\ccapiece w$ in the identities.

\begin{prop}\label{prop:other-set-lca}
 The conditions on the syntactic morphism stated in
 Theorem~\ref{thm:main-lca} are equivalent to the following
 equalities:

\begin{equation}\label{eq:lca-j-trivial-1}
  (uv)^\omega h = (uv)^\omega u h 
  \end{equation}
  whenever $h$ is a tree-type or empty,  and
 \begin{equation}\label{eq:lca-j-trivial-2}
  (uv)^\omega = v (uv)^\omega
   \end{equation}
   whenever $u$ and $v$ are tree-context-types, and
\begin{equation}
 \label{eq:new-simpler}
(u(\hole+vwh))^\omega g=(u(\hole+vwh))^\omega u(\hole+vh) g = (u(\hole+vh))(u(\hole+vwh))^\omega  g
\end{equation}
whenever $u$ is a tree-context-type or empty and $g,h$ are tree-types or empty.
\end{prop}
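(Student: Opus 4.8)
The plan is to follow the template of Proposition~\ref{prop:other-eq}, adapting each step to the \lca setting. First, the forward direction: I would show that the three identities \eqref{eq:lca-j-trivial-1}, \eqref{eq:lca-j-trivial-2}, \eqref{eq:new-simpler} all follow from \eqref{eq:lca-absorb} and \eqref{eq:lca-forest-absorb}. For \eqref{eq:lca-j-trivial-2}, note that if $u,v$ are tree-context-types then $uv$ is a tree-context-type, and $v \ccapiece uv$ (using the morphism-relative definition of $\ccapiece$, since a tree-context $q_v$ representing $v$ is an \lcapiece of $q_u q_v$), so \eqref{eq:lca-absorb} with $h=\hole$, or rather its forest-free consequence obtained by the faithfulness of the action, gives $(uv)^\omega = v(uv)^\omega$; similarly $(uv)^\omega u h = (uv)^\omega h$ when $h$ is a tree-type, by applying \eqref{eq:lca-absorb} to $u \ccapiece (uv)^\omega$ together with the tree-type hypothesis on $h$. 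For \eqref{eq:new-simpler}, I would observe that $u(\hole+vh)$ is an \lcapiece of $u(\hole+vwh)$ (remove the nodes of $w$, which is legal since $v0 \ccapiece vw0$ in $H$, hence $\hole+v0 \ccapiece \hole+vw0$ in $V$ by the fourth/fifth $R$-rule), so \eqref{eq:lca-absorb} applied to these two tree-context-types and the tree-type $g$ yields exactly \eqref{eq:new-simpler}; the forest part \eqref{eq:lca-forest-absorb} handles the $\hole+f_i$ shape analogously.

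For the converse — assuming \eqref{eq:lca-j-trivial-1}, \eqref{eq:lca-j-trivial-2}, \eqref{eq:new-simpler} and deriving \eqref{eq:lca-absorb} and \eqref{eq:lca-forest-absorb} — I would argue by induction on the size of the witnessing context $p$ in the definition $v \ccapiece u$ (with $\alpha(p)=v$, $\alpha(q)=u$, $p \ccapiece q$), exactly mirroring the three cases in the proof of Proposition~\ref{prop:other-eq}, but using the refined $R$-rules from Lemma~\ref{lemma:equivalent-def-cca}. In the case $p = p_1 p_2$ with both nonempty and $p_1$ a tree-context, write $q = q_1 q_2$ with $p_1 \ccapiece q_1$, $p_2 \ccapiece q_2$, and $q_1$ a tree-context; induction handles $p_1,p_2$ separately. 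In the singleton case $p = a\hole$, we get $q = q_0 a q_1$, hence $u = u_0 (\text{tree-type})\, u_1$, and \eqref{eq:lca-j-trivial-1}/\eqref{eq:lca-j-trivial-2} play the role that \eqref{eq:j-trivial} played before. The genuinely new case is $p = \hole + s$ for a tree $s$: then $q = q_1(\hole+t)q_2$ with $s \ccapiece t$, and I would reduce, via \eqref{eq:lca-j-trivial-1} and \eqref{eq:lca-j-trivial-2}, to showing $\omega\cdot\alpha(t) = \alpha(s) + \omega\cdot\alpha(t)$ when $s \ccapiece t$ — which is an inner induction on the number of nodes removed from $t$ to reach $s$, with base case $s=t$ by aperiodicity of $H$ (itself following from \eqref{eq:lca-j-trivial-2} with $v=u$, giving aperiodicity of the tree-context-types, which generate via the embedding), and inductive step fed by \eqref{eq:new-simpler}, whose shape $(u(\hole+vwh))^\omega$ is precisely designed for peeling one node off $t$ of the form $q_0 q_1 t'$.

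The main obstacle, as in Proposition~\ref{prop:other-eq}, will be getting the bookkeeping right in the $p = \hole + s$ case: one must be careful that all the intermediate subforests appearing when decomposing $t = q_0 q_1 t'$ really do yield tree-types (so that \eqref{eq:lca-j-trivial-1} and \eqref{eq:new-simpler} are applicable), and that the \lcapiece relation is preserved at each peeling step — this is where the new clause in the definition of $vah$-decomposition (that $s[X\setminus\{x\}]$ is a \lcapiece of $s[X]$) has its algebraic shadow. I would also need to double-check that \eqref{eq:lca-forest-absorb} itself, not just \eqref{eq:lca-absorb}, comes out: but \eqref{eq:lca-forest-absorb} is literally the specialization of \eqref{eq:new-simpler} obtained by taking $u = \hole$ (allowed since $u$ may be empty), $v=\hole$, so that $u(\hole+vwh) = \hole+wh$ and the identity collapses to $\omega\cdot(wh) \cdot$-shifted equalities matching \eqref{eq:lca-forest-absorb} after identifying $H$ with its image in $V$; conversely \eqref{eq:lca-forest-absorb} plus \eqref{eq:lca-j-trivial-1}, \eqref{eq:lca-j-trivial-2} recover \eqref{eq:new-simpler} by the same peeling induction. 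I expect the whole argument to be routine modulo this care, since every ingredient — the $R$-characterization of $\ccapiece$, aperiodicity, the tree-type/tree-context-type machinery — is already in place.
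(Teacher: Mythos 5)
There are two genuine gaps. First, in the forward direction your derivation of \eqref{eq:lca-j-trivial-1} does not work: in that identity $u$ and $v$ are \emph{arbitrary} context-types, while \eqref{eq:lca-absorb} is only available when both the piece and the ambient element are tree-context-types, so ``applying \eqref{eq:lca-absorb} to $u\ccapiece(uv)^\omega$'' is not a legal move (and the relation $u\ccapiece(uv)^\omega$ itself is not justified in the cca setting). For instance, when $u=f_1+\hole+f_2$ and $v=g_1+\hole+g_2$ are purely horizontal, neither $u$ nor $(uv)^\omega$ need be a tree-context-type; there one has $(uv)^\omega h=\omega\cdot(f_1+g_1)+h+\omega\cdot(g_2+f_2)$ and the identity must be extracted from \eqref{eq:lca-forest-absorb}. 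The paper proves \eqref{eq:lca-j-trivial-1} by a three-case factorization argument (a tree-context-type factor inside $u$, inside $v$, or neither), which is the bulk of that direction and is absent from your outline.

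Second, and more seriously, your converse imports the wrong combinatorics and the wrong reduction from Proposition~\ref{prop:other-eq}. In the cca setting a piece of the form $p=\hole+s$ does \emph{not} sit inside $q$ as $q=q_1(\hole+t)q_2$ with $s\ccapiece t$; by the rigid decompositions behind Lemma~\ref{lemma:equivalent-def-cca}, the atomic case for tree-context-types is $v=\alpha(a)(\hole+f)\ccapiece u=u_1\,\alpha(a)(\hole+g)\,u_2$ with $f\ccapiece g$, i.e.\ the horizontal part stays attached to its parent node $a$. Consequently the bare forest identity $\omega\cdot\alpha(t)=\alpha(s)+\omega\cdot\alpha(t)$ (which is just \eqref{eq:lca-forest-absorb}) is too weak: what is needed is to replace $g$ by $f$ underneath $(\alpha(a)(\hole+g))^\omega$, and the insert/remove manipulations you invoke to reduce to the bare identity are blocked because \eqref{eq:lca-j-trivial-1} and \eqref{eq:lca-j-trivial-2} only apply when the right-hand argument is a tree-type or empty, whereas intermediate arguments such as $(\hole+\alpha(t))h=h+\alpha(t)$ are forest-types. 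This is exactly why the paper proves Lemma~\ref{lemma-hard}, the identity \eqref{eq:hard-won-lemma} $(u(\hole+vwh))^\omega g=(u(\hole+vwh))^\omega u(\hole+vw'h)g$ for $w'\ccapiece w$, by induction on the derivation of $w'\ccapiece w$, keeping a tree-context-type $u$ wrapped around the whole expression so that every intermediate argument is again a tree-type and the steps can be iterated; \eqref{eq:lca-forest-absorb} then falls out as a special case, and \eqref{eq:lca-absorb} follows by decomposing $v$ into factors $\alpha(a)(\hole+f)$ and combining \eqref{eq:lca-j-trivial-1} (resp.\ \eqref{eq:lca-j-trivial-2}) with the lemma. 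Your proposal has no counterpart of this lemma, and ``peeling one node off $t$'' directly with \eqref{eq:new-simpler} does not substitute for it.
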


The rest of Section~\ref{sec:an-equivalent-set-gca} is devoted to showing
the above proposition.

It is immediate to see that identity~(\ref{eq:lca-absorb}) implies
identity~(\ref{eq:lca-j-trivial-2}) and that identity~(\ref{eq:lca-absorb})
implies identity~(\ref{eq:new-simpler}).  We now show that
identities~(\ref{eq:lca-absorb}) and~(\ref{eq:lca-forest-absorb}) imply
identity~(\ref{eq:lca-j-trivial-1}).  Let $u$ and $v$ be two context-types and
$h$ be a tree-type.  We want to show that $(uv)^\omega h= (uv)^\omega u h$.

We consider several cases.

\begin{iteMize}{$\bullet$}
\item In the first case we assume that $u=u_1u_2$ for some tree-context-type
$u_2$. In that case we have:
\begin{align*}
 (uv)^\omega h = (uv)^\omega(uv)^\omega(uv)^\omega
h=(u_1u_2vu_1u_2v)^\omega(u_1u_2v)^\omega h
= u_1(u_2vu_1)^{\omega-1}(u_2vu_1u_2vu_1)^\omega u_2vh
\end{align*}
Notice now that $u_2v \ccapiece u_2vu_1u_2vu_1$ and that $u_2vu_1u_2 \ccapiece
u_2vu_1u_2vu_1$. As $u_2$ is a tree-context-type, all the context-types involved are
tree-context-types and we can use identity~(\ref{eq:lca-absorb}) twice and replace
$u_2v$ by $u_2vu_1u_2$. This yields:
$$ (uv)^\omega h = u_1(u_2vu_1)^{\omega-1}(u_2vu_1u_2vu_1)^\omega u_2vu_1u_2h$$
And we have 
$$ (uv)^\omega h = (u_1u_2vu_1u_2vu_1u_2v)^\omega u_1u_2h$$
By idempotency, this yields the desired result:
$$(uv)^\omega h = (uv)^\omega u h $$

\item The second case, in which we assume that $v=v_1v_2$ for some tree-context-type $v_2,$
is treated similarly.
$$ (uv)^\omega h = (uv_1v_2)^\omega h= (uv_1v_2)^\omega(uv_1v_2)^\omega h$$
Therefore,
$$ (uv)^\omega h = uv_1(v_2uv_1)^{\omega-1}(v_2uv_1)^\omega v_2h$$

Notice now that $v_2 \ccapiece v_2uv_1$ and that $v_2u \ccapiece
v_2uv_1$. As $v_2$ is a tree-context-type, all the context-types involved are
tree-context-types and we can use identity~(\ref{eq:lca-absorb}) twice and replace
$v_2$ by $v_2u$. This yields:
$$ (uv)^\omega h = uv_1(v_2uv_1)^{\omega-1}(v_2uv_1)^\omega v_2uh$$
And we have 
$$ (uv)^\omega h = (uv)^\omega (uv)^\omega uh=(uv)^\omega uh$$

\item When none of the above cases works, we must have $u=f_1+\hole+f_2$
and $v=g_1+\hole+g_2$. In that case we have $(uv)^\omega h = \omega\cdot (f_1 + g_1)
+ h + \omega \cdot (g_2+f_2)$, and we conclude using
identity~(\ref{eq:lca-forest-absorb}) as $f_1 \ccapiece (f_1+g_1)$ and $f_2
\ccapiece (f_2+g_2)$.
\end{iteMize}\smallskip

\noindent We now consider the converse implication in
Proposition~\ref{prop:other-set-lca}.  Assume that
identities~(\ref{eq:lca-j-trivial-1})-(\ref{eq:new-simpler}) hold.  We
 show that identities~(\ref{eq:lca-absorb})
and~(\ref{eq:lca-forest-absorb}) are satisfied.

We first show the following lemma:
\begin{lemma}\label{lemma-hard}
  If $u$ is a tree-context-type, $v,w,w'$ are (not necessarily tree) context-types with 
  $w' \ccapiece w$, and $g,h$ are either tree-types or empty, then the following
  identity holds
\begin{equation}
 \label{eq:hard-won-lemma}
   (u(\hole + vwh))^\omega g = (u(\hole + vwh))^\omega u(\hole + vw'h) g
\end{equation}
\end{lemma}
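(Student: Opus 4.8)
The plan is to prove Lemma~\ref{lemma-hard} by induction on the number of nodes that must be removed from a context realizing $w$ to reach one realizing $w'$, in the spirit of the proof of Proposition~\ref{prop:other-eq}. More precisely, we unravel $w' \ccapiece w$: there is a surjective morphism $\alpha:\freef\A\to(H,V)$ and contexts $r' \ccapiece r$ with $\alpha(r)$ and $\alpha(r')$ equal to (contexts representing) $w$ and $w'$; we proceed by induction on $|r|$, using the inductive characterization of $\ccapiece$ from Lemma~\ref{lemma:equivalent-def-cca}. The base case $w'=w$ is trivial. Since $r'$ is obtained from $r$ by iterating the two elementary operations (remove a leaf, remove a node with a unique child), it suffices to handle one such elementary step and then compose; this is exactly the same reduction used for Proposition~\ref{prop:remover-lca}.

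The heart of the argument is the single elementary step, and here I would split into cases according to where the removed node sits. If the removed node lies strictly inside the vertical spine of $r$ and has a unique child there, then locally $w$ factors as $w = w_1 \alpha(a) w_2$ with $w' = w_1 w_2$, and the whole expression $u(\hole + v w h)$ becomes an iterated composition to which I can apply identity~(\ref{eq:new-simpler}): take the ``$u$'' of~(\ref{eq:new-simpler}) to be the tree-context-type $u$ here, the ``$v$'' to be $vw_1\alpha(a)$-style pieces, etc. — the point being that~(\ref{eq:new-simpler}) is precisely designed to let one delete a node of label $a$ sitting below a tree-context-type $u$ while above tree-types. If instead the removed node is a leaf hanging off the spine, then $vwh$ changes only by adding a forest summand at the hole's level, i.e.\ $vwh$ versus $vw'h$ differ by a term of the form $(\hole + \text{something})$; again this is governed by~(\ref{eq:new-simpler}) (or, when the relevant pieces degenerate into horizontal concatenations $f_1+\hole+f_2$, by combining~(\ref{eq:lca-j-trivial-1}) and~(\ref{eq:lca-j-trivial-2}) after idempotency). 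In each case one first inserts $(u(\hole+vwh))^\omega$ as an idempotent, rewrites it using~(\ref{eq:lca-j-trivial-1})/(\ref{eq:lca-j-trivial-2}) into a shape where the spine is a product of tree-context-types, and then applies~(\ref{eq:new-simpler}).

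The main obstacle I anticipate is bookkeeping: keeping track of which subcontexts are tree-context-types and which are merely context-types (since $w, w'$ are allowed to be non-tree), so that each application of~(\ref{eq:new-simpler}) is legitimate — that identity requires its ``$u$'' to be a tree-context-type or empty and its side parameters $g,h$ to be tree-types or empty. The hypothesis that $u$ is a tree-context-type and that $g,h$ are tree-types or empty is exactly what makes the pieces $u(\hole + v \cdots h)$ themselves tree-context-types, so that powering and reassociating stays inside the class where the identities apply; verifying this closure carefully at every step is the delicate point. A secondary subtlety is that $r'$ may be reached from $r$ by a \emph{sequence} of removals in which intermediate contexts are used only through their $\alpha$-image, so one must phrase the induction on the abstract relation $\ccapiece$ (via Lemma~\ref{lemma:equivalent-def-cca}) rather than on a fixed pair of syntactic contexts; once that is set up, the composition of single steps is routine. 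I expect no use of Ramsey's theorem or fractals here — this is purely an equational manipulation inside $(H,V)$ using identities~(\ref{eq:lca-j-trivial-1})--(\ref{eq:new-simpler}).
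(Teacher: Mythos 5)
Your overall strategy---reduce $w'\ccapiece w$ to a chain of single-node removals, dispose of each removal by one application of identity~(\ref{eq:new-simpler}), and compose the steps by iterating inside the idempotent power (legitimate because $u$ is a tree-context-type, so arguments of the form $u(\hole+\cdot)g$ remain tree-types)---is genuinely different from the paper's proof, which proceeds by structural induction on the derivation of $w'\ccapiece w$ with three cases: a composition split $w=w_1w_2$, a deletion of a middle factor via~(\ref{eq:new-simpler}) performed only after the tail has first been replaced (by induction) by a tree-context-type so that the ``$h$'' of~(\ref{eq:new-simpler}) is a tree-type, and a rotation $v':=v(h+\hole)$ turning a side forest $\hole+w_10$ into spine material so that the recursion can descend into it, together with an exhaustiveness argument. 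Your composition step is sound, and your on-spine case works, although the needed observation is left unstated: a removable spine node has a unique child, so the part of $w$ below it is a tree-context-type or empty, which is what makes the ``$h$'' parameter of~(\ref{eq:new-simpler}) a tree-type or empty.

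The genuine gap is in your off-spine case. It is not true that removing a leaf hanging off the spine changes $vwh$ ``only by adding a forest summand at the hole's level'': the removed node can be nested arbitrarily deep inside a side subtree (e.g.\ deleting $d$ in $w=a(\hole+b(c+d))$), and you never treat the removal of an off-spine node with a unique child, nor removals inside root-level side trees of the context. For such removals the direct appeal to~(\ref{eq:new-simpler}), or to~(\ref{eq:lca-j-trivial-1})--(\ref{eq:lca-j-trivial-2}) ``after idempotency'', does not by itself produce a factorization to which the identities apply; and since your induction is on the number of removed nodes, a single step comes with no induction hypothesis that would let you dig into the side subtree the way the paper's rotation case does. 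The repair that stays within your plan is the following missing idea: if $z$ is cca-removable then it has at most one child, so the forest realizing $vwh$ factors at $z$ as $\tilde v\,\alpha(a_z)\,\tilde h$, where $\tilde h$ (everything strictly below $z$) is a tree-type or empty and $\tilde v\tilde h=vw'h$; then~(\ref{eq:new-simpler}) applies verbatim to every single step, wherever $z$ sits. Without that observation (or the paper's structural induction) your case analysis does not cover the nested removals, so the proof as written is incomplete.
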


Note that the identity~(\ref{eq:lca-forest-absorb}) is a direct
consequence of the above, by taking $u,v$ to be the empty context, and
$g,h$ to be the empty tree. We will also use the above lemma to
show~(\ref{eq:lca-absorb}), but this will require some more work.

\medskip

\begin{proof}
 The proof is by induction on the number of steps used to derive $w'
 \ccapiece w$.
\begin{iteMize}{$\bullet$}
\item Consider first the case when $w,w'$ can be decomposed as
     \begin{eqnarray*}
       w=w_1w_2 \qquad w'=w'_1w'_2 \qquad \qquad w'_1 \ccapiece w_1, w'_2 \ccapiece w_2
     \end{eqnarray*}
     Two applications of the induction assumption give us for all tree-type or
     empty $g$:
\begin{align}
(u(\hole + vw_1w_2h))^\omega g &= (u(\hole +
vw_1w_2h))^\omega u(\hole + vw_1w'_2h) g\label{eq-1}\\
(u(\hole + vw_1w'_2h))^\omega g &= (u(\hole + vw_1w'_2h))^\omega
u(\hole + vw'_1w_2'h) g\label{eq-2}
\end{align}
As $u$ is a tree-context-type we can iterate on~\eqref{eq-1} and then
apply~\eqref{eq-2} in order to derive:
\begin{equation}
(u(\hole + vw_1w_2h))^\omega g = (u(\hole + vw_1w_2h))^\omega (u(\hole +
vw_1w'_2h))^\omega u(\hole + vw'_1w_2'h) g  
\end{equation}
As $u$ is a tree-context-type, we can apply again~\eqref{eq-1} in the reverse
direction in order to derive the desired result.
\item Consider now the case when $w,w'$ can be decomposed as
     \begin{eqnarray*}
       w=w_1w_2w_3 \qquad w'=w'_1w'_3 \qquad w'_1 \ccapiece w_1, w'_3
       \ccapiece w_3
     \end{eqnarray*}
     with $w'_3$ a tree-context-type or empty. We first use the induction
     assumption to get

\begin{equation}\label{eq-3}
(u(\hole + vw_1 w_2 w_3 h))^\omega g = (u(\hole + vw_1 w_2w_3 h))^\omega
 u(\hole + vw_1w_2w'_3h) g
\end{equation}

By applying the identity~(\ref{eq:new-simpler}), we get for all tree-type or
empty $g$: 

\begin{equation}\label{eq-4}
(u(\hole + vw_1 w_2 w'_3 h))^\omega g = (u(\hole + vw_1 w_2w'_3h))^\omega
u(\hole + vw_1w'_3h) g
\end{equation}

Note that it is important here that $w'_3h$ is either a tree-context-type or
empty. Finally, we apply once again the induction assumption to get
     \begin{eqnarray}\label{eq-5}
       (u(\hole + vw_1 w'_3 h))^\omega g= (u(\hole + vw_1w'_3))^\omega u(\hole
       + vw'_1w'_3h) g
     \end{eqnarray}
As $u$ is a tree-context type, we can first iterate on~\eqref{eq-3}, then iterate on~\eqref{eq-4} and finally
applying~\eqref{eq-5} in order to get:
\begin{equation*}
  (u(\hole + vw_1 w_2 w_3 h))^\omega g = (u(\hole + vw_1 w_2w_3 h))^\omega
  (u(\hole + vw_1w_2w'_3h))^\omega (u(\hole + vw_1w'_3h))^\omega u(\hole
  + vw'_1w'_3h) g  
\end{equation*}
Because $u$ is a tree-context-type we can now apply~\eqref{eq-3}
and~\eqref{eq-4} in reverse to eliminate the inner products and obtain the
desired result.
\item Finally, consider the case when $w,w'$ can be decomposed as
     \begin{eqnarray*}
       w=\hole+ w_10 \qquad w'=\hole + w'_10 \qquad w'_1 \ccapiece w_1
     \end{eqnarray*}
     In this case, the identity becomes:
\begin{equation*}
(u(\hole + v'w_10))^\omega g= (u(\hole + v'w_10))^\omega u(\hole +
 v'w'_10) g
\end{equation*}
where $v'=v(h+\hole)$. The result now follows by induction assumption with
$w_1,w'_1$ in place of $w,w'$.
 \end{iteMize}\smallskip

\noindent We now claim that all cases have been considered. Assume first that either $w'$ or $w$
consists of several trees. Then, by the definition of  $\ccapiece$, $w'$ and $w$ can
be decomposed into smaller forests and we conclude using the first bullet. We
can thus assume that both $w$ and $w'$ are trees.
If $w'$ contains a node between its root and its hole then, by definition of $\ccapiece$,  we can decompose
$w$ and $w'$ and apply the second bullet. Similarly we can transform $w$ using
the first bullet until the third bullet can be applied.
\end{proof}

We now derive the first part of identity~(\ref{eq:lca-absorb}). Let
$u$, $v$ be tree-context-types such that $v \ccapiece u$, and let $h$ be a
tree-type.  We show by induction on $v$ that $u^\omega h=u^\omega vh$. If $v=v_1v_2$ where
both $v_1$ and $v_2$ are tree-context-types then we consider $v_2$ first
and $v_1$ next:
\begin{eqnarray*}
 u^\omega h = u^\omega v_2 h = u^\omega v_1 v_2 h\ . 
\end{eqnarray*}
It is important here that $v_2 h$ is a tree-type.  

Therefore it is enough to consider the case where $v$ is of the form
$\alpha(a)(\hole+f)$ for some letter $a$ and some forest-type $f$. In the
sequel we write $a$ instead of $\alpha(a)$ in order to improve
readability. From $v\ccapiece u$ we get $u=u_1 a(\hole + g) u_2$ where
$u_1$ and $u_2$ are tree-context-types and $f \ccapiece g$.  Then we have from
identity~(\ref{eq:lca-j-trivial-1}) for any tree-type $h$:
\begin{align*}
u^\omega h &= (u_1a(\hole+g)u_2)^\omega h = u^\omega u_1 a(\hole + g) h\\
u^\omega h &= (u_1a(\hole+g)u_2)^\omega h = u^\omega u_1 h
\end{align*}
and therefore, as $a(\hole + g) h$ is a tree-type we get for any tree-type $h$:
\begin{align}\label{eq-a}
u^\omega h = u^\omega a(\hole+g) h
\end{align}
Iterating on~\eqref{eq-a} we get:
\begin{equation*}
u^\omega h = u^\omega a(\hole + g) h = u^\omega a(\hole + g)^\omega h \ .
\end{equation*}
It will therefore be enough to show 
\begin{eqnarray*}
 (a(\hole + g))^\omega h  = (a(\hole + g))^\omega a(\hole + f) h 
\end{eqnarray*}
for $f \ccapiece g$. This, however, is a consequence
of~(\ref{eq:hard-won-lemma}).

The second part of identity~(\ref{eq:lca-absorb}), $u^\omega=vu^\omega$, is
shown the same way using identity~(\ref{eq:lca-j-trivial-2}) instead of
identity~(\ref{eq:lca-j-trivial-1}) and building
on~\eqref{eq:hard-won-lemma-bis} below instead
of~\eqref{eq:hard-won-lemma}.
\begin{lemma}
  If $u$ is a tree-context-type, $v,w,w'$ are (not necessarily tree) context-types with 
  $w' \ccapiece w$, and $g,h$ are either tree-types or empty, then the following
  identity holds
\begin{eqnarray}\label{eq:hard-won-lemma-bis}
     (u(\hole + vwh))^\omega = (u(\hole +
     vw'h)(u(\hole + vwh))^\omega \ .
\end{eqnarray}
\end{lemma}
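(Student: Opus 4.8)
The plan is to prove this by mirroring, move for move, the proof of Lemma~\ref{lemma-hard}: wherever that argument absorbs a factor \emph{on the right} of an idempotent $(u(\hole+v\cdots h))^\omega$, the present argument absorbs the corresponding factor \emph{on the left}. Concretely, I would argue by induction on the number of steps used to derive $w'\ccapiece w$, following the inductive definition of $\ccapiece$ recalled before Lemma~\ref{lemma:equivalent-def-cca}, and splitting into the same three cases. The role played by identity~(\ref{eq:new-simpler}) is unchanged except that one now uses the equality between its \emph{first} and \emph{third} expressions (the ``absorb $u(\hole+vh)$ on the left'' form) rather than between the first and second.

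In the first case, $w=w_1w_2$ and $w'=w_1'w_2'$ with $w_i'\ccapiece w_i$: apply the induction hypothesis to the pair $(w_1',w_1)$ and to $(w_2',w_2)$, obtaining the left-handed analogues of~\eqref{eq-1} and~\eqref{eq-2}; then use that $u$ is a tree-context-type, so that $u$ applied to any element of $H$ yields a tree-type and in particular the factors produced by the inner induction are again of the form covered by the hypotheses, whence $(u(\hole+vwh))^\omega$ is idempotent and the extra inner product can first be iterated in and then eliminated. In the second case, $w=w_1w_2w_3$ and $w'=w_1'w_3'$ with $w_3'$ a tree-context-type or empty: peel off $w_1',w_3'$ using the induction hypothesis as in~\eqref{eq-3} and~\eqref{eq-5}, and in between invoke~(\ref{eq:new-simpler}) in its left-absorbing form as in~\eqref{eq-4}, where it is essential that $w_3'h$ be a tree-type or empty so the side conditions of~(\ref{eq:new-simpler}) are met. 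In the third case, $w=\hole+w_10$ and $w'=\hole+w_1'0$ with $w_1'\ccapiece w_1$: rewrite $vwh$ and $vw'h$ by absorbing the tree part into a new context $v'=v(h+\hole)$, exactly as in the last bullet of Lemma~\ref{lemma-hard}, and apply the induction hypothesis to $w_1',w_1$. Finally I would verify that these three cases are exhaustive, by the same reasoning used to close Lemma~\ref{lemma-hard}: if $w$ or $w'$ is a proper forest the first case applies after decomposing it; otherwise both are trees, and then either there is an internal node strictly between root and hole, reducing (after a decomposition) to the second case, or, after finitely many applications of the first case to trim $w$, the third case applies. The identity~(\ref{eq:lca-forest-absorb}) is again recovered as the special instance $u=v=\hole$, $g=h=0$, but within this proof it is not needed.

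I expect the only real obstacle to be bookkeeping, just as in the original: one must check at every iteration step that, with the extra factor now on the \emph{left}, the cancellation licensed by idempotency of $(u(\hole+v\cdots h))^\omega$ is still valid, and one must confirm that each appeal to~(\ref{eq:new-simpler}) genuinely satisfies its hypotheses — $u$ a tree-context-type or empty, and the relevant elements of $H$ tree-types or empty — which is precisely why the statement insists that $u$ be a tree-context-type and that $g,h$ be tree-types or empty. No genuinely new idea, and in particular no use of Ramsey's theorem or of the relation $\piece$, is required; the entire argument stays inside the equational consequences of~(\ref{eq:lca-j-trivial-1})--(\ref{eq:new-simpler}), and the resulting Lemma is then fed, in place of~(\ref{eq:hard-won-lemma}) and alongside identity~(\ref{eq:lca-j-trivial-2}), into the derivation of the second equality $u^\omega=vu^\omega$ of~(\ref{eq:lca-absorb}).
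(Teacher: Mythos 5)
Your proposal is correct and takes exactly the paper's approach: the paper's own proof is the one-line remark that the argument is identical to that of Lemma~\ref{lemma-hard}, applying the other (left-absorbing) side of identity~(\ref{eq:new-simpler}), and your case-by-case mirroring of the induction is precisely the intended expansion of that remark.
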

\begin{proof}
Identical to the proof of Lemma~\ref{lemma-hard}, applying the other side of identity~(\ref{eq:new-simpler}).  
\end{proof}

\section{Variations}

In this section we show that the techniques we developed in the previous
sections are fairly robust and can be adapted to many situations. We describe
some of them.

\subsection{Languages definable in $\Sigma_1$.}\label{sec-sigma1}
Here we treat the relatively simple case of languages defined by $\Sigma_1$
sentences (rather than boolean combinations of such formulas).  We will prove:

\begin{thm}\label{sigma_1_decidable} It is decidable whether a given regular
  forest language $L$ is definable by a $\Sigma_1(<,\orderfo)$ sentence.
\end{thm}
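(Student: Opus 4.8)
The strategy is to show that a regular forest language $L$ is definable by a $\Sigma_1(<,\orderfo)$ sentence if and only if $L$ is \emph{closed under adding nodes}, meaning that $pt \in L$ implies $pqt \in L$ for all contexts $p,q$ and forests $t$, and then to observe that this closure property is effectively checkable on the syntactic forest algebra. The left-to-right direction was already sketched in Section~\ref{sec:notation}: a $\Sigma_1(<,\orderfo)$ sentence only asserts the existence of finitely many nodes in a certain labeled configuration with respect to $<$ and $\orderfo$, and adding nodes to a forest cannot destroy such a configuration; hence any $\Sigma_1(<,\orderfo)$-definable language is closed under adding nodes. The content is in the converse.

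For the converse, suppose $L$ is regular and closed under adding nodes. I would first argue that, because $L$ is closed under adding nodes, membership of $t$ in $L$ depends only on the set of pieces of $t$ up to a bounded size: concretely, if $t \in L$ then (using the pumping argument of Fact~\ref{fact:pumping}, applied with $L$ itself as the regular language $K$) $t$ has a piece $s \in L$ of size at most some constant $k$ depending only on the syntactic algebra of $L$; and conversely, if $s \in L$ with $s \piece t$, then closure under adding nodes gives $t \in L$ directly (writing $t$ as obtained from $s$ by repeatedly inserting single nodes, i.e.\ as $p_m a_m \cdots p_1 a_1 p_0 0$ built up from $s$). Hence $L = \{\,t : \exists s \in F,\ s \piece t\,\}$ where $F$ is the finite set of forests in $L$ of size at most $k$. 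Each language $\{t : s \piece t\}$ is $\Sigma_1(<,\orderfo)$-definable, as shown in the excerpt after Proposition~\ref{boolean_sigma_1}, and a finite union of $\Sigma_1$ sentences is again equivalent to a $\Sigma_1$ sentence (just disjoin, then pull the existential quantifiers to the front). This gives the $\Sigma_1(<,\orderfo)$ sentence defining $L$.

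Finally, for decidability I would translate the closure condition into the syntactic forest algebra $(H_L,V_L)$ with syntactic morphism $\alpha$. Closure under adding nodes says precisely: for all $v,w \in V_L$ and $h \in H_L$, we have $vwh = v h$ whenever $w$ is in the image of $\alpha$ on contexts that can be inserted — but in fact every $w \in V_L$ arises as $\alpha$ of some context, and every $h$ as $\alpha$ of some forest, so the condition is equivalent to the identity $vwh = vh$ for all $v,w \in V_L$ and $h \in H_L$, i.e.\ $w h' = h'$ for all $w \in V_L$ and all $h'$ in the submonoid $V_L \cdot H_L$ — equivalently $\hole + v h \piece$-independent statements; most cleanly, it is the identity that for all $v,w\in V_L$, $h\in H_L$, $vwh = vh$. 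This is visibly checkable in time polynomial in $|H_L| \cdot |V_L|$ by enumeration. Since the syntactic forest algebra is computable from any recognizing automaton, Theorem~\ref{sigma_1_decidable} follows.

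**Main obstacle.** The only slightly delicate point is the passage from "closed under adding nodes" to "$L$ is a finite union of upward-closures of pieces": one must be sure that the bound $k$ on the sizes of the forests in $F$ depends only on the algebra (this is exactly what Fact~\ref{fact:pumping} with $n=0$ delivers) and that closure under adding nodes genuinely lets one reconstruct $t$ from an arbitrary sub-piece $s\piece t$ by single-node insertions — this is immediate from the alternative characterization of $\piece$ as the reflexive transitive closure of $(pt,pat)$, but it is worth stating explicitly. Everything else is bookkeeping.
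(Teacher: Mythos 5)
Your characterization step is fine and is essentially the paper's: $\Sigma_1(<,\orderfo)$-definability is equivalent to closure under adding nodes, and the converse direction via the pumping bound (your use of Fact~\ref{fact:pumping} with $n=0$, giving $L=\{t: s\piece t \mbox{ for some } s\in F\}$ with $F$ the forests of $L$ of size at most $k$) is the same argument the paper makes, just phrased through pieces rather than through ``successive insertion of nodes starting from a small forest in $L$''.

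The decidability step, however, contains a genuine error. You translate closure under adding nodes into the identity $vwh=vh$ for all $v,w\in V_L$, $h\in H_L$. This is not what closure under adding nodes says: the hypothesis is the one-directional implication on membership, $pt\in L\Rightarrow pqt\in L$, which in the syntactic algebra reads ``for all $v,w\in V_L$ and $h\in H_L$, if $vh\in X$ then $vwh\in X$'', where $X=\alpha_L(L)$ --- it is a condition on the accepting set, not an equation between elements. Your identity is far stronger: taking $v=\hole$ it gives $wh=h$ for all $w,h$, so by faithfulness $V_L$ is trivial, and only the languages $\emptyset$ and $H_\A$ would pass your test, while e.g.\ ``some node has label $a$'' is $\Sigma_1$-definable with a nontrivial syntactic algebra. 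Moreover no condition on the syntactic algebra alone can work, since a language and its complement share the same syntactic algebra but (as in the example just given) need not share $\Sigma_1$-definability; any correct test must involve $X$ or, as the paper does following Pin, the syntactic order ($v\leq_L^V\hole$ for all $v\in V_L$). With that corrected condition the rest of your argument (enumeration over $V_L\times V_L\times H_L$, computability of the syntactic morphism and of $X$ from an automaton) goes through and yields the theorem.
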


We will show how to do this using the syntactic forest algebra and syntactic
morphism, although this could be carried out just as well using an automaton
model.  The argument we give is based on an idea of Pin~\cite{pinordered}
concerning ordered monoids.

Let $L\subseteq H_{\A}$ be a regular forest language, and let
$\alpha_L:\A^{\Delta}\to(H_L,V_L)$ be its syntactic morphism.  We set
$X=\alpha_L(L)\subseteq H_L.$ Note that $L=\alpha_L^{-1}(X).$ For $h_1,h_2\in
H_L$ we define
$$h_1\leq_L^H h_2$$
if for all $v\in V_L,$ $vh_2\in X$ implies $vh_1\in X.$
Further, for $v_1, v_2\in V_2$ we define
$$v_1\leq_L^V v_2$$
if for all $h\in H_L,$ $v_1h\leq_L^H v_2h.$

\begin{prop}
The relations $\leq_L^H$ and $\leq_L^V$ are partial orders on $H_L$ and $V_L,$ respectively.  These orders are compatible with the algebra operations in the sense that whenever $h_1\leq_L^H h_2,$ $u_1\leq_L^V u_2,$ and $v_1\leq_L^V v_2,$ we have
$$v_1h_1\leq_L^H v_2h_2,$$
$$u_1v_1\leq_L^V u_2v_2.$$
\end{prop}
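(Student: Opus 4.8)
The plan is to verify each assertion directly from the definitions of $\leq_L^H$ and $\leq_L^V$, since all the work is bookkeeping with the faithful left action of $V_L$ on $H_L$ and the fact that $X$ is just a fixed subset of $H_L$.

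First I would check that $\leq_L^H$ is a partial order. Reflexivity and transitivity are immediate: $\leq_L^H$ is defined by a universally quantified implication over $v\in V_L$, and implication is reflexive and transitive. For antisymmetry, suppose $h_1\leq_L^H h_2$ and $h_2\leq_L^H h_1$; then for every $v\in V_L$, $vh_1\in X \iff vh_2\in X$. Taking $v=\hole$ and using $\hole\cdot h=h$, this gives $h_1\in X\iff h_2\in X$, but more is needed — I want $h_1=h_2$. Here is where I recall that $(H_L,V_L)$ is the \emph{syntactic} algebra of $L$: two forest-types are identified exactly when no context separates them relative to $L$, i.e. $h_1\sim_L h_2$ iff for all $v$, $vh_1\in X\iff vh_2\in X$. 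So the condition we derived is precisely $h_1\sim_L h_2$, which in the quotient means $h_1=h_2$. This is the one spot where syntacticity (rather than an arbitrary recognizing algebra) is essential, and I expect it to be the main conceptual point, though it is short. The order $\leq_L^V$ on $V_L$ is handled the same way: reflexivity and transitivity are formal; for antisymmetry, $v_1\leq_L^V v_2$ and $v_2\leq_L^V v_1$ give $v_1h\sim_L v_2h$ — actually $v_1h$ and $v_2h$ lie in exactly the same $X$-membership pattern under all further contexts — hence by syntacticity $v_1h=v_2h$ for all $h\in H_L$, and then \emph{faithfulness} of the action of $V_L$ on $H_L$ gives $v_1=v_2$.

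Next I would prove compatibility with the operations. For $v_1h_1\leq_L^H v_2h_2$ when $v_1\leq_L^V v_2$ and $h_1\leq_L^H h_2$: take any $w\in V_L$ with $w(v_2h_2)\in X$. Using associativity of the action, $(wv_2)h_2\in X$, so by $h_1\leq_L^H h_2$ applied to the context $wv_2$ we get $(wv_2)h_1\in X$, i.e. $w(v_2h_1)\in X$; then $v_1h_1\leq_L^H v_2h_1$ (which is exactly the definition of $v_1\leq_L^V v_2$ instantiated at $h_1$) gives $w(v_1h_1)\in X$. Hence $v_1h_1\leq_L^H v_2h_2$. For $u_1v_1\leq_L^V u_2v_2$ when $u_1\leq_L^V u_2$ and $v_1\leq_L^V v_2$: unwinding the definition, I must show $(u_1v_1)h\leq_L^H (u_2v_2)h$ for every $h\in H_L$. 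Writing $(u_iv_i)h=u_i(v_ih)$, I apply the already-proved action-compatibility statement to $u_1\leq_L^V u_2$ and $v_1h\leq_L^H v_2h$ (the latter being the definition of $v_1\leq_L^V v_2$ at $h$), yielding $u_1(v_1h)\leq_L^H u_2(v_2h)$, as desired.

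The main obstacle, such as it is, is simply being careful to invoke \emph{syntacticity} for antisymmetry and \emph{faithfulness} for the antisymmetry of $\leq_L^V$; everything else is a routine unfolding of the universally quantified definitions together with associativity of the action and of composition in $V_L$. No induction or combinatorics is involved.
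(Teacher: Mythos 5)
Your proposal is correct and follows essentially the same route as the paper's proof: reflexivity and transitivity are formal, antisymmetry of $\leq_L^H$ comes from syntacticity (the paper phrases this by lifting $h_1,h_2$ to representative forests $s_1,s_2$ and deducing $s_1\sim_L s_2$, which is exactly your argument at the quotient level), antisymmetry of $\leq_L^V$ uses that together with faithfulness of the action, and compatibility is the same two-step chaining of the hypotheses. Your version is if anything slightly more scrupulous in carrying the outer context $w$ through the compatibility arguments, and in deducing $u_1v_1\leq_L^V u_2v_2$ from the already-proved action compatibility, but these are presentational differences only.
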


\begin{proof} This is straightforward from the definitions: Transitivity and
  reflexivity of $\leq_L^H$ are obvious.  To prove antisymmetry, suppose
  $h_1\leq_L^H h_2$ and $h_2\leq_L^H h_1.$ Let $s_1, s_2\in H_{\A}$ with
  $\alpha_L(s_i)=h_i.$ Let $p\in V_{\A}$ and set $v=\alpha_L(p).$ If $ps_2\in
  L$ then $vh_2=\alpha(ps_2)\in X,$ so $\alpha(ps_1)=vh_1\in X$ and thus
  $ps_1\in L.$ Likewise $ps_1\in L$ implies $ps_2\in L,$ so $s_1\sim_L s_2$ and
  thus $h_1=h_2.$

Transitivity and reflexivity of $\leq_L^V$ are likewise trivial, and antisymmetry follows from the antisymmetry of $\leq_L^H$ and the faithfulness of the action of  $V_L$ on $H_L.$  

For the multiplicative properties, let $h_i, u_i, v_i$ be as in the statement
of the Proposition.  If $v_2h_2\in X,$ then $v_2h_1\in X$ (since $h_1\leq_L^H
h_2$) and thus $v_1h_1\in X$ (since $v_1\leq_L^V v_2$). Thus $v_1h_1\leq_L^H
v_2h_2.$ Similarly $u_2v_2h\in X$ implies $u_1v_2h\in X$ (since $u_1 \leq_L^V
u_2$) and thus $u_1v_1h\in X$ (since $v_1 \leq_L^V v_2$) so $u_1u_2\leq_L^V
u_2v_2.$

\end{proof}

\begin{thm}  Let $L\subseteq H_{\A}$ be a regular forest language.  The following are equivalent:
\begin{iteMize}{$\bullet$} 
\item $L$ is definable by a $\Sigma_1(<,\orderfo)$ formula.
\item For all contexts $p$, $q$ and forests $t,$ \begin{eqnarray*}
  pt \in L \qquad \Rightarrow \qquad pqt \in L 
  \end{eqnarray*}
\item For all $v\in V_L,$ $v\leq_L^V \hole.$
\end{iteMize}
\end{thm}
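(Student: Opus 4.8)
The plan is to prove the theorem by establishing the cycle of implications: first bullet $\Rightarrow$ second bullet $\Rightarrow$ third bullet $\Rightarrow$ first bullet. The first implication is the easiest: if $L$ is defined by a $\Sigma_1(<,\orderfo)$ sentence $\exists \bar x\, \gamma$, then a satisfying assignment in $pt$ provides, via the obvious embedding of $pt$ as a piece of $pqt$, a satisfying assignment in $pqt$, since $\Sigma_1$ formulas are preserved under the piece relation (adding nodes cannot destroy an existential witness, as $<$ and $\orderfo$ are both preserved by piece embeddings). This is essentially the observation already made in the text just before Proposition~\ref{boolean_sigma_1}.

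For the second implication, suppose $pt\in L\Rightarrow pqt\in L$ for all contexts $p,q$ and forests $t$. I want to show $v\leq_L^V\hole$ for every $v\in V_L$, i.e.\ for every $h\in V_L\cdot H_L$-context $w$ and $h\in H_L$, if $w\hole h = wh\in X$ then $wvh\in X$. Pick contexts $r,q$ over \A and a forest $t$ with $\alpha_L(r)=w$, $\alpha_L(q)=v$, $\alpha_L(t)=h$. Then $rt = r\hole t$, and applying the hypothesis with $p:=r$ gives $rt\in L\Rightarrow rqt\in L$, i.e.\ $wh\in X\Rightarrow wvh\in X$. Since $w$ and $h$ were arbitrary preimages, and every element of $V_L$ (resp.\ $H_L$) is such a preimage by surjectivity of $\alpha_L$, this gives exactly $vh\leq_L^H \hole h = h$ for all $h$, hence $v\leq_L^V\hole$.

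The third implication, $v\leq_L^V\hole$ for all $v$ implies $\Sigma_1$-definability, is where the real work lies and where I expect the main obstacle. The idea, following Pin's argument for ordered monoids, is that the hypothesis says $X=\alpha_L(L)$ is an \emph{upward-closed} subset of $H_L$ with respect to $\leq_L^H$: if $h\in X$ and $h\leq_L^H h'$ then $h'\in X$ (indeed $\hole h' = h'$ and $h = \hole h \leq_L^H \hole h' $ forces... wait — one must be careful about the direction). Concretely, from $v\leq_L^V\hole$ for all $v$ one deduces that whenever $h'$ is obtained from $h$ by "adding structure" — i.e.\ $h\piece h'$ in the forest-algebra sense, or more simply $h' = vh$ for some $v$ — we have $h\leq_L^H$ some comparison that propagates membership \emph{upward} under the piece ordering on forests: $s\piece s'$ and $s\in L$ imply $s'\in L$. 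Then I would show that an $L$ closed under adding nodes is a finite union of sets of the form $\{t : s_i\piece t\}$: take $X$ upward-closed in the (well-quasi-ordered, by Kruskal's theorem) piece ordering on forests, so its set of minimal elements is finite, say $s_1,\dots,s_m$, and $L=\bigcup_i\{t:s_i\piece t\}$; each such set is $\Sigma_1(<,\orderfo)$-definable as spelled out after Proposition~\ref{boolean_sigma_1}, and a finite union of $\Sigma_1$ sentences is again $\Sigma_1$ (existential quantifiers distribute over disjunction, or one simply takes the disjunction as it is still a boolean-combination-free positive existential sentence). The subtle point to get right is the translation between the partial order $\leq_L^V$ on the syntactic algebra and honest closure of $L$ under the piece relation on concrete forests — one must check that $v\leq_L^V\hole$ for all $v\in V_L$ is genuinely equivalent to "$pt\in L\Rightarrow pat\in L$ for all $p,a,t$" and hence to closure under $\piece$ — but that equivalence is exactly the content of the second implication run in both directions, so the loop closes cleanly. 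The use of Kruskal's tree theorem to guarantee finitely many minimal pieces is the one external ingredient, and pinning down its application (to the piece quasi-order, which is a well-quasi-order on finite forests) is the step I would be most careful about.
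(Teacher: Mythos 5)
Your proof is correct, and steps one and two (definability implies closure under insertion, and the equivalence of closure with $v\leq_L^V\hole$ for all $v$) are essentially the paper's own arguments; note only that for the cycle to close you do need the implication from the third condition back to closure under insertion stated explicitly, but this is a one-line unfolding of the definitions (given $pt\in L$, apply $vh\leq_L^H h$ with $v=\alpha_L(q)$, $h=\alpha_L(t)$ to the context type $\alpha_L(p)$), exactly as the paper does via the compatibility of the order with the operations. Where you genuinely diverge is the hard direction, from closure under adding nodes to $\Sigma_1(<,\orderfo)$-definability. The paper argues by pumping in the finite syntactic algebra: every forest with at least $K$ nodes (for $K$ computable from $|H_L|$) factors as $q_1\cdots q_{|H_L|}t$ with nonempty contexts, so it has a proper piece with the same image under $\alpha_L$; hence $L$ is the upward closure (under $\piece$) of its members of size less than $K$, and the defining sentence is the disjunction over these finitely many small forests. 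You instead observe that $L$ is upward closed under $\piece$ and invoke well-quasi-ordering to get finitely many minimal pieces $s_1,\dots,s_m$ with $L=\bigcup_i\{t:s_i\piece t\}$. This works, with one point to nail down: the classical (unordered) Kruskal theorem is not literally enough, since $\piece$ must preserve $\orderfo$; you need the ordered/planar version of the tree theorem for unranked labeled trees (provable by the usual minimal-bad-sequence plus Higman argument), whose embedding preserves \lcas, ancestors and sibling order and is therefore contained in $\piece$, so that $\piece$ inherits the WQO property. The trade-off: your route is shorter and conceptually clean but non-constructive (no bound on the size of the $s_i$, hence no explicit construction of the sentence from the algebra), whereas the paper's pumping argument is elementary, self-contained, and effective, giving an explicit size bound that fits the paper's overall decidability agenda.
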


\begin{proof} The first condition implies the second, because inserting new nodes in a forest does not change the $<$ or $\orderfo$ relation among the already existing nodes.  

To show that the second condition implies the first, we use a pumping argument:  Let $n=|H_L|.$  There exists $K>0$ such that any forest $s$ with at least $K$ nodes has a factorization
$$s=q_1q_2\cdots q_nt$$
for some forest $t,$ nonempty contexts $q_i$.  In particular, there is a factorization $s=pqt$ with $\alpha_L(t)=\alpha_L(qt).$  Thus a forest belongs to $L$ if and only if it is obtained by successive insertion of nodes starting with a forest in $L$ of size less than $K.$  We can write a $\Sigma_1$ sentence $\phi$ that describes all the relations among nodes of the forests of size less than $K$ that belong to $L,$ and thus this sentence defines $L.$

To show the equivalence of the second and third conditions, suppose the second condition holds.    We need to show $v\leq_L^V \hole$ for all $v\in V.$  This says that for every forest $s$ and every context $p,$ $s\in L$ implies $ps\in L,$ which follows from the second condition.  Conversely, suppose the third condition holds, and that $p,q$ are contexts and $t$ a forest with $pt\in L.$  Then $\alpha_L(pt)=\alpha_L(p)\hole \alpha_L(t)\in X.$  By the multiplicative properties of the partial order, $\alpha_L(p)\alpha_L(q)\alpha_L(t)\in X,$ and thus $pqt\in L.$\end{proof} 

Theorem~\ref{sigma_1_decidable} is an immediate corollary, since one can effectively compute the order $\leq_L^V$ given the syntactic algebra and syntactic morphism of $L.$

\subsection{Commutative languages} 

In this section we consider forest languages that are commutative,
\emph{i.e.,}~closed under rearranging siblings.

A forest $t'$ is called a \emph{reordering} of a forest $t$ if it is obtained
from $t$ by rearranging the order of siblings. In other words, reordering is
the least equivalence relation on forests that identifies all pairs of forests
of the form $p(s+t)$ and $p(t+s)$. A forest language is called
\emph{commutative} if it is closed under reordering. In other words, a forest
language is \emph{commutative} if and only if its syntactic forest algebra
satisfies the identity
\begin{eqnarray*}
  g+h=h+g\ .
\end{eqnarray*}

We say a forest $s$ is a \emph{commutative piece} of $t$, if $s$ is a
piece of some reordering of $t$. A forest language $L$ is called
\emph{commutative-piecewise testable} if for some $n \in \Nat$,
membership of $t$ in $L$ depends only on the set of commutative pieces of
$t$ that have no more than $n$ nodes. This definition also has a counterpart in
logic, by removing the forest-order from the signature. The following
proposition is immediate:

\begin{prop}
 A forest language is commutative-piecewise testable 
  iff it is definable by a Boolean combination of $\Sigma_1(<)$ formulas.
\end{prop}

If a language is commutative-piecewise testable, then it is clearly
commutative and piecewise testable (in the more powerful,
noncommutative, sense).
 Below we show that the converse implication is also
true:
\begin{thm}\label{thm:commutative}
  A forest language is commutative-piecewise testable if and only if
  it is commutative and piecewise testable.
\end{thm}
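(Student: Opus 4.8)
The plan is to derive, from commutativity plus the (noncommutative) piecewise-testable identities of Theorem~\ref{thm:main}, a set of identities strong enough to force commutative-piecewise testability, and then to run a completeness argument parallel to Section~\ref{sec:compl-equat} but with commutative pieces in place of ordinary pieces. First I would observe that one direction is already noted in the text, so the work is the ``only if'' direction: assume the syntactic algebra of $L$ satisfies $g+h=h+g$ and also the identity $u^\omega v = u^\omega = vu^\omega$ for $v\piece u$. The goal is to show that membership in $L$ is determined by commutative pieces of bounded size. Since $L$ is already piecewise testable, membership is determined by ordinary pieces of size $\le n_0$ for some $n_0$; the key point to establish is that the ordinary-piece information can be recovered, up to $\sim_L$, from the commutative-piece information, once we are allowed to take somewhat larger pieces.

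The technical heart should mirror Proposition~\ref{lemma:remover}: I would aim to prove that for $n$ sufficiently large, if $pat$ and $pt$ have the same commutative pieces of size $\le n$, then $\alpha(pat)=\alpha(pt)$. The fractal machinery of Section~\ref{sec:compl-equat} goes through almost verbatim, because the notions of $vah$-decomposition, fractal, tame fractal, and monochromatic tame fractal are defined in terms of restrictions $s[X]$ and the algebra, not in terms of the forest-order; the only place the forest-order enters is in Fact~\ref{fact:pumping} and in the $\sim_n$ relation, and replacing ``pieces'' by ``commutative pieces'' there is harmless since each commutative-$\sim_n$ class is still a regular language with finitely many classes (commutativity of pieces is preserved under the syntactic congruence because $L$ is commutative, and a finite automaton can compute the multiset of small commutative pieces). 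So Lemma~\ref{lemma-fractal}, Lemma~\ref{lemma:decompose}, the Ramsey step, and Lemma~\ref{lemma:mono-vertical} all carry over, yielding $vh=vah$ exactly as before; the only adjustment is bookkeeping constants. This gives the commutative analogue of Proposition~\ref{lemma:remover}, and then the commutative analogue of Lemma~\ref{lemma:eilenberg} (pumping to a common commutative piece in the same commutative-$\sim_n$ class) finishes the completeness argument just as in the proof of the completeness part of Theorem~\ref{thm:main}.

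The step I expect to be the main obstacle — or at least the one requiring genuine care rather than copying — is verifying that when we remove a single letter from $s$ without changing its commutative-$\sim_n$ class, we really can route this through a $vah$-decomposition argument: in the noncommutative case one removes a leaf or more generally a node and the surrounding context splits cleanly along the forest-order, whereas with commutativity one must be sure that ``$s$ is obtained from $s'$ by deleting one node'' still lets us write $s'=pat$, $s=pt$ with the multiset-of-siblings structure matching up. This is in fact fine — deleting one node from a forest is still an operation of the form $pat\mapsto pt$ regardless of sibling order — but I would state it carefully as a preliminary lemma, noting that it is exactly the point where we use that $L$ (hence $\sim_L$, hence the relevant $\sim_n$) is commutative, so that the choice of which copy among equivalent siblings we delete does not matter. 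With that lemma in hand, the pumping-plus-fractal skeleton of Section~\ref{sec:compl-equat} applies with only cosmetic changes, and I would present the proof mostly by reference to that section, spelling out only the commutative modifications.
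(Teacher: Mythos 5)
Your overall strategy (rerun the completeness machinery of Section~\ref{sec:compl-equat} with commutative pieces) could probably be pushed through, but the specific claim that Lemma~\ref{lemma-fractal} ``carries over verbatim, with only bookkeeping constants'' hides the one step that actually breaks. In the induction step of Lemma~\ref{lemma-fractal}, the hypothesis $pat \sim_m pt$ is used to transfer a small fractal, which is a piece of $pat$, to a piece of $pt$, and then this fractal \emph{inside $pt$ itself} is extended to length $k+1$ inside $pat$ by adding the node $a$ on top. If $\sim_m$ is replaced by ``same commutative pieces of size at most $m$'', the transfer only yields a piece of some \emph{reordering} of $pt$, and such a reordering need not be obtained from any reordering of $pat$ by deleting $a$ (e.g.\ with $p=\hole+c$, $t=d+e$, the reordering $d+c+e$ of $pt$ does not arise this way). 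So the extension step, and hence the induction, does not go through as written. It can be repaired: since the node sets of the transferred fractal are nodes of $pt$, hence of $pat$, and since commutativity of the syntactic algebra makes the types of restrictions invariant under sibling reordering (using faithfulness, $g+h=h+g$ forces reorderings of forests and of contexts to have equal images), the $vah$-decomposition conditions checked in the reordered forest also hold in $pat$ itself. But this is exactly a non-cosmetic use of commutativity of the algebra inside the fractal argument, and your proposal does not address it; instead you locate the ``genuine care'' at the single-node-deletion step, which is the easy part (there, commutativity is needed only to equate $\alpha(s)$ with the type of the chosen reordering of $s$, not to decide ``which copy'' is deleted).

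Independently of this gap, your route is much heavier than necessary, and the paper takes a different and far shorter path: it never reopens the algebraic argument at all. It proves the purely combinatorial Lemma~\ref{lemma-commutative}: for every $n$ there is $k$ such that two forests with the same commutative pieces of size at most $k$ admit reorderings with the same (ordinary) pieces of size at most $n$. The proof uses only the pumping argument of Fact~\ref{fact:pumping}/Lemma~\ref{lemma:eilenberg} (every forest has a piece of size at most $k$ with the same pieces of size $n$) together with a comparison of the two families $\set{P(s')\,:\,s' \mbox{ a reordering of } s_i}$, showing they have the same maximal elements. Given that lemma, the theorem follows in three lines from piecewise testability of $L$ (used as a black box, via Theorem~\ref{thm:main} or not) and commutativity of $L$ at the level of the language. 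The moral difference: you try to recover the algebraic identities' power for the coarser commutative equivalence, whereas the paper reduces the commutative statement to the already-established noncommutative one by reordering the two forests suitably.
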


As piecewise testability is decidable, by Corollary~\ref{cor:decidable}, and
commutativity is obviously decidable, the theorem above implies decidability:

\begin{cor}
  It is decidable if a regular forest language is commutative-piecewise testable.
\end{cor}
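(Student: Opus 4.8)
The final statement is the corollary asserting that it is decidable whether a given regular forest language is commutative-piecewise testable. The plan is to reduce this to two properties that have already been shown decidable. By Theorem~\ref{thm:commutative}, a forest language is commutative-piecewise testable if and only if it is both commutative and piecewise testable, so it suffices to check each of these two conditions algorithmically given a representation of the language.

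First I would recall that, as throughout the paper, the input language is represented by a morphism into a finite forest algebra recognizing it, and the syntactic forest algebra $(H_L,V_L)$ together with the syntactic morphism can be computed from any such representation. Then I would observe that commutativity is decidable: a forest language is commutative if and only if its syntactic forest algebra satisfies $g+h=h+g$ for all $g,h\in H_L$, and this can be checked by enumerating all pairs of elements of the finite monoid $H_L$, a computation polynomial in $|H_L|$. Next I would invoke Corollary~\ref{cor:decidable} together with Theorem~\ref{thm:main}: piecewise testability amounts to verifying identity~(\ref{eq:context-absorb}) on the syntactic algebra, where the relation $\piece$ is computable in polynomial time, so piecewise testability is decidable. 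Running both tests and returning their conjunction decides commutative-piecewise testability by Theorem~\ref{thm:commutative}.

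There is no real obstacle here: the entire content of the corollary is already packaged in Theorem~\ref{thm:commutative} (the nontrivial equivalence) and in the earlier decidability results; the corollary is simply the algorithmic reading of that equivalence. If anything, the only point worth a sentence is noting that the two tests are performed on the (effectively computable) syntactic forest algebra, so that the two characterizations—commutativity as an identity in $H_L$, and piecewise testability via Theorem~\ref{thm:main}—apply to the same finite object. This suffices to conclude.
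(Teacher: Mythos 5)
Your proposal is correct and matches the paper's own argument: the paper likewise deduces the corollary from Theorem~\ref{thm:commutative} together with the decidability of piecewise testability (via Theorem~\ref{thm:main} and Corollary~\ref{cor:decidable}) and the evident decidability of commutativity as the identity $g+h=h+g$ on the syntactic algebra. Nothing is missing.
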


Theorem~\ref{thm:commutative} follows quite easily from:
 \begin{lemma}\label{lemma-commutative}
   Let $n \in \Nat$. For $k$ sufficiently large, if two forests have
   the same commutative pieces of size at most $k$, then they can be
   both reordered so that the resulting forests have the same pieces of size at most
   $n$.
 \end{lemma}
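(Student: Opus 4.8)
The plan is to reduce the statement to a Ramsey-type pumping argument applied, in parallel, to two forests sharing all small commutative pieces. First I would fix $n$ and choose $k$ large compared to $n$ and to the size of the syntactic forest algebra of the ``set of commutative pieces of size $\le n$'' invariant; this invariant has finitely many values, so a forest-algebra morphism $\beta$ into a finite algebra records it, and the usual pumping bound (as in Fact~\ref{fact:pumping}, applied with $n$ replaced by the relevant parameter) gives a constant past which repetitions in $\beta$ must occur along any chain or any sequence of siblings. Given two forests $s,s'$ with the same commutative pieces of size $\le k$, the goal is to produce reorderings $\hat s,\hat s'$ with the same (ordinary) pieces of size $\le n$.

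The key idea is that a commutative piece records exactly the ``unordered'' piece information: which labeled, ancestor-structured trees embed into $t$ up to rearranging siblings. So having the same commutative pieces of size $\le k$ means $s$ and $s'$ have, roughly, the same multiset of ``horizontal blocks'' up to the threshold where multiplicities saturate. I would make this precise by induction on the depth of the forests (or on the structure of contexts): at the top level, $s=s_1+\cdots+s_m$ and $s'=s'_1+\cdots+s'_{m'}$, and matching commutative pieces forces a correspondence between the trees $s_i$ and $s'_j$ up to reordering and up to the pumping threshold — trees that occur ``often enough'' on one side occur often enough on the other, and for each such correspondence the subtrees again share small commutative pieces, so we recurse. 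Then I would choose the reorderings $\hat s,\hat s'$ so that matched blocks appear in the same left-to-right order; since the ordinary piece relation only sees the forest-order among surviving nodes, and we have aligned the orders, a piece of size $\le n$ of $\hat s$ can be transported to $\hat s'$ and vice versa. The finite-algebra pumping is what lets the finitely-bounded $n$-piece information survive the reordering: blocks that differ only in high multiplicity are interchangeable for pieces of size $\le n$.

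Concretely the steps are: (1) show the ``commutative piece of size $\le k$'' data determines, for each isomorphism-type-up-to-reordering of a tree of size $\le k'$ (some $k'\le k$), its multiplicity in $s$ capped at a pumping constant, and similarly recursively inside each tree; (2) use this to build a bijection between the ``frequent'' blocks of $s$ and those of $s'$, recursively refined inside matched blocks, discarding or duplicating only ``rare'' blocks in a way that does not affect $n$-pieces; (3) linearly order both sides consistently with the bijection to get $\hat s,\hat s'$; (4) verify $\hat s\sim_n\hat s'$ by checking that every injective, forest-order- and ancestor-preserving embedding of a size-$\le n$ forest into $\hat s$ has a counterpart in $\hat s'$, which now follows because the ordered structures agree on the frequent part and an $n$-piece touches at most $n$ blocks.

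\textbf{Main obstacle.} The hard part will be making the recursive multiplicity bookkeeping uniform in depth: a naive induction would let the pumping constant blow up with depth, but the depth of $s,s'$ is unbounded. I expect one must instead apply the finite-algebra pumping argument of Fact~\ref{fact:pumping} \emph{globally} to the product of the commutative-$\sim_n$ quotient algebra with the ordinary-$\sim_n$ quotient algebra, extracting from $s$ a bounded-size representative $\tilde s$ with the same commutative $\sim_n$-class, doing the same for $s'$, observing $\tilde s$ and $\tilde s'$ have the same commutative pieces of size $\le n$ and are \emph{both bounded}, and only then doing the elementary reordering argument on objects of bounded size — where there is no recursion-depth issue at all. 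Choosing $k$ to dominate this bound and $n$ is then the remaining routine step.
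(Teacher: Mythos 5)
Your fallback reduction -- extract bounded representatives $\tilde s,\tilde s'$ preserving the commutative and ordinary $\sim_n$-classes, then reorder those -- does not prove the statement, because the lemma asks for reorderings of the \emph{original} forests $s,s'$. The guarantee $P(\tilde s)=P(s)$ (writing $P(\cdot)$ for the set of pieces of size at most $n$) holds only for the given orders: once you reorder $\tilde s$, there is no way to lift this to a reordering of $s$ with the same $n$-pieces, since the nodes of $s$ outside $\tilde s$ now contribute pieces you no longer control, and there is not even a canonical ``parallel'' reordering of $s$. Moreover, the ``elementary reordering argument on objects of bounded size'' is itself the unproved core: two forests of size at most $C$ (with $C\gg n$) that agree on commutative pieces of size at most $n$ is exactly the situation of the lemma with $k=n$, and boundedness only trivializes it if they agree on commutative pieces up to their full size $C$ (then each is a commutative piece of the other and they are reorderings of one another) -- but your extraction preserves only the $\sim_n$-data, not agreement at level $C$. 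As a result the hypothesis at level $k>n$ is never actually used to close the argument, and the recursive block-matching plan that you yourself flag as depth-problematic is not repaired by the reduction.

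For contrast, the paper's proof avoids all multiplicity bookkeeping. With $k$ the constant such that every forest has a piece of size at most $k$ with the same $n$-pieces (this use of Fact~\ref{fact:pumping} is the one part your plan shares), it considers the families $\mathcal P_i=\set{P(\hat s_i):\hat s_i\mbox{ a reordering of }s_i}$ and shows only a \emph{containment}: given a reordering $\hat s_1$ of $s_1$, take its small piece $t$ with $|t|\le k$ and $P(t)=P(\hat s_1)$; then $t$ is a commutative piece of $s_1$ of size at most $k$, hence by hypothesis a commutative piece of $s_2$, hence a piece of some reordering $\hat s_2$, giving $P(\hat s_1)=P(t)\subseteq P(\hat s_2)$, and symmetrically. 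Mutual domination forces the two families to share their maximal elements, which is precisely the desired common $n$-piece set. This containment-plus-maximality mechanism is what is missing from your proposal: the commutative hypothesis at level $k$ can only transfer a small piece, hence only an inclusion of $n$-piece sets, and it is the maximality trick -- not a direct construction of matching reorderings -- that upgrades inclusion to equality.
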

 To see this, assume $L$ is a commutative and piecewise testable forest
 language. We need to show that there is a $k$ such that if $t$ and $s$ have
 the same commutative pieces of size $k$ then $t\in L$ iff $s\in L$. As $L$ is
 piecewise testable there exists an $n$ such that whenever $s$ and $t$ have the
 same pieces of size no more than $n$ then $t\in L$ iff $s\in L$. Let $k$ be the number
 given by Lemma~\ref{lemma-commutative} for that $n$. Assume now that $s$ and
 $t$ have the same commutative pieces of size $k$. By
 Lemma~\ref{lemma-commutative} they can be reordered into respectively $s'$ and
 $t'$ such that $s'$ and $t'$ have the same pieces of size $n$. Hence $s'\in L$
 iff $t' \in L$. But as $L$ is commutative this yields $s\in L$ iff $t\in L$ as
 desired.

 \begin{proof}[Proof of Lemma~\ref{lemma-commutative}]
   Let $P(s)$ be the set of pieces of $s$ that have size at most $n$.
   As in Lemma~\ref{lemma:eilenberg}, there is
   some $k$ such that any forest $s$ has a piece $t \piece s$ of size
   at most $k$ with $P(s)=P(t)$.  Let now $s_1,s_2$ be two forests
   with the same commutative pieces of size $k$. For $i=1,2$, consider
   the families
   \begin{eqnarray*}
     \mathcal P_i = \set{P(s'_i) : \mbox{$s'_i$ is a reordering of
         $s_i$}}\ .
   \end{eqnarray*}
   To prove the lemma, we need to show that the families $\mathcal P_1$ and
   $\mathcal P_2$ share a common element.  To this end, we show that for any $X
   \in \mathcal P_1$, there is some $Y \in \mathcal P_2$ with $X \subseteq Y$,
   and vice versa; in particular, the families share the same maximal elements.
   Let then $X=P(s'_1) \in \mathcal P_1$. By the choice of $k$, the forest $s'_1$
   has a piece $t$ of size at most $k$ with $P(t)=X$. Therefore $t$ is a
   commutative piece of $s_1$ of size $k$. By
   assumption, the forest $t$ is also a commutative piece of $s_2$ and
   therefore a piece of some reordering
   $s'_2$ of $s_2$. Hence $X \subseteq P(s'_2) \in \mathcal P_2$.
 \end{proof}

\medskip

Similarly we can define the notion of commutative-\lcapiece and
commutative-\lcapiecewise testable forest language. Using the same arguments as
above we can prove:

\begin{prop}
 A forest language is commutative-\lcapiecewise testable 
  iff it is definable by a Boolean combination of $\Sigma_1(\sqcap)$ formulas.
\end{prop}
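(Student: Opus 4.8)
The plan is to follow the two-step template of Propositions~\ref{boolean_sigma_1} and~\ref{boolean_sigma_1-cca} (and of the commutative-piecewise case), replacing the piece relation by the commutative-\lcapiece relation and accounting for the absence of $\orderfo$ from the signature. The observation I would isolate first, and use on both sides, is that a $\Sigma_1(\sqcap)$ sentence cannot mention $\orderfo$, so its truth in a forest $t$ depends only on the structure carried by $<$, $\sqcap$ and the label predicates; since reordering siblings changes none of these, the language defined by a $\Sigma_1(\sqcap)$ sentence is automatically commutative, and one may argue throughout up to reordering.

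For the direction from commutative-\lcapiecewise testability to Boolean combinations of $\Sigma_1(\sqcap)$ sentences, I would note that a language with parameter $n$ is a finite Boolean combination of the languages $L_s$ (over the finitely many forests $s$ of size at most $n$), where $L_s$ consists of the forests that have $s$ as a commutative-\lcapiece, and then define each $L_s$ in $\Sigma_1(\sqcap)$. The sentence uses one existential variable per node of $s$ and asserts the label of each node, distinctness of the variables, and, for each pair of variables, the location of their closest common ancestor among the variables — here using that the node set of $s$ is closed under $s$'s own closest-common-ancestor operation, so all relevant closest common ancestors are among the chosen nodes. A forest $t$ satisfies this sentence exactly when it admits an injective, label- and \lca-preserving map from $s$, which by the reordering remark is exactly the assertion that $s$ is a commutative-\lcapiece of $t$.

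For the converse, since Boolean combinations of commutative-\lcapiecewise testable languages are again of this form (take the maximum of the parameters), it suffices to treat one sentence $\varphi = \exists x_1\cdots x_m\,\gamma$. Because $\gamma$ is quantifier-free over $\sqcap$ (hence over $<$) and the labels, whether $t\models\gamma(a_1,\dots,a_m)$ depends only on the isomorphism type of the finite structure induced on $\{a_1,\dots,a_m\}$ by $<$, $\sqcap$ and the labels; disambiguating $\varphi$ — guessing, for each pair $i,j$, which variable (if any) is $x_i\sqcap x_j$, and identifying variables forced to coincide — rewrites $\varphi$ as a finite disjunction, each disjunct stating that $t$ contains some fixed forest $s$ of size at most $m$ as a commutative-\lcapiece. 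Hence membership of $t$ in the language of $\varphi$ is determined by the commutative-\lcapieces of $t$ of size at most $m$.

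The step I expect to be the main obstacle is the bookkeeping that matches the quantifier-free $\sqcap$-conditions with the commutative-\lcapiece relation: a $\Sigma_1(\sqcap)$ formula constrains only the closest common ancestors that fall among its variables, whereas the commutative-\lcapiece relation preserves the entire \lca structure of a forest, so one must close the variable set under closest common ancestors when disambiguating and check that this makes the two notions coincide — exactly the argument already carried out for Proposition~\ref{boolean_sigma_1-cca}. The remaining ingredients (the finiteness/pumping argument and closure under Boolean operations) are routine, and the reordering-invariance remark harmlessly takes over the role played by $\orderfo$ in that proof.
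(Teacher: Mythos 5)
Your reduction from Boolean combinations of $\Sigma_1(\sqcap)$ sentences to testability, including the step of closing the variable set under closest common ancestors before reading off pieces, is sound and is exactly the bookkeeping the paper leaves implicit. The gap is in the other direction, at the step where you claim that for a fixed forest $s$ the language $L_s=\{t : s \mbox{ is a commutative cca-piece of } t\}$ is defined by a single $\Sigma_1(\sqcap)$ sentence asserting labels, distinctness, and the location of closest common ancestors among the chosen variables. This works when $s$ is a single tree, but fails when $s$ has two or more trees: for nodes of $s$ lying in different trees, the cca-piece relation requires their images to have \emph{no} common ancestor in $t$ at all (the relation only permits deleting nodes with at most one child, so for instance $a+b$ is not a cca-piece of $c(a+b)$), and this is a universally quantified condition on $t$ that no quantifier-free formula over the existentially chosen variables can express. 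Concretely, for $s=a+b$ your sentence is satisfied by $t=c(a+b)$, which does not contain $a+b$ as a commutative cca-piece; so the sentence defines a strictly larger language. This issue has no analogue in Proposition~\ref{boolean_sigma_1}, where an arbitrary set of nodes of $t$ induces a piece, which is why the template you are following does not transfer verbatim.

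The gap also does not look repairable by merely allowing a Boolean combination in place of a single sentence for such an $L_s$. Setting $T_1=c(a+b)$ and $T_{k+1}=c(T_k+T_k+a+b)$, for $m$ large with respect to $n$ the forest $T_m+T_m+a+b$ and the tree $T_{m+1}$ realize exactly the same induced configurations on at most $n$ nodes (labels together with the atomic facts $z=x\sqcap y$ restricted to the chosen nodes), since the root of $T_{m+1}$ can always be simulated by an internal $c$-node of $T_m$; yet the first forest has $a+b$ as a commutative cca-piece and the second does not. So no Boolean combination of $\Sigma_1(\sqcap)$ sentences appears to define $L_{a+b}$, even though $L_{a+b}$ is commutative-cca-piecewise testable by definition. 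The paper offers no details here (it asserts the proposition follows ``using the same arguments as above''), so you cannot lean on it either: your forward direction needs at minimum a restriction to connected pieces, or a genuinely different treatment of pieces consisting of several trees, and you should flag that the disconnected case is where the real difficulty (and possibly a counterexample to the statement as phrased) lies.
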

\begin{thm}\label{thm:commutative-cca}
  A forest language is commutative-\lcapiecewise testable if and only if
  it is commutative and \lcapiecewise testable.
\end{thm}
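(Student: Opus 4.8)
The plan is to mimic the treatment of Theorem~\ref{thm:commutative}, reducing the whole statement to one combinatorial lemma, namely the analogue of Lemma~\ref{lemma-commutative} for \lcapieces. The ``only if'' direction is immediate: a commutative-\lcapiecewise testable language is patently commutative, since reordering a forest does not change its set of commutative-\lcapieces, and it is \lcapiecewise testable because every \lcapiece of a forest is in particular a commutative-\lcapiece of it. So all the content is in the ``if'' direction.

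For that direction I would prove: for every $n\in\Nat$ there is a $k\in\Nat$ such that, whenever two forests have the same commutative-\lcapieces of size at most $k$, they can both be reordered so that the two resulting forests have the same \lcapieces of size at most $n$. Granting this, Theorem~\ref{thm:commutative-cca} follows exactly as Theorem~\ref{thm:commutative} follows from Lemma~\ref{lemma-commutative}: if $L$ is commutative and \lcapiecewise testable, pick $n$ so that membership in $L$ is determined by the \lcapieces of size at most $n$, let $k$ be the constant just described, and observe that two forests with the same commutative-\lcapieces of size at most $k$ can be reordered into $s'$ and $t'$ with the same \lcapieces of size at most $n$, whence $s'\in L$ iff $t'\in L$; commutativity of $L$ then upgrades this to $s\in L$ iff $t\in L$. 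Decidability of commutative-\lcapiecewise testability then follows formally, since \lcapiecewise testability is decidable by Theorem~\ref{thm:main-lca} and commutativity obviously is.

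The proof of the lemma would copy the proof of Lemma~\ref{lemma-commutative} almost verbatim, writing $\ccapiece$ for $\piece$ and ``commutative-\lcapiece'' for ``commutative piece'' throughout. The only ingredient that genuinely needs re-deriving is the pumping step, i.e.\ the analogue of Fact~\ref{fact:pumping} for \lcapieces: there is a constant $k$ so that every forest $s$ has an \lcapiece $t\ccapiece s$ of size at most $k$ with the same \lcapieces of size at most $n$. This is proved by the scheme of Fact~\ref{fact:pumping}. Every sufficiently large forest decomposes as $s=q_0q_1\cdots q_m s'$ with the $q_i$ nonempty contexts, because it contains either $m$ siblings or a chain of length $m$; feeding this decomposition to a morphism $\beta$ into a finite forest algebra that refines the equivalence ``same \lcapieces of size at most $n$'' --- for instance the product of the syntactic algebras of the finitely many regular languages $\set{t : r\ccapiece t}$ with $r$ a forest of size at most $n$ --- one finds a repeat among the partial evaluations $\beta(q_i\cdots q_m s')$ and deletes the resulting block of consecutive $q_j$'s without changing the $\beta$-type. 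The one place this differs from the ordinary-piece argument is the check that deleting such a block is a legal sequence of \lcapiece steps: when the $q_i$ have the sibling shape $t_i+\hole$ (or $\hole+t_i$) the block is erased by removing each tree $t_i$ one leaf at a time, and when the $q_i$ are interior pieces of a chain the block is erased by first removing, leaf by leaf, the subtrees hanging off the chain node inside $q_i$ and then removing that node, which at that point has a single child --- and ``remove a leaf'' and ``remove a node with exactly one child'' are precisely the two generators of $\ccapiece$. With this analogue of Fact~\ref{fact:pumping} available, the remainder is unchanged: transitivity of $\ccapiece$ on forests (it is a reflexive transitive closure) gives $P(t)\subseteq P(s')$ whenever $t\ccapiece s'$, where $P(\cdot)$ now denotes the set of \lcapieces of size at most $n$, and one then concludes, exactly as in Lemma~\ref{lemma-commutative}, that the families $\set{P(s_i') : s_i' \text{ a reordering of } s_i}$ for $i=1,2$ share a common element by comparing their maximal elements.

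I expect the only real obstacle to be the bookkeeping in that pumping step: one must be sure that the decomposition $s=q_0q_1\cdots q_m s'$ can be taken so that each block one might need to delete has one of the two harmless shapes above, so that its removal is realizable by the two permitted \lcapiece operations. Everything else --- finiteness of the relevant equivalence, existence of small representatives, the maximal-element comparison of the two families, and the deductions of the theorem and of decidability --- is routine once this is in place.
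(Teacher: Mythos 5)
Your proposal is correct and follows essentially the paper's own route: the paper proves Theorem~\ref{thm:commutative} via Lemma~\ref{lemma-commutative} and Fact~\ref{fact:pumping}, and for Theorem~\ref{thm:commutative-cca} it simply invokes ``the same arguments as above'', which is exactly your plan of transporting Lemma~\ref{lemma-commutative} to \lcapieces by re-deriving a cca-analogue of Fact~\ref{fact:pumping}. The bookkeeping point you flag is indeed the only one to nail down: in the chain case the deleted block of contexts must sit above a single remaining tree (so, for instance, take the chain one step longer and never delete the bottommost block), because removing a node whose remaining children form a several-tree forest is not a legal \lcapiece step.
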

\begin{cor}
  It is decidable if a regular forest language is commutative-\lcapiecewise testable.
\end{cor}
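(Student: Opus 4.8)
The corollary follows immediately from Theorem~\ref{thm:commutative-cca} together with decidability facts already in hand, so the plan is short. Given a regular forest language $L$, represented by a recognizing morphism into a finite forest algebra, I would first compute the syntactic forest algebra $(H_L,V_L)$ and the syntactic morphism $\alpha_L$ in polynomial time. I would then carry out two tests. The first checks commutativity: by the remark preceding Theorem~\ref{thm:commutative}, this amounts to verifying the single identity $g+h=h+g$ over all $g,h\in H_L$, a finite computation on the table of $H_L$. The second checks \lcapiecewise testability; this is decidable by Theorem~\ref{thm:main-lca}, since the relation $\ccapiece$ is computable in polynomial time via Lemma~\ref{lemma:equivalent-def-cca} (using the fixpoint algorithm of Corollary~\ref{cor:decidable}), the tree-type and tree-context-type predicates are computable by finiteness of $(H_L,V_L)$, and the identities~(\ref{eq:lca-absorb}) and~(\ref{eq:lca-forest-absorb}) can then be verified by enumeration. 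By Theorem~\ref{thm:commutative-cca}, $L$ is commutative-\lcapiecewise testable if and only if both tests succeed, so I would return the conjunction of the two answers.

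All of the genuine content lies in Theorem~\ref{thm:commutative-cca}, whose proof I would obtain by replaying the argument used for Theorem~\ref{thm:commutative} in the \lca setting. The key step is a \lca analogue of Lemma~\ref{lemma-commutative}: for every $n$ there is a $k$ such that any two forests with the same commutative-\lcapieces of size at most $k$ can each be reordered so that the results have the same \lcapieces of size at most $n$. Its proof copies that of Lemma~\ref{lemma-commutative} almost verbatim, using Fact~\ref{fact:pumping} — which never mentions the forest-order — to replace each reordering of a given forest by a small piece carrying the same set of \lcapieces of size at most $n$, observing that such a small piece is then a commutative-\lcapiece of the other forest, and concluding that the families of \lcapiece-sets ranging over all reorderings of the two forests have the same maximal elements and hence a common member. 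One then derives Theorem~\ref{thm:commutative-cca} from this lemma exactly as Theorem~\ref{thm:commutative} was derived from Lemma~\ref{lemma-commutative}, using commutativity of $L$ to pass from the reorderings back to the original forests.

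The only point needing care — and the closest thing to an obstacle — is checking that reordering siblings interacts with the \lca relation in the same harmless way it interacts with the ancestor and forest-order relations: permuting siblings changes neither the ancestor relation nor the closest common ancestor of any pair of surviving nodes, so every combinatorial ingredient of the commutative argument (the small-piece extraction, closure of the \lcapiece relation under reordering, finiteness of the relevant equivalence) transfers without modification. Beyond this bookkeeping there is no new mathematical difficulty, and the decidability statement of the corollary drops out of the combination of the two tests above.
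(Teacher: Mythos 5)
Your proposal matches the paper's route exactly: the paper derives this corollary from Theorem~\ref{thm:commutative-cca} together with decidability of commutativity and of \lcapiecewise testability (Theorem~\ref{thm:main-lca} with Lemma~\ref{lemma:equivalent-def-cca}), and it obtains Theorem~\ref{thm:commutative-cca} itself by replaying the argument of Theorem~\ref{thm:commutative} and Lemma~\ref{lemma-commutative} in the cca setting, just as you sketch. The only caveat---which the paper also leaves implicit---is that the small-piece extraction must itself produce a \lcapiece of the reordered forest (otherwise it need not be a commutative \lcapiece of the other forest), so the proof of Fact~\ref{fact:pumping} requires a mild adaptation to cca-preserving decompositions rather than literal reuse; this does not change the approach.
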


\subsection{Tree languages}
Our previous results were provided decidable characterizations for
\emph{forest} languages, and in fact the algebraic theory used here works best
when forests, rather than trees, are treated as the fundamental object.
Traditionally, though, interest has focused on trees rather than forests. Thus
we want to give a decidable characterization of the piecewise testable tree
languages or, equivalently, the sets of \emph{trees} that are definable by  
Boolean combinations of $\Sigma_1$ sentences.

For certain logics, like first-order logic over the descendant relation, or
first-order logic over successor, one can write a sentence that says ``this
forest is a tree'', and thus there is no need to treat tree and forest
languages separately.  For piecewise testability, we need to do something more,
since the set of all trees over a finite alphabet \A is not definable by a
Boolean combination of $\Sigma_1$ sentences over any of the predicates mentioned
in this paper.

We define a \emph{tree piecewise testable language} over a finite alphabet \A
to be the intersection of a piecewise testable forest language with the set of
all trees over \A. In other words this is the set of languages definable
by a Boolean combination of $\Sigma_1(<,\orderfo)$ formulas when we interpret these formulas in trees.  This is
preferable to defining a piecewise testable tree language to be a tree language
that is piecewise testable (as a forest language), since the latter definition
would only define tree languages that are either finite or contain only chains
(no branching). Moreover it would not correspond to the tree languages
definable by a Boolean combination of $\Sigma_1(<,\orderfo)$ formulas.  The
cases when the pieces are assumed to be commutative and/or take into account
\lca are defined analogously.

We will obtain our decidability result by a general method for translating
algebraic characterizations of classes of forest languages to characterizations
of the corresponding classes of tree languages. This method will apply to all
the cases we considered earlier: piecewise testable languages, \lcapiecewise testable languages, and their
commutative counterparts.

 First, suppose
$$\alpha:\A^{\Delta}\to (H,V)$$
is a surjective forest algebra morphism. Recall that we denote by
$H_{\A}$ the set of all forests of \A. Based on $\alpha$, we define an equivalence relation on
$H_{\A}$: We write $s\sim t$ if for all contexts $p$ such that $ps$ and $pt$
are both trees (this happens if $p$ is a tree-context or if $p$ is the empty
context and both $t$ and $s$ are trees) we have $\alpha(ps)=\alpha(pt)$. Notice
that if $s$ and $t$ are such that $\alpha(s)=\alpha(t)$ then $s\sim t$ and that
if $s$ and $t$ are both trees then $s \sim t$ implies $\alpha(s)=\alpha(t)$
(take $p=\hole$ in the definition of $\sim$). It
is clear that if $s\sim t$ then for any context $q,$ $qs\sim qt.$ Thus $\sim$
defines a forest algebra congruence on $\A^{\Delta}.$ Let
 $$\alpha':\A^{\Delta}\to (H',V')$$
 be the projection morphism onto the quotient by this congruence.
We call $\alpha'$ the \emph{tree reduction} of $\alpha.$ From the remark above
it follows that if $t$ and $s$ are both trees then $\alpha(s)=\alpha(t)$ iff $\alpha'(s)=\alpha'(t)$.

Let $\bf F$ be a family of forest languages over \A. We say that a set ${\mathcal
  F}$ of surjective forest algebra morphisms with domain $\A^{\Delta}$
\emph{characterizes $\bf F$} if a forest language $L$ belongs to $\bf F$ if and
only if $L$ is recognized by some morphism in ${\mathcal F}$.  We will further
assume that ${\mathcal F}$ is closed in the following sense: suppose $\alpha:\A^{\Delta}\to (H_1,V_1)$ belongs to ${\mathcal
  F},$ and $\beta:(H_1,V_1)\to (H_2,V_2)$ is a morphism onto a finite forest algebra. Then $\beta\alpha$ belongs to ${\mathcal F}.$
 
 \begin{thm}\label{thm:forests-to-trees} Let $\bf F$ and ${\mathcal F}$ be
as above, and let $L\subseteq H_{\A}$ be a set of trees.  Then there is a
forest language $K\in{\bf F}$ such that $L$ consists of all the trees
in $K$ if and only if the tree reduction of the syntactic morphism
$\alpha_L$ of $L$ belongs to ${\mathcal F}.$
 \end{thm}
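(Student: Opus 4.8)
The plan is to prove both implications by relating membership of a tree $t$ in the forest language $K$ to the type $\alpha'(t)$ assigned by the tree reduction, exploiting the fact established just before the theorem that on trees $\alpha(s)=\alpha(t)$ if and only if $\alpha'(s)=\alpha'(t)$. The key conceptual point is that the tree reduction $\alpha'$ records exactly the information about a tree that is visible to morphisms in $\mathcal{F}$ once we restrict attention to trees and to tree-contexts (plus the empty context).

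For the ``only if'' direction, suppose $L$ is the set of trees in some $K\in\mathbf{F}$. Since $\mathcal{F}$ characterizes $\mathbf{F}$, there is a morphism $\beta:\A^{\Delta}\to(H,V)$ in $\mathcal{F}$ and a set $X\subseteq H$ with $K=\beta^{-1}(X)$. I would first check that $\beta$ factors through the tree reduction of $\alpha_L$, i.e.\ that $\beta(s)=\beta(t)$ whenever $\alpha'_L(s)=\alpha'_L(t)$; equivalently, that the congruence defining $\alpha_L$ (restricted via the tree reduction) refines the kernel of $\beta$. To see this, note that if $\alpha'_L(s)=\alpha'_L(t)$ then for every tree-context $p$ (and for $p=\hole$ when $s,t$ are trees) we have $\alpha_L(ps)=\alpha_L(pt)$, hence $ps\in L\iff pt\in L$, and since $L$ is the set of trees in $K$ and $ps,pt$ are trees, also $ps\in K\iff pt\in K$, i.e.\ $\beta(ps)\in X\iff\beta(pt)\in X$. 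One then needs this to force $\beta(s)=\beta(t)$; this is the place where one must be slightly careful, because $\beta$ need not be the syntactic morphism of $K$ and $X$ need not separate all pairs. The clean way around this is to not use an arbitrary $\beta$ but to use the \emph{syntactic} morphism $\alpha_K$ of $K$: since $\mathbf{F}$ is characterized by $\mathcal{F}$ and (implicitly, in all our cases) $\mathbf{F}$ is closed under the relevant operations, $\alpha_K$ may be taken to lie in $\mathcal{F}$; alternatively, invoke the closure hypothesis on $\mathcal{F}$ to replace $\beta$ by its composition with the projection onto the syntactic algebra of $K$. Once $\beta$ is syntactic for $K$, the displayed equivalence ``$ps\in K\iff pt\in K$ for all relevant $p$'' must be upgraded to ``for \emph{all} contexts $p$'', which is exactly what is needed; here one uses that a general context $p$ with $ps$, $pt$ trees is either a tree-context or empty, so nothing is lost. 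Thus the kernel of $\alpha_K$ contains the tree-reduction congruence, so $\alpha_K=\gamma\circ\alpha'_L$ for some morphism $\gamma$; by the closure property of $\mathcal{F}$, or directly, $\alpha'_L\in\mathcal{F}$.

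For the ``if'' direction, suppose the tree reduction $\alpha'_L:\A^{\Delta}\to(H',V')$ belongs to $\mathcal{F}$. Then $\alpha'_L$ recognizes some forest language $K:=(\alpha'_L)^{-1}(Y)\in\mathbf{F}$, and I must choose $Y\subseteq H'$ so that the trees in $K$ are exactly $L$. The natural choice is $Y=\alpha'_L(L)$, the set of tree-types of members of $L$. One then verifies: if $t$ is a tree and $\alpha'_L(t)\in Y$, then $\alpha'_L(t)=\alpha'_L(t_0)$ for some $t_0\in L$, and since $t,t_0$ are both trees this gives $\alpha_L(t)=\alpha_L(t_0)$, hence $t\in L$ because $L=\alpha_L^{-1}(\alpha_L(L))$ (as $L$ consists of trees and $\alpha_L$ is the syntactic morphism). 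Conversely every $t\in L$ has $\alpha'_L(t)\in Y$ by definition. So $L$ is precisely the set of trees in $K$. The only subtlety is checking that this $Y$ really is a subset of $H'$ for which $(\alpha'_L)^{-1}(Y)$ is a well-defined forest language — but that is automatic.

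The main obstacle, I expect, is the first direction: specifically, passing from ``$ps\in K\iff pt\in K$ for tree-contexts and the empty context $p$'' to an actual equality of $\mathcal{F}$-types $\beta(s)=\beta(t)$. If one works with an arbitrary recognizing morphism $\beta$ this is simply false, so the argument must route through the syntactic morphism of $K$ (using the closure of $\mathcal{F}$) and then observe that for trees $s,t$ the phrase ``for all contexts $p$ such that $ps,pt\in K$'' coincides with ``for all tree-contexts and the empty context'' — i.e.\ that the tree-reduction congruence is precisely what the syntactic congruence of $K$ sees on trees. Getting this matching exactly right, and handling the empty-context boundary case, is the crux; everything else is bookkeeping with the definitions of $\alpha'$ and of recognition.
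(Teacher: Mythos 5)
Your ``if'' direction is correct and is essentially the paper's argument (take $K=(\alpha'_L)^{-1}(\alpha'_L(L))$ and use that on trees $\alpha'_L(s)=\alpha'_L(t)$ implies $\alpha_L(s)=\alpha_L(t)$). The ``only if'' direction, however, has a genuine gap: you set out to prove the inclusion of congruences in the wrong direction. You try to show that $\alpha'_L(s)=\alpha'_L(t)$ forces $\alpha_K(s)=\alpha_K(t)$, i.e.\ $\alpha_K=\gamma\circ\alpha'_L$. First, even if this held, it would not give $\alpha'_L\in{\mathcal F}$: the closure hypothesis only says that ${\mathcal F}$ is stable under \emph{post}-composition with onto morphisms, so from $\gamma\circ\alpha'_L\in{\mathcal F}$ nothing follows about $\alpha'_L$ (a quotient of an algebra violating the identities can perfectly well satisfy them). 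Second, the inclusion itself is false, and the ``upgrade to all contexts'' step you flag is exactly where it breaks: the syntactic congruence of $K$ quantifies over \emph{all} contexts $p$, including those for which $ps$ and $pt$ are not trees, and there the hypothesis ``$L$ is the set of trees of $K$'' gives no information whatsoever about membership in $K$. Concretely, take $\A=\{a\}$, $L$ the set of all trees, and $K$ the set of all nonempty forests (piecewise testable, with trees of $K$ equal to $L$). All trees have the same $\alpha_L$-type, so the tree reduction $\alpha'_L$ identifies \emph{all} forests; yet $\alpha_K$ separates the empty forest from the forest $a$ (via the empty context). So the tree-reduction congruence does not refine the syntactic congruence of $K$, no matter how you choose the recognizing morphism.

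The correct argument runs in the opposite direction: show $\ker\alpha_K\subseteq\ker\alpha'_L$, so that $\alpha'_L=\beta\circ\alpha_K$ for an onto $\beta$, and then the closure of ${\mathcal F}$ (applied to $\alpha_K\in{\mathcal F}$, which itself uses closure, since the syntactic morphism of $K$ is a quotient of any recognizing morphism in ${\mathcal F}$) yields $\alpha'_L\in{\mathcal F}$. To prove this inclusion, suppose $\alpha_K(s)=\alpha_K(t)$ but $\alpha'_L(s)\neq\alpha'_L(t)$: then there is a context $p$ with $ps,pt$ both trees and $\alpha_L(ps)\neq\alpha_L(pt)$, hence a context $q$ with, say, $qps\in L$ and $qpt\notin L$. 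Since $qps\in L$ is a tree and $pt$ is a tree, $qpt$ is also a tree, so $qps\in K$ while $qpt\notin K$, contradicting $\alpha_K(s)=\alpha_K(t)$. Faithfulness of the action of $V'_L$ on $H'_L$ then transfers the factorization to the context components, completing the proof. Note that this direction never needs your problematic claim about non-tree contexts, precisely because the quantification over all contexts sits on the hypothesis side ($\alpha_K(s)=\alpha_K(t)$) rather than on the side to be established.
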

 
 \begin{proof} Let $L$ be a tree language, $\alpha_L$ be its syntactic
   morphism and let $\alpha'_L:\A^{\Delta} \to (H'_L,V'_L)$ be its tree
   reduction.

   Assume first that there is a forest language $K$ such that $L$ consists of
   all the trees in $K$. Let $\alpha_K:\A^{\Delta} \to (H_K,V_K)$ be the
   syntactic morphism of $K$. By definition, $\alpha_K\in {\mathcal F}$. Fix $h\in
   H_K$ and let $t,s$ be forests such that $\alpha_K(t)=h=\alpha_K(s)$. We show
   that $\alpha'_L(s)=\alpha'_L(t)$. Suppose this is
   not the case. Then there exists a context $p$ such that $ps$ and $pt$ are
   both trees but $\alpha_L(ps)\neq\alpha_L(pt)$. By definition of $\alpha_L$
   this means that there exists a context $q$ such that $qps\in L$ but
   $qpt\not\in L$. From $qps\in L$ we know that $qps$ is a tree, hence, as $pt$
   is a tree, $qpt$ must also be a tree. By hypothesis this implies $qps\in K$
   but $qpt\not\in K$, contradicting $\alpha_K(t)=\alpha_K(s)$.
   
   Since $V_L'$ acts faithfully on $H_L',$ it follows that for any contexts $p$
   and $q,$ $\alpha_K(p)=\alpha_K(q)$ implies $\alpha'_L(p)=\alpha'_L(q).$ Thus
   $\alpha'_L=\beta\alpha_K$ for some morphism $\beta:(H_K,V_K)\to(H'_L,V'_L)$
   sending $h \in H_K$ to $\beta(h)=\alpha'_L(\alpha_K^{-1}(h))$. By hypothesis
   on ${\mathcal F}$ this implies that $\alpha'_L \in {\mathcal F}$.

   \medskip

   Conversely, suppose that $\alpha'_L$ belongs to ${\mathcal F}$. Let
   $X=\alpha'_L(L)$ and set $K=(\alpha'_L)^{-1}(X)$. From the hypothesis it
   follows that $K\in {\bf F}$. Assume that $t$ is a tree such that
   $\alpha'_L(t)\in X$. By definition of $X$, there is a tree $s\in L$ such
   that $\alpha'_L(s)=\alpha'_L(t)$. But as $\alpha'_L$ is the tree reduction
   of $\alpha_L$, we have $\alpha'_L(s)=\alpha'_L(t)$ implies
   $\alpha_L(s)=\alpha_L(t)$ and therefore $t\in L$. Hence $L$ is the set of
   trees of $K$.
 \end{proof}
 
 As a result we have:
 
\begin{cor}
  It is decidable if a regular tree language is tree (commutative)
  (cca-)piecewise testable.
\end{cor}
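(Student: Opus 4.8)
The plan is to derive the corollary from Theorem~\ref{thm:forests-to-trees} together with the effective characterizations established earlier in the paper. Fix the alphabet \A. For each of the four relevant families of forest languages --- piecewise testable, commutative-piecewise testable, \lcapiecewise testable, and commutative-\lcapiecewise testable --- I would first produce a class ${\mathcal F}$ of surjective forest algebra morphisms $\A^\Delta \to (H,V)$ onto finite forest algebras that characterizes the family $\bf F$ in the sense used in Theorem~\ref{thm:forests-to-trees}, and check the closure condition required there, namely that $\beta\alpha \in {\mathcal F}$ whenever $\alpha \in {\mathcal F}$ and $\beta$ is a morphism onto a finite forest algebra. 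For piecewise testability, let ${\mathcal F}$ consist of those $\alpha$ for which $(H,V)$ satisfies identity~(\ref{eq:context-absorb}); by Theorem~\ref{thm:main} this characterizes $\bf F$, because the syntactic algebra of any language recognized by such an $\alpha$ is a homomorphic image of $(H,V)$ and hence still satisfies the identity, while the syntactic morphism of a piecewise testable language lies in ${\mathcal F}$ by the same theorem. Closure is immediate since the identity passes to homomorphic images. The commutative-piecewise-testable case is identical, adding the requirement that $H$ be commutative and invoking Theorem~\ref{thm:commutative}.

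For the two families involving the closest common ancestor, the situation is slightly more delicate, because the characterizing identities~(\ref{eq:lca-absorb}) and~(\ref{eq:lca-forest-absorb}) of Theorem~\ref{thm:main-lca} refer to the morphism (through the relation $\ccapiece$ and the notions of tree-type and tree-context-type), not merely to the algebra. I would take ${\mathcal F}$ to be the surjective morphisms $\alpha$ satisfying~(\ref{eq:lca-absorb}) and~(\ref{eq:lca-forest-absorb}) relative to $\alpha$ (again adding commutativity of $H$ for the commutative variant, and using Theorem~\ref{thm:commutative-cca} there), and then check that these identities persist under post-composition $\alpha \mapsto \beta\alpha$ with a finite quotient $\beta$: this is because the tree-types and tree-context-types of $\beta\alpha$ are precisely the $\beta$-images of those of $\alpha$, and any instance $v' \ccapiece w'$ holding for $\beta\alpha$ is witnessed by a pair of contexts already $\ccapiece$-related under $\alpha$. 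That ${\mathcal F}$ characterizes $\bf F$ then follows from Theorem~\ref{thm:main-lca} (resp.\ Theorem~\ref{thm:commutative-cca}) exactly as before, using that the syntactic morphism of a language recognized by $\alpha \in {\mathcal F}$ factors through $\alpha$.

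The remaining ingredients are routine. Membership of a given morphism in any of these classes ${\mathcal F}$ is decidable: commutativity of a finite $H$ is trivial to test; the relations $\piece$ and $\ccapiece$ are computable in polynomial time by Corollary~\ref{cor:decidable} and Lemma~\ref{lemma:equivalent-def-cca}; whether an element is a tree-type or a tree-context-type is decidable, since it suffices to inspect the finitely many forests and contexts, up to a computable size bound, that realize each type; and with all of this in hand, identities~(\ref{eq:context-absorb}), (\ref{eq:lca-absorb}) and~(\ref{eq:lca-forest-absorb}) are verified by enumerating the elements of the finite monoids. Moreover, the tree reduction $\alpha'_L$ of the syntactic morphism of a regular language $L$ is effectively computable from $\alpha_L$: inside the finite algebra $(H_L,V_L)$, the congruence $\sim$ defining the tree reduction identifies $h_1$ with $h_2$ exactly when $vh_1 = vh_2$ for every tree-context-type $v$, and also when $v = \hole$ provided $h_1$ and $h_2$ are both tree-types --- a finite, decidable condition --- so $(H'_L,V'_L)$ and $\alpha'_L$ can be built explicitly. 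Putting everything together: given a regular tree language $L$, compute $\alpha_L$ and then $\alpha'_L$, and apply Theorem~\ref{thm:forests-to-trees} in each of the four cases; this shows that deciding whether $L$ is tree piecewise testable --- or its commutative, closest-common-ancestor, or commutative closest-common-ancestor variant --- reduces to deciding whether $\alpha'_L$ belongs to the corresponding ${\mathcal F}$, which we have just seen is decidable.

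I expect the only real obstacle to be the bookkeeping in the closest-common-ancestor cases: carefully confirming that the class ${\mathcal F}$, which is defined through morphism-dependent data, is genuinely closed under post-composition with finite quotients and genuinely characterizes the corresponding $\bf F$. The other steps --- decidability of the identities and computability of the tree reduction --- follow directly from the tools developed in the previous sections.
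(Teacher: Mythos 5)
Your overall strategy is the paper's own: take ${\mathcal F}$ to be the surjective morphisms onto finite forest algebras satisfying the relevant identities (plus commutativity of $H$ where appropriate), check the closure hypothesis of Theorem~\ref{thm:forests-to-trees}, and then reduce decidability to computing the tree reduction $\alpha'_L$ and testing membership of it in ${\mathcal F}$; your computation of the tree reduction (decide which elements of $H_L$ are tree-types, which the paper does via an explicit bound --- a tree of depth at most $|V_L|$ with at most $|H_L|$ children per node --- and then evaluate the congruence inside the finite algebra) is the same as the paper's, and your treatment of the plain and commutative cases is sound: there the closure of ${\mathcal F}$ under post-composition does follow by lifting $\piece$-witnesses, even though the paper prefers to quote Proposition~\ref{prop:other-eq} for this.

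The gap is exactly at the point you yourself flag as the ``only real obstacle'': closure of ${\mathcal F}$ in the cca cases. An instance of identity~(\ref{eq:lca-absorb}) for $\beta\alpha$ requires elements $v \ccapiece u$ that are \emph{simultaneously} $\ccapiece$-related and tree-context-types. Your two observations provide two separate families of preimages under $\beta$: lifting the contexts witnessing $v \ccapiece u$ gives $\alpha$-images that are $\ccapiece$-related but need not be tree-context-types of $\alpha$ (the witnessing contexts need not be tree-contexts, since the definition of $\ccapiece$ on $V$ places no such restriction), while lifting the tree-contexts witnessing that $v,u$ are tree-context-types gives tree-context-types of $\alpha$ that need not be $\ccapiece$-related. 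To apply identity~(\ref{eq:lca-absorb}) for $\alpha$ and push it through $\beta$ you need a \emph{single} pair of preimages with both properties, and nothing in your argument produces one. The clean repair stays inside the paper: verify closure using the equivalent identities of Proposition~\ref{prop:other-set-lca} (identities (\ref{eq:lca-j-trivial-1})--(\ref{eq:new-simpler})), which refer only to tree-types and tree-context-types and not to $\ccapiece$; since every tree-type and tree-context-type of $\beta\alpha$ is the $\beta$-image of one for $\alpha$ and $\beta$ is onto, every instance for $\beta\alpha$ lifts to an instance for $\alpha$. This is precisely the analogue of the paper's use of Proposition~\ref{prop:other-eq} in the case it spells out, and with that substitution your proof is complete.
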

\begin{proof}
  We only give the proof for the piecewise testable case. The other cases are handled similarly.

  Let $\bf F$ be the family of piecewise testable forest languages over \A, and
  let ${\mathcal F}$ be the family of morphisms from $\A^{\Delta}$ onto finite
  forest algebras that satisfy the identities of Theorem~\ref{thm:main}. Notice
  that from Proposition~\ref{prop:other-eq} it follows that if $\alpha \in
  {\mathcal F}$ then $\beta\alpha\in {\mathcal F}$ for all onto morphism $\beta$. Hence
  $\bf F$ and ${\mathcal F}$ satisfy the hypothesis of
  Theorem~\ref{thm:forests-to-trees}.

  Consequently, a regular tree language $L$ is tree piecewise testable if and
  only if the tree reduction of $\alpha_L$ belongs to ${\mathcal F}.$ It remains to
  show that we can effectively compute the image of the tree reduction given
  $\alpha_L$. Consider $h \in H_L$ and notice that all the forests in
  $\alpha_L^{-1}(h)$ agree on $\alpha'_L$. Hence the procedure amounts to
  deciding which pairs of elements of the syntactic forest algebra are
  identified under the reduction, which we can do as long as we know which
  elements are images under $\alpha_L$ of trees. It is easy to see that if an
  element of $H_L$ is the image of a tree, then it is the image of a tree of
  depth at most $|V_L|$ in which each node has at most $|H_L|$ children, so we
  can effectively decide this as well.
\end{proof}

\newcommand{\hpiece}{\Vdash}
\subsection{Horizontal order}

We could also consider other natural predicates over forests.  Recall for
instance the definition of \emph{horizontal-order} with $x
\orderh y$ expresses the fact that $x$ is a sibling of $y$ occurring strictly
before $y$ in the forest-order.

Correspondingly we say that $s$ is a horizontal-piece of $t$, denoted $s
\hpiece t$, if there is an injective mapping from nodes of $s$ to nodes of $t$
that preserve the horizontal-order and the ancestor relationship.  An
equivalent definition is that the piece relation is the reflexive transitive
closure of the relation
\begin{align*}
\set{(pt, pat) : & \mbox{ $p$ is a context, $a$ is a node, $t$ is a forest or
    empty}\\
&\mbox{ and either $t$ is empty or $a$ does not have a sibling in $pat$}}
\end{align*}
From this notion of horizontal-piece we derive the notion of
horizontal-piecewise testability as expected and the very same proofs as in
Section~\ref{section-descendant} yield:

\begin{prop}
  A forest language is horizontal-piecewise testable iff it is definable by a
  Boolean combination of $\Sigma_1(\orderh,\orderfo)$ formulas.
\end{prop}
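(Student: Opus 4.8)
The plan is to follow the proof of Proposition~\ref{boolean_sigma_1} essentially verbatim, with $\orderh$ in the role played there by $<$ and $\hpiece$ in the role of $\piece$; the only genuinely new ingredient is a way of referring to the ancestor relation inside $\Sigma_1(\orderh,\orderfo)$.

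For the direction from horizontal-piecewise testability to definability, I would first show that for every forest $s$ the set $\set{t : s\hpiece t}$ is definable by a $\Sigma_1(\orderh,\orderfo)$ sentence. Listing the nodes of $s$ as $y_1\orderfo\cdots\orderfo y_k$, take $\exists x_1\cdots x_k\,\gamma$, where $\gamma$ is the conjunction of $P_a(x_i)$ (when $y_i$ has label $a$), of $x_i\orderfo x_j$ for $i<j$, of $x_i\orderh x_j$ whenever $y_i\orderh y_j$ in $s$, and of a quantifier-free $(\orderh,\orderfo)$-formula $\delta(x_i,x_j)$ expressing ``$x_i$ is an ancestor of $x_j$'' whenever $y_i<y_j$ in $s$. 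Any model of this sentence contains an injective image of $s$ preserving labels, $\orderh$ and $<$ (hence also $\orderfo$), and conversely any such embedding yields a model, so the sentence defines $\set{t:s\hpiece t}$. Since a horizontal-piecewise testable language is, by definition, a finite boolean combination of such sets over forests $s$ of bounded size, it is definable by a boolean combination of $\Sigma_1(\orderh,\orderfo)$ sentences.

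For the converse, given a $\Sigma_1(\orderh,\orderfo)$ sentence $\exists x_1\cdots x_k\,\varphi$, I would disambiguate it: enumerate the finitely many ways of strengthening $\varphi$ so that, for every pair $(x_i,x_j)$, the complete $(\orderh,\orderfo,<)$-relationship of the pair is fixed. Each consistent strengthening describes exactly the forests admitting a fixed horizontal-piece $s$, so the original sentence is equivalent to the finite disjunction of the statements ``$s\hpiece t$'' over the consistent cases; here it is the $<$-component that pins $s$ down, since the $(\orderh,\orderfo)$-type of a pair alone does not decide whether the two nodes are ancestor-related. A boolean combination of such sentences is then, by the same bookkeeping, a boolean combination of statements ``$s\hpiece t$'', hence horizontal-piecewise testable.

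Both directions hinge on one point, which is also the only thing new relative to Simon's word argument and to Proposition~\ref{boolean_sigma_1}: the ancestor relation $x<y$ must be expressible over $(\orderh,\orderfo)$. The natural route is to observe that ``$x$ is a leaf'' is already definable from $\orderh$ and $\orderfo$ --- $x$ is a leaf iff it has no $\orderfo$-successor, or its $\orderfo$-successor has an $\orderh$-predecessor --- that ``$y$ is a child of $x$'' follows from this, and then to recover $<$ using the fact that the descendants of any node $x$ occupy a contiguous $\orderfo$-interval, whose right endpoint is the $\orderfo$-predecessor of the first node lying outside the subtree of $x$. I expect the main obstacle to be exactly here: turning this inherently inductive description into a bounded first-order formula of the right quantifier shape (even the ``leaf'' formula above is a boolean combination of $\Sigma_1$ formulas rather than a plain $\Sigma_1$ one), and, should no suitable low-complexity definition of $<$ exist, adjusting how the pieces are assembled above so that the target class --- boolean combinations of $\Sigma_1(\orderh,\orderfo)$ sentences --- is still reached. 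Everything else is the routine translation already carried out for $\Sigma_1(<,\orderfo)$.
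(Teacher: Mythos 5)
There is a genuine gap, and it is exactly the one you flag yourself: your whole plan rests on a quantifier-free (or otherwise per-occurrence) definition of the ancestor relation from $\orderh$ and $\orderfo$, and no such definition exists. The quantifier-free $(\orderh,\orderfo)$-type of a pair of nodes does not determine whether one is an ancestor of the other, so the formula $\delta(x_i,x_j)$ in your first direction cannot be written, and your ``disambiguation'' step, which fixes the complete $(\orderh,\orderfo,<)$-relationship of each pair, is not a strengthening inside the logic at all. Your fallback hope of ``adjusting how the pieces are assembled'' cannot succeed either: since $\hpiece$ preserves the ancestor relation, the language $\set{t: ab \hpiece t}$ (``some $a$ has a $b$-descendant'') is horizontal-piecewise testable with pieces of size $2$, yet it is not a boolean combination of $\Sigma_1(\orderh,\orderfo)$ sentences. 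Indeed, for every $n$ the single chain $c^n a c^n b c^n$ and the two-chain forest $c^n a c^n + c^n b c^n$ realize exactly the same quantifier-free $(\orderh,\orderfo)$-types of tuples with at most $n$ nodes (neither contains a sibling pair, and the label sequences in forest-order coincide; a padded variant with many parallel copies handles the convention in which roots of distinct trees count as siblings), but only the first forest contains an $a$ above a $b$. So the implication from horizontal-piecewise testability to definability cannot be completed along your lines with the signature $(\orderh,\orderfo)$ alone, and the converse direction as you describe it presupposes access to $<$ that the signature does not provide.

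For comparison, the paper gives no details here: it asserts that the proofs of Proposition~\ref{boolean_sigma_1} and Section~\ref{section-descendant} carry over verbatim, which is precisely what you attempted, and that transfer only makes sense when the ancestor relation is available --- either because $<$ is added to the signature (compare the paper's closing remark about $\Sigma_1(<,\orderh,\sqcap)$) or because ancestor preservation is dropped from the definition of $\hpiece$; with the paper's stated, ancestor-preserving definition of $\hpiece$, the obstacle you identified is real. Once $<$ is present your own construction is the intended argument and needs no $\delta$: write positive atoms for the $<$, $\orderh$ and $\orderfo$ relations of $s$ and negative atoms $\lnot(x_i<x_j)$, $\lnot(x_i\orderh x_j)$ for pairs not so related (the negated sibling atoms are needed, since $\hpiece$ forbids creating new sibling pairs), and in the converse direction disambiguate complete $(<,\orderh,\orderfo)$-types, noting that a type whose pairs are neither comparable nor siblings corresponds to a finite union of bounded-size horizontal-pieces (obtained by adding closest common ancestors with all possible labels) rather than to a single piece. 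As submitted, however, the crucial step of your proof is missing and cannot be supplied in the stated framework, so the proposal does not establish the proposition.
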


\begin{thm}
  A forest language is horizontal-piecewise testable if and only if its
  syntactic algebra satisfies the identity
\begin{equation}
  u^\omega v = u^\omega = v u^\omega 
\end{equation}
for all $u,v\in V_L$ such that $v\hpiece u.$
\end{thm}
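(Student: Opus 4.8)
The plan is to retrace the three stages of Section~\ref{section-descendant} almost verbatim, systematically replacing the piece relation $\piece$ by the horizontal-piece relation $\hpiece$, ``piece of size $n$'' by ``horizontal-piece of size $n$'', and the relation $\sim_n$ by its horizontal analogue. First I would extend $\hpiece$ from the free forest algebra to an arbitrary finite forest algebra $(H,V)$ exactly as $\piece$ was extended just before Lemma~\ref{lemma:equivalent-def}: $v\hpiece w$ if $\alpha(p)=v$, $\alpha(q)=w$ for some morphism $\alpha:\freef \A\to(H,V)$ and contexts $p\hpiece q$. I would then isolate an inductive presentation of this relation analogous to $R$ and re-run the proof of Lemma~\ref{lemma:equivalent-def} to see that the two agree on finite algebras, yielding, as in Corollary~\ref{cor:decidable}, that $\hpiece$ is computable in polynomial time, which is what makes the identity in the statement effectively testable.

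For the ``only if'' direction I would copy the proof of Proposition~\ref{prop:correcntess}; the one ingredient to recheck is the horizontal analogue of Fact~\ref{fact:descendant-obvious}: if $r$ is a context, $p\hpiece q$ are contexts and $t$ is a forest, then $rpt\hpiece rqt$. Granting this, fix $v\hpiece u$, choose $p\hpiece q$ with $\alpha(p)=v$, $\alpha(q)=u$, and run the same pumping argument on the forests $rq^{ik}t$ and $rq^{ik}pt$: from $\hole\hpiece p\hpiece q$ and the Fact we get the chain $rq^{ik}t\hpiece rq^{ik}pt\hpiece rq^{(i+1)k}t$, the set of horizontal-pieces of size $n$ grows along it and stabilises for large $i$, so for such $i$ the forests $rq^{ik}t$ and $rq^{ik}pt$ agree on horizontal-pieces of size $n$ and lie on the same side of $L$; using $\alpha(q^k)=\alpha(q^{2k})$ to collapse the index gives $u^\omega v=u^\omega$, and $vu^\omega=u^\omega$ is symmetric.

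For the ``if'' direction I would replay the whole completeness argument of Section~\ref{sec:compl-equat}: define $s\sim_n t$ to mean that $s,t$ have the same horizontal-pieces of size at most $n$; prove the horizontal versions of Fact~\ref{fact:pumping} and Lemma~\ref{lemma:eilenberg}; reduce, as there, to the statement that for $n$ large enough $pat\sim_n pt$ implies $\alpha(pat)=\alpha(pt)$; and prove that statement through fractals, redefining $vah$-decompositions, the restrictions $s[X]$, and (tame) fractals in terms of $\hpiece$, extracting a tame fractal by the closest-common-ancestor-closure trichotomy of Lemma~\ref{lemma:decompose} together with the monotone-subsequence argument of~\cite{erdosszekeres}, extracting a monochromatic tame fractal by Ramsey's Theorem~\cite{bollobas}, and finally applying the identity $u^\omega v=u^\omega$ (respectively $vu^\omega=u^\omega$, for the mirror-image tame fractal $s=qq_k\cdots q_1s'$) exactly as in Lemma~\ref{lemma:mono-vertical} to conclude $vah=vh$.

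The step I expect to be the real obstacle is reconciling all of this with the two features that distinguish $\hpiece$ from $\piece$: one may not delete a branching internal node, and $\orderh$ is local to siblings. This already bites at the algebraic level: the plain composition clause ``$vw\ R\ v'w'$ if $v\ R\ v'$ and $w\ R\ w'$'' is \emph{not} sound for $\hpiece$ (taking $v=v'=\hole+c$, $w=d\hole$ and $w'=b(d\hole)$, one has $w\hpiece w'$ but $vw=d\hole+c\not\hpiece b(d\hole)+c=v'w'$, since the unique image of $d$ ends up strictly below the top level and so is no longer a sibling of $c$), so the inductive presentation of $\hpiece$, and with it the analogue of Lemma~\ref{lemma:equivalent-def}, has to be built from more delicate clauses, in the spirit of, though simpler than, those used for $\ccapiece$ in Lemma~\ref{lemma:equivalent-def-cca}. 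It also bites in the completeness proof: deleting a single letter at a node $x$ of a fractal must itself be a legal horizontal-piece step, which forces $x$ to be a leaf or to have no sibling in the current forest, so $s[X]$, the notion of $vah$-decomposition, and the notion of (tame) fractal must be arranged so that this invariant always holds; concretely, in the $\hole+t_i$ type of tame fractal the deleted witness is a leaf of a dropped tree, while in the chain type the witnesses can be placed on a path of single-child nodes once one restricts to $X$. Once the definitions are arranged so that these invariants hold, every inductive step and the appeal to the identity go through word for word as in Section~\ref{section-descendant}.
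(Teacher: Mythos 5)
The decisive problem sits at the one step you chose to grant. The horizontal analogue of Fact~\ref{fact:descendant-obvious} is false: take $r=f(\hole+c)$, $p=\hole$, $q=d\hole$, $t=b$; then $p\hpiece q$, but $rpt=f(b+c)\not\hpiece f(d(b)+c)=rqt$, since in $f(b+c)$ the nodes $b,c$ are siblings while in $f(d(b)+c)$ the unique $b$ lies below $d$ and is a sibling of nothing (equivalently, deleting $d$ would delete a non-leaf node that has the sibling $c$). For the same reason the sandwich $rq^{ik}t\hpiece rq^{ik}pt\hpiece rq^{(i+1)k}t$ on which your ``only if'' argument rests fails whenever $q$ inserts material above the top level of $p$ and the hole sits next to siblings. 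And this is not a repairable local slip with the relation $\hpiece$ as you (following the ``as expected'' transplantation) define it: let $L=\set{s: f(b+c)\hpiece s}$, which is horizontal-piecewise testable by definition, and take $p=d\hole\hpiece q=b(d\hole)$, so $v=\alpha_L(p)\hpiece u=\alpha_L(q)$ in the syntactic algebra. For every $m$ the forest $f((bd)^m+c)$ contains the piece $f(b+c)$ (the chain root $b$ is a sibling of $c$) and lies in $L$, whereas $f(d(bd)^m+c)$ does not; applying $f(\hole+c)$ to $u^\omega 0$ and $vu^\omega 0$ thus separates them, so $vu^\omega\neq u^\omega$ although $v\hpiece u$. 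Hence the unrestricted identity cannot be obtained by rerunning Section~\ref{sec:corr-equat}; some restriction on which context pieces enter the identity (keeping top-level nodes at top level, in the spirit of the tree-context-type restrictions of Theorem~\ref{thm:main-lca}) is unavoidable, and your proposal never introduces one. Your own counterexample to the naive composition rule for $R$ is exactly this phenomenon; it bites in the correctness proof just as much as in the inductive presentation of $\hpiece$, and it is precisely the ingredient you ``granted'' that breaks.

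The same phenomenon also undercuts the completeness half more deeply than the adjustments you list. The proof of Fact~\ref{fact:pumping}, which you plan to reuse verbatim, shrinks a forest by deleting a block $q_{i+1}\cdots q_j$ from a decomposition $q_0q_1\cdots q_m s'$; when that decomposition runs along a chain whose nodes have siblings, the deletion creates new sibling pairs among the survivors and the result is not a horizontal-piece (for instance $g(a+b+e)\not\hpiece g(c(a+b)+e)$, although the latter is $g(\hole+e)\,(c\hole)\,(a+b)$ and the former is obtained by dropping the middle context). So the horizontal versions of Fact~\ref{fact:pumping}, Lemma~\ref{lemma:eilenberg} and Lemma~\ref{lemma-fractal} need genuinely new arguments, not just the relabelled definitions of $vah$-decompositions and tame fractals that you describe. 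In short, with the unrestricted relation $\hpiece$ the Section~\ref{section-descendant} machinery does not transplant as you assert, and the two facts you explicitly take on faith are exactly the ones that fail.
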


This implies decidability of horizontal-piecewise testability and it would be
interesting to see what would be the corresponding equivalent set of identities
that does not make use of $\hpiece$, in the spirit of
Proposition~\ref{prop:other-eq}.

A straightforward adaptation of Section~\ref{section-cca} would also give a
decidable characterization of definability by a Boolean combination of
$\Sigma_1(<,\orderh,\sqcap)$.

\section{Conclusion/discussion}

Simon's theorem on ${\mathcal J}$-trivial monoids has emerged as one of the fundamental results in the algebraic theory of automata on words.  The principal contribution of the present paper has been to show that the use of forest algebras leads to a natural generalization of this theorem to trees and forests.  In proving this generalization we have introduced a number of new techniques that we believe will prove useful in the continuing development of the algebraic theory of tree automata.

Let us briefly indicate a few directions for further research. There is a purely algebraic formulation of Simon's theorem, stating that every finite ${\mathcal J}$-trivial monoid $M$  is the quotient of a finite monoid $N$ that admits a partial order compatible with the multiplication in $N$ and in which the identity is the maximum element.  Our new results have a similar formulation:  Every finite forest algebra satisfying the identities of Section 4 is the quotient of an algebra that admits compatible partial orders on both its horizontal and vertical components.  In fact, Straubing and Th\'erien~\cite{StraTher} have proved this order property of finite ${\mathcal J}$-trivial monoids directly, yielding a quite different proof of Simon's theorem.  It would be interesting to know whether such an argument is also possible for forest algebras.

In the word case, the boolean combinations of $\Sigma_1$-definable languages form the first level of hierarchy whose union is the first-order definable languages.  Little is known about the higher levels of this hierarchy, apart from the fact that it is strict.  Indeed, the problem of effectively characterizing the languages definable by boolean combinations of $\Sigma_2$-sentences has been open for many years.  In contrast, the first-order definable languages themselves constitute one of the first classes for which an effective algebraic characterization was given:  these are exactly the languages whose syntactic monoids are aperiodic.  (McNaughton and Papert~\cite{mcnaughton}.)  The corresponding problem for trees and forests, however, remains open:  We possess non-effective algebraic characterizations for the forest languages definable by first-order sentences over the ancestor relation, and for the related subclasses CTL and CTL* (see Boja\'nczyk, {\it et. al.} ~\cite{wreath}), but the problem of finding effective tests for membership of a language in any of these classes remains one of the greatest challenges in this work.

\end{document}